\def\defas {:=}
\theoremstyle{plain}
\newtheorem{thm}{Theorem}[section]
\newtheorem{prop}[thm]{Proposition}
\newtheorem{lem}[thm]{Lemma}
\newtheorem*{cor}{Corollary}
\theoremstyle{remark}
\newtheorem{rem}[thm]{Remark}
\theoremstyle{definition}
\newtheorem{defn}[thm]{Definition}
\newcommand{\void}[1]{}
\def\N {\mathbb{N}}
\def\Z {\mathbb{Z}}
\def\R {\mathbb{R}}
\def\C {\mathbb{C}}
\def\id {\mathrm{id}}
\def\oti {\ensuremath{\otimes}}
\def\woti{\ensuremath{\otimes^\mathrm{H}}}
\def\wop{\ensuremath{\oplus^\mathrm{H}}}
\newcommand{\fuse}{\ensuremath{\widehat{\otimes}}}
\newcommand{\cont}{\ensuremath{\curlyvee}}
\newcommand{\env}[1]{\ensuremath{\mathcal{U}(\mathfrak{#1})}}
\newcommand{\ens}{\ensuremath{\mathcal{U}(\mathfrak{sl}_2)}}
\newcommand{\ena}{\ensuremath{\mathcal{U}(\widehat{\mathfrak{sl}}_2)}}
\newcommand{\uD}{\ensuremath{{}^\mathrm{H}\! D}}
\def\Hom {\mathrm{Hom}}
\def\oti {\otimes}
\def\End {\mathrm{End}}
\def\2cat {\mathcal{K}}
\def\PoincDual {\xymatrix{&\ar@{|->}[l] \ar@{|->}[r]&}}
\def\one {{1\!\!1}}
\numberwithin{equation}{section}
\begin{document}

\title{On duality and extended chiral symmetry in the $SL(2,\R)$ WZW model}

\author{Jens Fjelstad}

\date{}

\maketitle

\thispagestyle{myheadings}

 \begin{center}
Department of Physics, Nanjing University\\
 22 Hankou Road, Nanjing, 210093 China\\
 \href{mailto:jens.fjelstad@gmail.com}{\tt jens.fjelstad@gmail.com}
 \end{center}

\begin{abstract}
Two chiral aspects of the $SL(2,\R)$ WZW model in an operator formalism are investigated. First, the meaning of duality, or conjugation, of primary fields is clarified. On a class of modules obtained from the discrete series it is shown, by looking at spaces of two--point conformal blocks, that a natural definition of contragredient module provides a suitable notion of conjugation of primary fields, consistent with known two--point functions. We find strong indications that an apparent contradiction with the Clebsch--Gordan series of $SL(2,\R)$, and proposed fusion rules, is explained by nonsemisimplicity of a certain category. Second, results indicating an infinite cyclic simple current group, corresponding to spectral flow automorphisms, are presented. In particular, the subgroup corresponding to even spectral flow provides part of a hypothetical extended chiral algebra resulting in proposed modular invariant bulk spectra.
\end{abstract}

\tableofcontents

\section{Introduction}\label{sec:intr}

Since the seminal work of Maldacena and Ooguri on the $AdS_3$ string \cite{MaOg}, the $SL(2,\R)$ WZW model is more often than not identified with the WZW model on the universal cover of $SL(2,\R)$. Correlators in this model were subsequently \cite{MaOg2} defined by analytic continuation of correlators in the so called $H^+_3$ (or $SL(2,\C)/SU(2)$) WZW model. This is of course not for lack of motivation; for target space physics it is necessary to get rid of closed timelike curves, so turning to the universal cover of $SL(2,\R)$ is a natural choice. Since the Killing metric on $SL(2,\R)$ is of Lorentzian signature one expects the need of regularization in analogy with target space Wick rotation for the string in flat spacetime, thus also the use of the $H^+_3$ model is natural. In contrast, we will here mainly deal with the WZW model on the actual $SL(2,\R)$ group manifold, as reflected in the chosen set of representations, and on world sheets with conformal structures of Euclidean signature. Furthermore, the focus is on questions related to this model as a (non unitary) CFT as opposed to as a string model.
More precisely, the aim of this article is to clarify some questions concerning how structures familiar from rational CFT will have to generalize, or not, to fit the $SL(2,\R)$ WZW model.  Underlying a rational CFT is category of chiral objects (chiral category for short) with the structure of a modular tensor category.
We aim here to clarify parts of the structure of the corresponding chiral category in the $SL(2,\R)$ WZW model.\\

One aspect that will be investigated concerns the nature of duality in the chiral category. In the rational case chiral objects have duals, and dual simple objects correspond to conjugate primary fields. In the $SL(2,\R)$ model it is usually assumed that the spectrum corresponds to a subset of the discrete and principal continuous series of unitary representations of $sl(2,\R)$. From the perspective of an operator formalism, one puzzling aspect of the $SL(2,\R)$ model is that the Clebsch--Gordan series of these representations do not contain the one--dimensional  trivial representation. In contrast to a rational WZW model it thus seems that the representation theory of the horizontal subalgebra itself does not have a duality that can be reflected in a duality of representations of a corresponding affine Lie algebra. Furthermore, in the infinite cover case the fusion rules proposed in \cite{Nun2} correspond to the Clebsch--Gordan series of $sl(2,\R)$ in that the identity field is not present. On the other hand, one can easily identify conjugate primary fields from the two--point functions in \cite{MaOg2}. This begs the question of precisely what structure in the chiral category corresponds to conjugation of primary fields.

To this end we show that by restricting to weight representations contained in the unitary discrete series of representations of $sl(2,\R)$ one actually gets intertwiners from the tensor product of a simple module with its contragredient to the trivial module. Furthermore, there are no nonzero intertwiners in the other direction, implying that the tensor product is actually reducible but indecomposable. This result is presented as Proposition \ref{prop:sl2nonsemi}. In order to verify that this structure is lifted to the affine Lie algebra we then turn to an analysis of the $\ena$--modules obtained by prolonging (discrete series) weight modules and the trivial module of $\ens$, and by acting on these with spectral flow automorphisms. After deriving various results concerning the structure of these modules in section \ref{sec:proldiscr}, we give a definition of contragredient of these modules. It is then shown in Theorem \ref{thm:cont} that this class of modules is closed under the operation of taking contragredients, and that the contragredient reflects that of the horizontal submodules as well as the conjugate of primary fields. The main result of section \ref{sec:mods} is Theorem \ref{thm:2ptblocks}, showing that the space of two--point conformal blocks on the sphere is of dimension one iff the two points are labeled by a module and its contragredient, and zero--dimensional otherwise. Under the assumption that spaces of conformal blocks on the sphere are, as usual, hom spaces in the chiral category from the fusion product of the modules labeling the marked points to the vacuum module (corresponding to the identity field), this result shows that at least half of the structure of Proposition \ref{prop:sl2nonsemi} lifts to the affine case. The other half, nonexistence of morphisms in the other direction, cannot be investigated with current knowledge of the $SL(2,\R)$ model. However, considering the relation between the tensor product of horizontal submodules and the fusion product in the rational case, it seems almost inevitable that also the other direction will lift to the affine case. The picture that appears of the chiral category in the $SL(2,\R)$ WZW model is that of a nonsemisimple tensor category, where the fusion product of two simple objects is not necessarily decomposable as a direct sum of simple objects. Furthermore, the asymmetry in Proposition \ref{prop:sl2nonsemi}, when lifted to the affine case implies that the chiral category will not have a duality (i.e., it will not have a rigid structure). There may still be a structure weaker than rigidity, however, and we note that the results presented here are still consistent with a so called semi--rigid structure \cite{Miyamoto}.\\

The second aspect that will be investigated is related to extensions of the chiral symmetry. Recall from the rational case that there is a class of extensions of chiral algebras known as simple current extensions. Simple currents are primary fields that correspond to invertible objects in the chiral category. The authors of  \cite{HeHwRS} proposed modular invariant bulk spectra for the $SL(2,\R)$ WZW model on the single cover. The structure of these indicate an extended chiral algebra, partly related to even spectral flow automorphisms. Motivated by an analogy with rational WZW models it is natural to expect the chiral objects obtained by acting with spectral flow on the vacuum module to be invertible, i.e. correspond to simple currents. We show in Theorem \ref{thm:unit} that the vacuum module possesses properties required of the tensor unit of the chiral category. In Theorem \ref{thm:sfconfblocks} it is then shown that modules obtained from spectral flow on the vacuum module satisfy properties required of invertible objects inside certain hom spaces of the chiral category. We only manage to show this when all but one of the other objects in the hom space are prolongation modules of \ena, prolonged from \ens--modules, but believe that going beyond this restriction is merely a question of technical difficulty. The results of section \ref{sec:extchir} are interpreted as strong indications that the modules obtained by acting with spectral flow on the vacuum module are indeed invertible, and correspond to simple currents. As a result we can identify an integer spin simple current extension partly responsible for the bulk spectra presented in \cite{HeHwRS}.\\

The paper is organized as follows.

Section \ref{sec:structures} is devoted to a review of some relevant aspects of rational CFT, and a discussion concerning which aspects might survive in the $SL(2,\R)$ model. In particular we identify a class of \ena--modules, related to the discrete series and the trivial representation of $sl(2,\R)$, expected to be relevant. These modules form the objects of study in this paper.

Section \ref{sec:mods} contains an investigation of the properties of this class of modules.
 First some basic facts concerning the zero mode representations are reviewed, in particular aspects related to unitarity vs. unitarizability. We show that, while the (Hilbert space) tensor product of \emph{unitary} representations $ \uD^+_j$ and $ \uD^-_j$ does not contain the trivial representation, there is an intertwiner from the (vector space) tensor product of the underlying weight modules, $D^+_j$ and $D^-_j$, to the trivial representation. We therefore conjecture that the chiral category $\mathcal{C}^k$ contains $\mathcal{U}(\widehat{\mathfrak{sl}}_2)$--modules constructed, not from the \emph{unitary} discrete series but from the underlying linear (and unitarizable) discrete series of weight modules. Next, some of these $\mathcal{U}(\mathfrak{sl}_2)$--modules are prolonged to $\mathcal{U}(\widehat{\mathfrak{sl}}_2)$--modules, including also new modules obtained by spectral flow. A definition of the contragredient of these modules is proposed, and it is shown that the modules prolonged from $D^-_j$ are contragredient to those prolonged from $D^+_j$, together with analogous results for modules obtained by spectral flow (also including modules generated from the vacuum module). By determining suitable spaces of two--point conformal blocks it is shown that contragredient modules also correspond to conjugate primary fields.

In section \ref{sec:extchir} results concerning conformal blocks on $\C\mathbb{P}^1$ are established which strongly indicate that the modules obtained by acting with spectral flow on the vacuum module $D^k_0$ are invertible. The orbits of modules $D^{\pm,k}_j$ under the subgroup of the simple current group given by even elements are bricks $X^{\pm,k}_j$ occurring in previously proposed bulk spectra, and we thus identify part of the extended chiral algebra as a (integer spin) simple current extension.

The final section contains brief discussions of some of the results, and of possible future directions.

\section{Structures in rational CFT and the \texorpdfstring{$SL(2,\R)$}{SL(2,R)} WZW model}\label{sec:structures}
Since we aim to compare structures appearing in the $SL(2,\R)$ WZW model with structures familiar from rational CFT it is suitable to first review some aspects of the rational case, and then discuss more generally what can be expected in the $SL(2,\R)$ case.

\subsection{Aspects of rational CFT}\label{sec:rcft}
Let us first briefly review (some of) the relevant structures appearing in rational CFT. The symmetries of rational models are encoded in \emph{chiral algebras}, which on $\C\mathbb{P}^1$ can be described as rational conformal vertex algebras. Fix such a vertex algebra $\mathfrak{V}$. One often splits the structure (and indeed, the construction) of rational CFT into \emph{chiral CFT} and the remaining part of finding the physical correlators among \emph{chiral correlators}.\\

Essential for chiral CFT is the category $\mathrm{Rep}(\mathfrak{V})$ of modules of $\mathfrak{V}$. It is known \cite{Huang} under certain requirements (which we assume to be satisfied) to possess the structure of a modular tensor category, in particular implying that it is an abelian, $\C$--linear, finite, semisimple ribbon category. Semisimplicity means that every $\mathfrak{V}$--module is equivalent to the direct sum of finitely many simple modules. In particular, $\mathfrak{V}$--modules are completely reducible. Finite refers to the set of isomorphism classes of simple modules being finite. Denote by $\{S_i\}_{i\in \mathcal{I}}$ a set of representatives of these isomorphism classes. The property ribbon implies a number of structures. 

First, it is a tensor category; there exists a tensor product, here denoted $\fuse$, of $\mathfrak{V}$--modules, with a simple tensor unit $\one$, the vacuum module. Let $S_0$ be the chosen representative isomorphic to the tensor unit. In a rational CFT, $\fuse$ is the fusion product and $\one$ corresponds to the identity field, always assumed to be present. The tensor product induces a product on the set of isomorphism classes of modules, thus providing $K_0(\mathrm{Rep}(\mathfrak{V}))$ with the structure of a ring with a distinguished basis. The structure constants of this ring is precisely the fusion rules of the chiral CFT.

Second, it is rigid; there exist left and right dualities, coinciding on objects. A (right) duality associates to every object $X$ a (right) dual $X^\vee$ together with distinguished morphisms $b_X\in\Hom(\one,X\fuse X^\vee)$ and $d_X\in\Hom(X^\vee\fuse X,\one)$, satisfying
\[(\id_X\fuse d_X)\circ(b_X\fuse\id_X)=\id_X,\ (d_X\fuse\id_{X^\vee})\circ(\id_{X^\vee}\fuse b_X) = \id_{X^\vee}.\]
The distinguished morphisms of a left duality differs in the order of $X$ and $X^\vee$.
By $S_{\bar\iota}\defas S_i^\vee$, the duality defines an involution of $\mathcal{I}$, $\mathcal{I}\ni i\mapsto \bar{\iota}\in\mathcal{I}$, fixing the distinguished element $0$.

Third, $\mathrm{Rep}(\mathfrak{V})$ is braided; for any two $\mathfrak{V}$--modules $X$ and $Y$ there exists an isomorphism $c_{X,Y}:X\fuse Y\stackrel{\sim}{\rightarrow} Y\fuse X$ satisfying Yang--Baxter relations.

Finally, it is balanced; every $\mathfrak{V}$--module $U$ comes with a distinguished automorphism $\Theta_U$ (the twist), satisfying certain relations expressing compatibility with the tensor product, duality, and braiding. On simple $\mathfrak{V}$--modules the twist takes the form $\Theta_{S_i}=\theta_i\mathrm{id}_{S_i}$ for a root of unity $\theta_i=e^{-2\pi i\Delta_i}$, where $\Delta_i$ is the conformal weight of the primary field corresponding to $S_i$.

The prefix ``modular'' refers to the additional requirement of invertibility of a $|\mathcal{I}|\times |\mathcal{I}|$ matrix $S$ constructed with the help of the braiding and duality in $\mathrm{Rep}(\mathfrak{V})$. This modular $S$ matrix represents the modular transformation $\tau\mapsto -1/\tau$ on the $|\mathcal{I}|$--dimensional vector space spanned by characters of $\mathfrak{V}$--modules. One can also represent the transformation $\tau\mapsto \tau +1$ with an invertible matrix, leading to a representation of the modular group $SL(2,\Z)$.

With the help of the category $\mathrm{Rep}(\mathfrak{V})$ one constructs a modular functor, assigning (finite dimensional) vector spaces of conformal blocks to closed Riemann surfaces with certain additional structure, in particular marked points labeled by objects of $\mathrm{Rep}(\mathfrak{V})$. These vector spaces combine to vector bundles $\mathfrak{B}$ with projectively flat connections over the moduli spaces of complex curves, and the conformal blocks (or chiral correlators) are \emph{holomorphic sections} of these bundles.  The dual (or ``conjugate'') $\Phi_{\bar{\iota}}$ of a primary field $\Phi_i$ corresponding to $S_i$ in a rational CFT then satisfies the defining property of being the unique primary field $\Psi$ such that the conformal block on the sphere with insertions of $\Phi_i$ and $\Psi$ is nonvanishing.\\

Spaces of conformal blocks will play an important role further down, so let us recall the definition (see \cite{FB} for the vertex algebra construction, or \cite{BK} for the case of an affine Lie algebra). Restrict to the case of the vertex algebra $\mathfrak{V}_{\mathfrak{g},k}$ related to an untwisted affine Lie algebra $\widehat{\mathfrak{g}}$ at level $k\in\N$. Simple objects correspond to simple integrable highest weight modules, $L^k_\lambda$, of $\mathcal{U}(\widehat{\mathfrak{g}})$. Denote by $V^k_\lambda$ the Verma module with the same highest weight $\lambda$ as $L^k_\lambda$. 	Let $C$ be a smooth complex curve, $p_1,\ldots,p_n\in C$ distinct marked points on $C$ with choice of local coordinates $z_1,\ldots,z_n$ in neighbourhoods of the marked points, and $\lambda_1,\ldots,\lambda_n$ integral dominant weights of $\widehat{\mathfrak{g}}$. If $\mathfrak{g}$ denotes the simple Lie algebra isomorphic to the horizontal subalgebra of $\widehat{\mathfrak{g}}$, define
\[\mathfrak{g}(C-\vec{p})\defas\mathfrak{g}\oti \mathcal{O}_{C\backslash\{\vec{p}\}},\]
the Lie algebra of $\mathfrak{g}$--valued functions regular outside the marked points $p_1,\ldots,p_n$, and with at worst finite order poles at the marked points. The algebra $\mathfrak{g}(C-\vec{p})$ acts on $V^k_{\vec{\lambda}}=V^k_{\lambda_1}\oti V^k_{\lambda_2}\oti\cdots \oti V^k_{\lambda_n}$ and $L^k_{\vec{\lambda}}$ as follows. Use the loop algebra construction of $\widehat{\mathfrak{g}}$ to identify the Laurent expansion of $X\oti f$ in $z_i$ with an element $X_{f_i}\in \widehat{\mathfrak{g}}$. The action of $X\oti f\in\mathfrak{g}(C-\vec{p})$ is then defined as
\[X_{f_1}\oti \mathbf{1}+\mathbf{1}\oti X_{f_2}\oti \mathbf{1} + \ldots +\mathbf{1}\oti X_{f_n}.\]
That this indeed defines a representation of $\mathfrak{g}(C-\vec{p})$ is ensured by the residue theorem.
\begin{defn}\label{def:confblock}
	The vector space of conformal blocks, $\mathcal{H}(C,\vec{p},L^k_{\vec{\lambda}})$, associated to $C$, $\vec{p}$, and $L^k_{\vec{\lambda}}$, is defined as the space of coinvariants of the $\mathfrak{g}(C-\vec{p})$ action:
	\begin{equation}
		\mathcal{H}(C,\vec{p},L^k_{\vec{\lambda}})\defas \Big(L^k_{\lambda_1}\oti\cdots\oti L^k_{\lambda_n}\Big)^*_{\mathfrak{g}(C-\vec{p})},
	\end{equation}
	i.e. it consists of functionals $\beta$ satisfying
	\[\beta\circ\left(X_{f_1}+X_{f_2}+\ldots+X_{f_n}\right)=0\]
	for all $X\oti f\in\mathfrak{g}(C-\vec{p})$. Factors of $\mathbf{1}$ have been neglected to improve readability.
\end{defn}
Spaces of conformal blocks satisfy a number of important properties, some of which are summarized in the following
\begin{prop}\label{prop:confblockprops}~
\begin{enumerate}
	\item[(i)] For any smooth compact curve $C$, any finite nonempty set of marked points $\vec{p}$ on $C$, and any integrable $L_{\vec{\lambda}}^k$, the vector space $\mathcal{H}(C,\vec{p},L^k_{\vec{\lambda}})$ is finite dimensional.
	\item[(ii)] $\mathcal{H}(C,\vec{p},L^k_{\vec{\lambda}})\cong\mathcal{H}(C,\vec{p},V^k_{\vec{\lambda}})$
	\item[(iii)] $\mathcal{H}(C,\vec{p},L^k_{\vec{\lambda}})$ is, up to isomorphism, independent of the choice of local coordinates. This a'posteriori justifies the abuse of notation of leaving out $\vec{z}$ in the notation.
	\item[(iv)] $\mathcal{H}(C,\vec{p},L^k_{\vec{\lambda}})\cong\mathcal{H}(C,\vec{q},L^k_{\vec{\lambda}})$, where a distinguished isomorphism is given by parallel transport with respect to a projectively flat connection.
	\item[(v)] For $C=\C\mathbb{P}^1$, use the notation $\mathcal{H}(\vec{p},L^k_{\vec{\lambda}})\defas\mathcal{H}(\C\mathbb{P}^1,\vec{p},L^k_{\vec{\lambda}})$. Then there is the following linear isomorphism
	$\mathcal{H}(\vec{p},L^k_{\vec{\lambda}})\cong\Hom_{\mathrm{Rep}(\mathfrak{V}_{\mathfrak{g},k})}(L^k_{\lambda_1}\fuse L^k_{\lambda_2}\fuse\cdots\fuse L^k_{\lambda_n},\one)$.
\end{enumerate}
\end{prop}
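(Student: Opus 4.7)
The plan is to view Proposition \ref{prop:confblockprops} as a package of five standard results in the Tsuchiya--Ueno--Yamada / Beauville--Laszlo / Frenkel--Ben Zvi framework, all resting on two technical tools: \emph{propagation of vacua}, which allows one to add or remove a marked point labeled by the vacuum module $\one$ without altering the coinvariant space, and the \emph{Sugawara construction}, which promotes the $\mathfrak{g}(C-\vec{p})$-action on $L^k_{\vec{\lambda}}$ to a compatible action of a Virasoro algebra attached to each marked point. Each of (i)--(v) is then proved by combining one of these tools with a global input (Riemann--Roch applied to effective divisors supported on $\vec{p}$) or a representation-theoretic input (integrability at level $k$).

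For (i), I would filter $L^k_{\vec{\lambda}}$ by Sugawara $L_0$-degree at a chosen marked point. The Birkhoff decomposition $\mathfrak{g}((t_i)) = \mathfrak{g}(C-\vec{p}) \oplus \mathfrak{g}[[t_i]]$, whose verification uses Riemann--Roch, presents each graded piece of the coinvariants as a quotient of a finite number of highest-weight vectors by a finite set of relations; integrability, encoded in the vanishing of the affine singular vector $(f_\theta \otimes t_i^{-1})^{k-\langle\lambda_i,\theta^\vee\rangle+1} v_{\lambda_i}$, caps the relevant degrees and forces finite total dimension. For (ii), I would compare the surjection $V^k_{\vec{\lambda}} \twoheadrightarrow L^k_{\vec{\lambda}}$ and show that its kernel acts trivially on coinvariants. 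The kernel is generated at each marked point by affine singular vectors, and Riemann--Roch furnishes an element of $\mathfrak{g}(C-\vec{p})$ whose Laurent expansion at $p_i$ reproduces a given singular vector and which is regular at the other marked points; the singular vector therefore acts as an element of $\mathfrak{g}(C-\vec{p})$ on the remaining tensor factors and vanishes on the coinvariant space.

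For (iii) and (iv), I would realize the infinitesimal change of local coordinate at $p_i$ (resp.\ a motion of $p_i$) by the action of the Sugawara operators $L_n$ (resp.\ by meromorphic vector fields on $C$ acting through Sugawara). This defines the KZ--Sugawara connection on the bundle $\mathfrak{B}$; its curvature lies in the center of the Virasoro extension and is therefore projectively flat, and horizontal transport furnishes the distinguished isomorphisms. For (v), the plan is to first apply propagation of vacua to add an $(n{+}1)$-st marked point at $\infty \in \C\mathbb{P}^1$ labeled by $\one$, then invoke the Huang--Lepowsky identification of three-point blocks on the sphere with multiplicity spaces $\Hom_{\mathrm{Rep}(\mathfrak{V}_{\mathfrak{g},k})}(L^k_\mu \fuse L^k_\nu, L^k_\rho)$, and iterate sewing/factorization along a chain of spheres to assemble the $n$-point block as a composition of such hom spaces, yielding $\Hom(L^k_{\lambda_1} \fuse \cdots \fuse L^k_{\lambda_n}, \one)$.

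The main obstacle, I expect, is (i): one must simultaneously control local finiteness at each marked point, coming from the integrability bound, and global finiteness arising from $H^0(C,\mathcal{O}(N\vec{p}))$ modulo the Birkhoff complement, and organize both into a single filtration whose associated graded is manifestly finite-dimensional. Once (i) is in hand, (ii)--(iv) are comparatively clean algebraic and differential-geometric arguments, while (v) rests on the nontrivial but well-documented compatibility between the fusion tensor product on $\mathrm{Rep}(\mathfrak{V}_{\mathfrak{g},k})$ and sewing of conformal blocks, which is the core content of the Huang--Lepowsky tensor product theory.
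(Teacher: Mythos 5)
The paper gives no proof of its own for this proposition: it simply cites \cite{TUY,BK} (noting only the dual-space convention and the resulting swap of the two hom spaces in $(v)$), and your sketch is an accurate outline of precisely the arguments found in those references --- Riemann--Roch plus the integrability null vector for $(i)$ and $(ii)$, the Sugawara/KZ projectively flat connection for $(iii)$ and $(iv)$, and propagation of vacua combined with the factorization/tensor-product theory for $(v)$. So your proposal is correct and follows essentially the same (standard) route the paper relies on; the only quibble is the convention slip in the affine singular vector, which should involve $e_\theta\otimes t_i^{-1}$ (the lowering operator for the affine simple root $\delta-\theta$) rather than $f_\theta\otimes t_i^{-1}$.
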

Proofs of these properties can be found in, for instance, \cite{TUY, BK} with the minor difference that they are dealing with the duals of the vector spaces $\mathcal{H}(C,\vec{p},L^k_{\vec{\lambda}})$. Note that one consequence of this difference is that property $(v)$ is in \cite{BK} modified by replacing the right hand side with $\Hom_{\mathrm{Rep}(\mathfrak{V}_{\mathfrak{g},k})}(\one,L^k_{\lambda_1}\fuse L^k_{\lambda_2}\fuse\cdots\fuse L^k_{\lambda_n})$. Since the category $\mathrm{Rep}(\mathfrak{V}_{\mathfrak{g},k})$ is rigid and semisimple these two hom spaces are canonically dual, and due to property $(i)$ they are even isomorphic. For nonrigid or nonsemisimple categories, however, it matters greatly which of the two hom spaces appears on the right hand side of $(v)$.\\

We will be very brief concerning the second part, constructing physical correlation functions, and discuss mainly what will be of direct relevance for this article. The principle of holomorphic factorization implies that physical correlators on a worldsheet $\Sigma$ are found among chiral correlators on the complex double, $\widehat{\Sigma}$, of $\Sigma$. It has been shown \cite{FRS1,FjFRS1} that the task of finding physical correlation functions can then be reformulated as that of finding a so called symmetric special Frobenius algebra $A$ in $\mathrm{Rep}(\mathfrak{V})$.

There is in any modular category a symmetric special Frobenius algebra on the object $\one$. The corresponding full CFT is sometimes called the Cardy theory, and always gives the charge conjugation torus partition function. In addition there is a class of Frobenius algebras, the so called Schellekens algebras \cite{FRS3}, that can be canonically constructed. These algebras are responsible for (all) torus partition functions of simple current type \cite{SchYank}. Recall that a simple current in a rational CFT is a primary field $J$ with the property that there exists another primary field $J^{-1}$ (which then must be the conjugate of $J$) such that $J\fuse J^{-1}\cong\one\cong J^{-1}\fuse J$. In the chiral category $\mathcal{C}=\mathrm{Rep}(\mathfrak{V})$ a simple current is an invertible object, and the set of invertible objects gives a multiplicative subgroup of the fusion ring $K_0(\mathcal{C})$, the Picard group $\mathrm{Pic}(\mathcal{C})$. The effective center, $\mathrm{Pic}^\circ(\mathcal{C})\subset\mathrm{Pic}(\mathcal{C})$, is the subset of the Picard group of $\mathcal{C}$ consisting of elements $[J]$ whose order $N_J$ satisfies $N_J\Delta_J\in\Z$, where $\Delta_J$ is the conformal weight of the simple current $J$. To every sub\emph{group} $H\subset\mathrm{Pic}^\circ(\mathcal{C})$ one can associate at least one (sometimes more than one) Schellekens algebra such that the isomorphism classes of its simple subobjects form precisely $H$.
Mutually local integer spin simple currents have particularly nice properties. For instance, one can then identify an extension of the chiral algebra (as an object in $\mathcal{C}$) as\begin{equation}
	\bigoplus_{i\in I} S_j^{\oplus Z_{i,0}},
	\label{eq:extchiralg}
\end{equation}
where $Z_{i,j}$ are the coefficients of the torus partition function. The bulk state space of the CFT then takes the form
\begin{equation}
	\mathscr{H} = \bigoplus_{\alpha,\beta} \left(X_\alpha \oti_\C X_\beta\right)^{\oplus Z'_{\alpha,\beta}},
	\label{eq:scbulkspace}
\end{equation}
where each chiral brick $X_\alpha$ consists of an $H$ orbit of a simple object, $X_\alpha = \oplus_{j\in H i_{\alpha}} S_{j}$ for some $i_\alpha\in I$. The modular invariant $Z_{i,j}$ is of course immediately obtained from $Z'_{\alpha,\beta}$. The chiral bricks $X_\alpha$ correspond to simple objects in the category of representations of the extended chiral algebra.

\subsection{Expected structures in the \texorpdfstring{$SL(2,\R)$}{SL(2,R)} WZW model}\label{sec:nrcft}
It is of interest how the structures reviewed above for rational CFT generalize, or not as may happen, to nonrational CFT in general and the $SL(2,\R)$ WZW model in particular. The notion of a vertex algebra is essentially a formalization of the operator formalism of CFT, so if an operator formalism exists for the $SL(2,\R)$ WZW model it is more or less automatic that there is also a notion of conformal vertex algebra covering this model (see remarks below in section \ref{sec:proldiscr}). Assume henceforth that  there exists a corresponding type of vertex algebra with representation category $\mathcal{C}$. Refer for definiteness to $\mathcal{C}$ as the chiral category, and the objects thereof as chiral objects.\\

At present the nature of this hypothetical vertex algebra is unclear. It is, however, natural to expect that the chiral objects $X\in\mathrm{Ob}(\mathcal{C})$ are in particular modules of the centrally extended loop algebra of $sl(2,\R)$, and it is henceforth in the capacity of such modules they will be discussed. Take as $sl(2,\R)$ the complex algebra $\mathfrak{sl}_2$ generated by $J^\pm$, $J^0$ satisfying
\begin{align}
	~[J^+,J^-] &= J^0\nonumber\\
	~[J^0,J^\pm] &= \pm 2J^\pm,\label{sl2}
\end{align}
together with the antiinvolution $\widetilde{\omega}:J^0\mapsto J^0$, $J^\pm\mapsto -J^\mp$. 
Let us also identify the centrally extended loop algebra with the pair $(\widehat{\mathfrak{sl}}_2,\omega)$, where the untwisted affine Lie algebra $\widehat{\mathfrak{sl}}_2$ is generated by $J^\pm_n$, $J^0_n$, $K$, $d$, $n\in\Z$ satisfying
\begin{align}
	~[J^+_m,J^-_n] &= J^0_{m+n}+K m\delta_{m+n}\nonumber\\
	~[J^0_m,J^\pm_n] &= \pm 2J^\pm_{m+n}\nonumber\\
	~[J^0_m,J^0_n] &= 2Km\delta_{m+n}\nonumber\\
	~[d,J^a_n] &= nJ^a_n\nonumber\\
	~[K,J^a_n] & = 0,\label{affsl2}
\end{align}
and the antiinvolution $\omega$ is defined as $\omega:$ $J^0_n\mapsto J^0_{-n}$, $J^\pm_n\mapsto -J^\mp_{-n}$, $K\mapsto K$, $d\mapsto d$. The horizontal subalgebra generated by $J^\pm_0$, $J^0_0$, together with the restriction of $\omega$, will be identified with $sl(2,\R)$.

Consider the following decomposition of $\widehat{\mathfrak{sl}}_2$:
\begin{equation*}
	\widehat{\mathfrak{sl}}_2 = \mathfrak{r}_-\oplus\C J^-_0\oplus\mathfrak{h}\oplus\C J^+_0\oplus\mathfrak{r}_+,
\end{equation*}
where $\mathfrak{r}_\pm = \mathrm{span}_\C\{ J^+_n, J^-_n,J^0_n\}_{n\in\pm\N}$ and $\mathfrak{h}=\mathrm{span}_\C\{J^0_0,K,d\}$. The definitions $\mathfrak{n}_\pm = \C J^\pm_0\oplus\mathfrak{r}_\pm$ then yield a Cartan decomposition 
\begin{equation}
	\widehat{\mathfrak{sl}}_2=\mathfrak{n}_-\oplus\mathfrak{h}\oplus\mathfrak{n}_+,
		\label{eq:cartdec}
\end{equation}
while the definition
\[\mathfrak{k}=\C J^-_0\oplus\C J^+_0\oplus\mathfrak{h} = \mathfrak{sl}_2\oplus\C K\oplus\C d,\] yields another triangular decomposition:
\begin{equation}
	\widehat{\mathfrak{sl}}_2=\mathfrak{r}_-\oplus\mathfrak{k}\oplus\mathfrak{r}_+.
	\label{eq:triangdec}
\end{equation}
\begin{defn}\label{def:prolmod}
Let $V$ be a $\mathcal{U}(\mathfrak{sl}_2)$--module. Fix  a \emph{level} $k\in\C\backslash\{-2\}$ and extend $V$ to a $\mathcal{U}(\mathfrak{k})$ module $V^{(k)}$ by letting $K$ and $d$ act as $k\, \mathrm{id}_V$ and $\frac{C^{(2)}}{k+2}$ respectively, where $C^{(2)}$ denotes the second order Casimir of $\mathfrak{sl}_2$. Demanding that $\mathcal{U}(\mathfrak{r}_+)$ annihilates $V^{(k)}$, it becomes a $\mathcal{U}(\mathfrak{k}\oplus\mathfrak{r}_+)$--module. Now, define the $\mathcal{U}(\widehat{\mathfrak{sl}}_2)$--module $V^k$ as
\begin{equation}
	V^k \defas \mathcal{U}(\widehat{\mathfrak{sl}}_2)\otimes_{\mathcal{U}(\mathfrak{k}\oplus\mathfrak{r}_+)}V^{(k)}.
	\label{eq:prolmod}
\end{equation}
\end{defn}
We refer to modules of the type $V^k$ as prolongation modules (being prolonged from $V$).
\begin{rem}~
\begin{itemize}
	\item[(i)] As usual the Sugawara construction defines an action of the (enveloping algebra of the) Virasoro algebra on $V^k$, such that $L_0$ acts as $-d$.
	\item[(ii)] Note that the definition above implies $V^k=\mathcal{U}(\mathfrak{r}_-)\oti_\C V^{(k)}$. In particular, $V^k$ is free as a $\mathcal{U}(\mathfrak{r}_-)$--module.
\end{itemize}
\end{rem}

In the literature on the $SL(2,\R)$ WZW model an infinite cyclic subgroup of $\mathrm{Aut}(\widehat{\mathfrak{sl}}_2)$, the group of so called spectral flow automorphisms, plays an important role. 
\begin{defn}\label{def:sflow}
	For $s\in\Z$ the spectral flow automorphism $\vartheta_s$ is defined through
\begin{align}
	J^\pm_{n} &\mapsto J^\pm_{m\mp s}\\
	J^0_n &\mapsto J^0_n - s K\delta_n\\
	K & \mapsto K\\
	d &\mapsto d + \frac{s}{2}J^0_0 - \frac{s^2}{4} K\label{eq:sfdef}
\end{align}
\end{defn}
The spectral flow automorphisms extend automorphically to $\mathcal{U}(\widehat{\mathfrak{sl}}_2)$, denote these extensions by the same symbols $\vartheta_s$.
\begin{rem}~
\begin{itemize}
	\item[(i)] By the Sugawara construction spectral flow also acts automorphically on the Virasoro algebra as
			\[\vartheta_s: L_n \mapsto L_n - \frac{s}{2}J^0_n + \frac{s^2}{4}K\delta_n.\]
	\item[(ii)] It follows easily that $\vartheta_s\circ\vartheta_t = \vartheta_{s+t}$, so the infinite cyclic group of spectral flow automorphisms is $\langle\vartheta_1\rangle\subset\mathrm{Aut}(\widehat{\mathfrak{sl}}_2)$. 
	\item[(iii)] Note that $\vartheta_s\circ\omega=\omega\circ\vartheta_s$. 
\end{itemize}
\end{rem}
As automorphisms of $\mathcal{U}(\widehat{\mathfrak{sl}}_2)$, the $\vartheta_s$ also act on the corresponding category of modules.
\begin{defn}\label{def:sfmod}
For any $\mathcal{U}(\widehat{\mathfrak{sl}}_2)$--module $V^k\equiv(V^k,\rho)$, where $\rho:\mathcal{U}(\widehat{\mathfrak{sl}}_2)\rightarrow\mathrm{End}(V)$ denotes the representation morphism, and any $s\in \Z$, define the module $V^{k,s}$ as
	\begin{equation}
		V^{k,s}\equiv(V^k,\rho^s)\defas (V^k,\rho\circ\vartheta_s).
		\label{eq:sfmod}
	\end{equation}
\end{defn}
Restrict the level $k$ for the modules under consideration such that $k+2<0$ and $k\in\Z$, and fix a $k$ satisfying these conditions. Denote the corresponding chiral category by $\mathcal{C}^k$. The motivation for considering negative levels comes from the no ghost theorem \cite{Hwang}, from which it follows that there is then a set of modules constructed from the discrete and principal continuous series of $sl(2,\R)$ leading to a unitary \emph{string} spectrum. Although the object under study here is a CFT and not a string theory, it is reasonable to first study the case that is at present best understood.
The motivation behind the restriction to integer $k$ is twofold. First, with this restriction there is an existing proposal for a modular invariant bulk state space \cite{HeHwRS}, where the spectrum appears to be related to the actual group $SL(2,\R)$ and not to any cover thereof. Second, an analysis of Chern--Simons gauge theory with gauge group $SL(2,\R)$ (or indeed any finite cover thereof) reveals a quantization condition on the (Chern--Simons) level \cite{Wit}. A relation between the $SL(2,\R)$ WZW model and the corresponding Chern--Simons theory analogous to the compact case then directly leads to a quantization condition on the level.\\

The category $\mathcal{C}^k$ is expected to contain objects corresponding to $\mathcal{U}(\widehat{\mathfrak{sl}}_2)$--modules constructed from the discrete series $\{D^\pm_j\}_j$ of highest and lowest weight $\mathcal{U}(\mathfrak{sl}_2)$--modules, and from the principal continuous series $\{C^\alpha_j\}_{\alpha,j}$.
This follows from a path integral formulation where the zeromodes take values in $L^2(SL(2,\R))$, which decomposes in unitary representations of discrete and principal continuous type (see for instance \cite{Lang}). For the modules coming from the discrete series, assume those satisfying the bound $\frac{k+2}{2}\leq j\leq -1$ appear in $\mathcal{C}^k$ (here $j$ denotes half the $J^0$ eigenvalue). In addition modules originating in the limits of the discrete series with $j=-\frac{1}{2}$ may appear (these, however, do not appear in the decomposition of $L^2(SL(2,\R))$), as well as possibly modules corresponding to $j=\frac{k+1}{2}$. This bound is again motivated by the no ghost theorem of \cite{Hwang}, which gives the bound $j>k/2$. Also, for $k\in \Z$ modules prolonged from $D^\pm_j$ are void of null vectors iff $j>k/2$ \cite{Peskin}, and an argument analogous to that in \cite{Gepner} for compact groups results in the decoupling in amplitudes of fields corresponding to modules \emph{with} null vectors.
In addition to the class of modules discussed above one expects $\mathcal{C}^k$ to contain all modules obtained from these by the action of spectral flow \cite{HeHwRS,MaOg}.

Although this set of modules cover the proposed modular invariant spectra, there are reasons to believe that $\mathcal{C}^k$ should contain also other objects. Under the assumption that it is a tensor category there should also be a tensor unit, and such an object is not to be found among the ones above. From the point of view of CFT, even though there is no identity field in the spectrum (the trivial representation of $sl(2,\R)$ is not found in $L^2(SL(2,\R))$ due to the noncompactness), this field should nevertheless play a role in an operator formalism. Denote the one dimensional trivial representation of $\mathfrak{sl}_2$ by $D_0$.
A comparison with the compact case then leads to the conjecture that $\mathcal{C}^k$ should contain the vacuum module $D^k_0$. We also conjecture that $\mathcal{C}^k$ contains modules obtained from the vacuum module by the action of spectral flow, i.e. $D^{k,s}_0$, $s\in\Z$.\\

The origins as a representation category leads naturally to $\mathcal{C}^k$ being $\C$--linear and abelian.
As already declared, one expects $\mathcal{C}^k$ to be a tensor category. It is indeed hard to see how any of the familiar structure of rational CFT will generalize unless there is a fusion product, associative in a suitable sense, of objects in $\mathcal{C}^k$. 
In the rational case the ribbon structure of the chiral category is directly related to properties of the bundles of conformal blocks and the projectively flat connections on these. The construction of vector spaces of conformal blocks also makes sense in the $SL(2,\R)$ case. In addition, the physical origin of the coinvariance condition, the chiral Ward identities, will still have to hold in a conventional quantum field theory. The definition of spaces of conformal blocks, at least on $\C\mathbb{P}^1$, will therefore be directly adopted from Definition \ref{def:confblock}. 
In addition we will make the crucial assumption that Proposition \ref{prop:confblockprops} $(v)$ still holds in the $SL(2,\R)$ case, i.e. the following relation between spaces of conformal blocks on $\C\mathbb{P}^1$ and hom spaces in $\mathcal{C}^k$ is assumed to hold:
\begin{equation}
	\mathcal{H}(\vec{p},V^k_{\vec{\lambda}})\cong\Hom_{\mathcal{C}^k}(V^k_{\lambda_1}\fuse \cdots\fuse V^k_{\lambda_n},\one)=\Hom_{\mathcal{C}^k}(V^k_{\lambda_1}\fuse \cdots\fuse V^k_{\lambda_n},D^k_0),
	\label{eq:confblockHom}
\end{equation}
for any $n$ tuple $(V_{\lambda_1},\ldots,V_{\lambda_n})$ of $\mathcal{U}(\widehat{\mathfrak{sl}}_2)$--modules in $\mathcal{C}^k$.
Independence of the choice of local coordinates in the definition of $\mathcal{H}(\vec{p},V^k_{\vec{\lambda}})$ only depends on the Sugawara construction, and will still hold. Furthermore, chiral correlators on $\C\mathbb{P}^1$ will still satisfy Knizhnik--Zamolodchikov equations, so it is natural to expect that spaces of conformal blocks will also in the $SL(2,\R)$ case combine to vector bundles of conformal blocks, equipped with flat connections, over the moduli space of spheres with marked points.
We will simply assume that this indeed holds, and that consequently the vector spaces $\mathcal{H}(\vec{p},V^k_{\vec{\lambda}})$ are, up to isomorphism, independent of the position of the insertion points. The monodromies of the corresponding connections are then expected to equip $\mathcal{C}^k$ with a braiding, and a structure analogous to the balancing (i.e. the family of twist automorphisms $\Theta_X$) in the rational case.

It may be possible to define a conjugation on primary fields through the spaces of two--point conformal blocks on $\C\mathbb{P}^1$, in the case that these define a nondegenerate pairing of primary fields. In fact, for primary fields corresponding to the discrete series, $\{D^\pm_j\}$, the two--point functions appearing in the literature, \cite{MaOg2} via analytical continuation of the $H^+_3$ model results \cite{Teschner1,Teschner2}, and \cite{Nun} via a free field construction, indicate that the conjugate of $D^{+,k,s}_j$ is $D^{-,k,-s}_j$. Note, however, that the classic result \cite{Puk,Repka}
\begin{equation}
	D^+_j\oti D^-_j\cong \int^{\oplus}_{\R}d\lambda\, C^0_{-\frac{1}{2}+i\lambda},
	\label{eq:sl2rtensor}
\end{equation}
as well as the proposed fusion rules \cite{Nun2} of the $SL(2,\R)$ model (which essentially amounts to a truncation of \eqref{eq:sl2rtensor} with modifications due to spectral flow) appears to be in contradiction with the assumption \eqref{eq:confblockHom}, and indeed with any sensible operator formulation, due to the nonappearance of the trivial module in \eqref{eq:sl2rtensor}. The apparent contradiction relies on a conventional interpretation of the fusion rules which assumes that $\mathcal{C}^k$ is semisimple. As will be shown further down, properties appearing already for $\mathcal{U}(\mathfrak{sl}_2)$--modules provide strong evidence that $\mathcal{C}^k$ is not semisimple, and that the fusion product will in general not be fully decomposable.

The category $\mathcal{C}^k$ can thus not be expected to be semisimple. As a consequence (compare the reconstruction of a weakly ribbon category from Moore--Seiberg data in \cite{BK}) there is no strong reason to believe that the conjugation on primary fields will equip $\mathcal{C}^k$ with a duality. It may very well happen, however, that duality is replaced with a weaker structure, see Remark \ref{rem:semirig}.\\

Among nonchiral aspects, one issue will be discussed here, namely the bulk state space and torus partition function.
Recall from \cite{MaOg} that the bulk state space of the theory on the infinite cover was proposed to be obtained from sectors of the form\footnote{The precise expression in \cite{MaOg} actually differs by replacing in the second factor $-\mapsto +$ and $-s\mapsto s$. This discrepancy is purely a matter of convention, however, and related to the two conventions appearing in the literature concerning the meaning of ``charge conjugation'' vs. ``diagonal'' modular invariants.} $D^{+,k,s}_{j}\otimes_\C D^{-,k,-s}_{j}$ and $C^{\alpha,k,s}_{j}\otimes_\C C^{\alpha,k,-s}_{j}$ by taking the direct sum over all integer values of $s$ and the direct integral over all values of $j$ satisfying the unitarity bound. This bulk state space leads to a torus partition function that begs to be called the charge conjugation torus partition function. A comparison with two point functions on the sphere \cite{MaOg2,Nun} further strengthens this impression.

On the single cover the results of \cite{HeHwRS} indicate that one might construct bulk state spaces from bricks of chiral objects of the form
\begin{equation}
	X^k_j\defas X_j^{+,k}\oplus X_j^{-,k},
	\label{eq:brick1}
\end{equation}
where
\begin{equation}
	X_j^{\pm,k} \defas \bigoplus_{s\in 2\Z}D^{\pm,k,s}_j,
	\label{eq:brick2}
\end{equation}
and where $j$ takes half integer values and satisfies the unitarity bound (only the discrete series were discussed). The multiplicities of the various combinations $X^k_i\otimes_\C X^k_j$ would, according to the proposal of \cite{HeHwRS}, follow the $SU(2)$ series of modular invariants at level $-k-4$\footnote{The condition on the level was not explicitly stated, but can be straightforwardly derived from the results. Note that to get a semi positive level in the $SU(2)$ case one must require $k+4\leq 0$, which is stronger than previously required. The reason is that the result of summing up the characters of the modules contained in $X^k_j$ for $j=-1/2$ and $j=(k+1)/2$ is, after discarding a delta function, zero. To get a nonvanishing contribution (again, after discarding certain delta functions) the brick $X^k_{-1}$ must be included and $-1>(k+1)/2$, requiring $k+2\leq -2$. See also a more thorough discussion in \cite{BF}}. This proposal can with similar arguments be extended to $k\in\mathbb{Q}$, with the interpretation of the sigma model on a finite cover of $SL(2,\R)$ \cite{FjHw}, leading to an intermediate between (products of) the single cover bricks and the infinite cover sectors.
Note that since each brick $X^k_j$ contains both $D^{+,k,s}_j$ and its reputed conjugate $D^{-,k,-s}_j$ it is tempting to say that $X^k_j$ is self conjugate, which further strengthens the analogy with the $SU(2)$ WZW model where simple objects are self dual. Recall, however, that $\mathcal{C}^k$ will likely not have a duality, so the meaning of a primary field being self conjugate (beyond having nonvanishing two--point conformal block with itself on $\C\mathbb{P}^1$) is unclear.
Both on a finite cover and on the infinite cover one may thus identify, in particular, theories with a charge conjugation bulk spectrum. Importantly, however, none of these cases correspond to a Cardy case theory since the bulk spectrum does not contain a vacuum sector.\\

The appearance of bricks like $X^k_j$ in the bulk spectrum is in the rational case a signal of an extended chiral algebra, which can in turn be identified from the vacuum brick. In analogy with rational CFT it is natural to ask whether $\mathcal{C}^k$ contains invertible objects such that the bricks $X^k_j$ are the result of a simple current extension. If that is the case then obviously there must exist simple currents with infinite order. Again comparing to the rational case this is natural. In every compact (rational) WZW model, except for the $E_8$ model at level $2$, the Picard group of the chiral category is isomorphic to the center of the simply connected real form of the corresponding Lie group. The center of $SL(2,\R)$ is an infinite cyclic group, so it is not inconceivable to find an invertible object generating an infinite cyclic subgroup of $\mathrm{Pic}(\mathcal{C}^k)$. In further analogy with the rational $SU(2)$ models one might expect that the action of spectral flow on the vacuum module generates invertible objects.
It should be pointed out that there have already appeared, in a free field realization \cite{Nun3}, ``spectral flow operators'' whose insertions have an effect similar to insertions of the purported simple currents associated to spectral flow. Comparing the charges of these operators, however, it appears they do not correspond exactly to modules obtained by acting with spectral flow on the vacuum module.

\section{Modules related to the discrete series and their contragredients}\label{sec:mods}
Let us first fix some notation. The algebraic tensor product of vector spaces will be denoted by $\oti$, whereas the Hilbert space tensor product of Hilbert spaces will be denoted by $\woti$. In other words, $\woti$ involves a completion w.r.t. the Hilbert space norm compared to $\oti$. The same notation is used to distinguish the Hilbert space direct sum, $\wop$, from the vector space direct sum, $\oplus$. A similar notation is used to distinguish \emph{unitary} modules ${}^\mathrm{H}V=\left(V,(\cdot,\cdot),\rho\right)$ where $\left(V,(\cdot,\cdot)\right)$ is a $\C$--Hilbert space and $\rho:\env{g}\rightarrow \mathcal{U}\left(\mathfrak{u}\left(V,(\cdot,\cdot)\right)\right)$ is the representation morphism, from \emph{linear} modules $V=(V,\rho)$ where $V$ is a $\C$--vector space and $\rho:\env{g}\rightarrow\End(V)$ is the representation morphism.

\subsection{The discrete series of \texorpdfstring{$sl(2,\R)$}{sl(2,R)}}\label{sec:sl2discr}
The (very basic) results presented in this subsection are not claimed to be original. However, to the best of our knowledge the main result, Proposition \ref{prop:sl2nonsemi}, does not appear explicitly in the literature. The distinction between unitary and unitarizable modules is emphasized, and the corresponding results give valuable hints concerning the structure of the category $\mathcal{C}^k$.
Let, as in the introduction, $sl(2,\R)$ mean the pair $(\mathfrak{sl}_2,\widetilde{\omega})$, where $\mathfrak{sl}_2$ satisfies \eqref{sl2}. Recall that $L^2(SL(2,\R))$ can be decomposed into the discrete series and the principal continuous series of unitary irreducible representations of $SL(2,\R)$ (and of $sl(2,\R)$). The discrete series, $ \uD^\pm_j$ where $j\in -\frac{1}{2}\N$, $j\leq -1$, contains lowest weight representations ($ \uD^+_j$) and highest weight representations ($ \uD^-_j$). In addition there are the unitary irreducible mock discrete (or limits of discrete series) representations $ \uD^\pm_{-\frac{1}{2}}$. We will treat the latter representations as part of the discrete series.

Fix $j\in -\frac{1}{2}\N$. The $\mathcal{U}(\mathfrak{sl}_2)$--module $ \uD^+_j$ is a Hilbert space with an orthonormal basis $\{|j,m\rangle | m\in\N_0\}$ such that
\begin{align}
	J^0|j,m\rangle &= 2(-j+m)|j,m\rangle\\
	J^+|j,m\rangle &= \sqrt{(m+1)(m-2j)}|j,m+1\rangle\\
	J^-|j,m\rangle &= -\sqrt{m(m-1-2j)}|j,m-1\rangle.
\end{align}
Denote the inner product by $(\cdot,\cdot)$. The module $ \uD^+_j$ can be expressed in terms of a weight decomposition 
\[  \uD^+_j=\bigoplus^\mathrm{H}_{m\in\N_0}D^+_{j,m} :=\bigoplus^\mathrm{H}_{m\in\N_0}\C |j,m\rangle. \]
The algebra $\mathfrak{sl}_2$ is represented unitarily on this Hilbert space with the adjoint given by $\widetilde{\omega}$.
Denote by $D^+_j$ the underlying, invariant and simple, weight module
\[D^+_j=\bigoplus_{m\in\N_0}D^+_{j,m}.\]
Equip, by restriction, $D^+_j$ with the inner product $(\cdot,\cdot)$. Of course $D^+_j$ is unitarizable, and the unitarization is equivalent to the completion w.r.t. the inner product norm which by definition is $ \uD^+_j$.

The module $ \uD^-_j$ is analogously a Hilbert space with an orthonormal basis $\{|j,m\rangle|m\in\N_0\}$ (if needed, bases of $D_j^\pm$ will be distinguished by a subscript, $\{|j,m\rangle_\pm\}$) such that
\begin{align}
	J^0|j,m\rangle &= 2(j-m)|j,m\rangle\\
	J^+|j,m\rangle &= -\sqrt{m(m-1-2j)}|j,m-1\rangle\\
	J^-|j,m\rangle &= \sqrt{(m+1)(m-2j)}|j,m+1\rangle.
\end{align}
The structure described for $ \uD^+_j$ and $D^+_j$ extends immediately to
\[ \uD^-_j=\bigoplus^\mathrm{H}_{m\in\N_0}D^-_{j,m}\]
and to (the simple) $D^-_j$.\\

\noindent
Denote by $( \uD^\pm_j)^\dagger$ the module conjugate to $ \uD^\pm_j$, it is well known and easily shown that $( \uD^\pm_j)^\dagger\cong  \uD^\mp_j$.
Corresponding to the weight modules $D^\pm_j$, one can define their dual modules $(D^\pm_j)^*$. Since $D^\pm_j$ are infinite dimensional, however, their duals are reducible. Instead, make use of the weight decomposition to define contragredient modules.
\begin{defn}
	The module $(D^\pm_j)^\cont$ contragredient to $D^\pm_j$ is defined on the restricted dual space
	\[(D^\pm_j)^\cont :=\bigoplus_{m\in\N_0}(D_{j,m}^\pm)^*\subsetneq (D^\pm_j)^*.\]
	The representation morphism on the contragredient module is defined by
	\[ (J^a\varphi)(v) = -\varphi(J^a v).\]
\end{defn}
Modules contragredient to $D^\pm_j$ satisfy the following obvious property.
\begin{prop}
	\[\big(D^\pm_j\big)^\cont\cong D^\mp_j\]
\end{prop}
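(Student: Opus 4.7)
The plan is to exhibit an explicit intertwiner by matching weight-space bases.

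First, I would set up the dual bases. For $D^\pm_j$ with orthonormal basis $\{|j,m\rangle_\pm\}_{m\in\N_0}$, let $\{\langle j,m|_\pm\}_{m\in\N_0}$ denote the dual bases in $(D^\pm_j)^\cont=\bigoplus_m (D^\pm_{j,m})^*$, where each $\langle j,m|_\pm$ is dual to $|j,m\rangle_\pm$ in the weight space indexed by $m$. Then I would propose the candidate map
\[
\Phi_+:(D^+_j)^\cont\longrightarrow D^-_j,\qquad \langle j,m|_+ \longmapsto |j,m\rangle_-,
\]
extended linearly, and the analogous $\Phi_-:(D^-_j)^\cont\to D^+_j$ defined by $\langle j,m|_-\mapsto |j,m\rangle_+$.

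The main content is to check that $\Phi_+$ is an $\mathcal{U}(\mathfrak{sl}_2)$-module homomorphism. By the definition of the contragredient action, $(J^a\varphi)(v)=-\varphi(J^av)$, which on the dual basis amounts to taking the negative transpose of the representing matrix. Concretely, using the explicit formulas in section \ref{sec:sl2discr}, I would compute
\[
J^0\langle j,m|_+ = 2(j-m)\langle j,m|_+,\qquad J^+\langle j,m|_+ = -\sqrt{m(m-1-2j)}\langle j,m-1|_+,
\]
\[
J^-\langle j,m|_+ = \sqrt{(m+1)(m-2j)}\langle j,m+1|_+,
\]
and observe that these coincide precisely with the matrix coefficients defining the action on $D^-_j$. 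Hence $\Phi_+$ intertwines all three generators, and being defined as a bijection between bases it is an isomorphism. The symmetric computation produces $\Phi_-$.

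There is no real obstacle here; the only care needed is in handling the signs introduced by the contragredient definition and verifying that the raising/lowering coefficients on $(D^+_j)^\cont$ agree with those on $D^-_j$ under the identification $m\leftrightarrow m$. Note that the roles of $J^+$ and $J^-$ do \emph{not} swap, because the weight grading is already inverted by the dualization: $J^0$ eigenvalues on $\langle j,m|_+$ are $2(j-m)$, which is exactly the $J^0$-spectrum of $D^-_j$, so $J^+$ still raises and $J^-$ still lowers, as required.
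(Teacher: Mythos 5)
Your proof is correct and follows essentially the same route as the paper: both compute the contragredient action on the dual weight basis of $(D^+_j)^\cont$, obtain exactly the matrix coefficients of $D^-_j$, and conclude that the basis-matching map is an intertwining isomorphism (with the symmetric argument for $\Phi_-$). The explicit formulas you state agree with those in the paper's proof, so there is nothing to add.
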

\begin{proof}
	The proof is trivial. Defining the dual bases $\{\varphi^\pm_m\}_{m\in\N_0}$ of the contragredient modules by $\varphi^\pm_m(|j,n\rangle)=\delta_{m,n}$ we have on $(D^+_j)^\cont$:
\begin{align}
	J^0\varphi^+_m &= 2(j-m)\varphi^+_m\\
	J^+\varphi^+_m &= -\sqrt{m(m-1-2j)}\varphi^+_{m-1}\\
	J^-\varphi^+_m &= \sqrt{(m+1)(m-2j)}\varphi^+_{m+1}.
\end{align}
The linear map $\varphi^+_m\mapsto |j,m\rangle_-$ obviously defines an intertwining isomorphism from $( D^+_j)^\cont$ to $ D^-_j$. There is another equally obvious intertwining isomorphism from $( D^-_j)^\cont$ to $ D^+_j$.
\end{proof}
The proof of the following proposition is equally trivial and will not be spelled out.
\begin{prop}~
\begin{itemize}
	\item[(i)] The map
		\begin{equation}
			\beta_j: D^+_j\times D^-_j\rightarrow\C,\ (|j,m\rangle_+,|j,n\rangle_-)\mapsto \delta_{m,n}
		\end{equation}
		defines a, unique up to a scale factor,  (nondegenerate) invariant bilinear pairing.
	\item[(ii)]  Consequently the linear map $\psi_j: D^+_j\otimes D^-_j\rightarrow\C$ defined by
		\begin{equation}
			\psi_j:\sum_{m,n} c_{m,n}|j,m\rangle_+\otimes |j,n\rangle_-\mapsto \sum_m c_{m,m}\label{psi}
		\end{equation}
		gives the, unique up to a scale factor, nonzero intertwiner from $ D^+_j\otimes D^-_j$ to the trivial module $D_0\cong\C$.
\end{itemize}
\end{prop}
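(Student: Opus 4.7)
The plan is to handle part (i) by direct verification on the weight basis, and then deduce (ii) from the universal property of the algebraic tensor product. Bilinearity of $\beta_j$ is immediate from the definition, so the substantive content is invariance, nondegeneracy, and uniqueness up to scale.

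For invariance I would check $\beta_j(J^a v, w) + \beta_j(v, J^a w) = 0$ for $a \in \{0, +, -\}$ on pairs of basis vectors. The $J^0$ check reduces to the observation that the $J^0$ weights on $D^+_j$ and $D^-_j$ at level $m$ are $2(-j+m)$ and $-2(-j+m)$ respectively, so both terms either vanish (when $m \neq n$) or cancel identically (when $m = n$). The $J^\pm$ checks are short computations using the explicit action formulas: acting with $J^+$ on $|j,m\rangle_+$ produces $\sqrt{(m+1)(m-2j)}\,|j,m+1\rangle_+$, while acting with $J^+$ on $|j,n\rangle_-$ produces $-\sqrt{n(n-1-2j)}\,|j,n-1\rangle_-$; pairing via $\beta_j$ and using the Kronecker deltas forces $n = m+1$ in each term, after which the two square-root coefficients agree and the signs cancel. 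The $J^-$ case is symmetric. Nondegeneracy is immediate since the matrix of $\beta_j$ in the weight basis is the identity; equivalently, $\beta_j$ induces precisely the intertwining isomorphism $D^+_j \to (D^-_j)^\cont$ of the previous proposition.

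For uniqueness up to scale, let $\tilde\beta$ be an arbitrary invariant bilinear form. $J^0$ invariance, together with the weight computation above, forces $\tilde\beta(|j,m\rangle_+, |j,n\rangle_-) = 0$ unless $m = n$. Setting $c_m := \tilde\beta(|j,m\rangle_+, |j,m\rangle_-)$, invariance under $J^+$ evaluated on the pair $(|j,m\rangle_+, |j,m+1\rangle_-)$ collapses (after the common factor $\sqrt{(m+1)(m-2j)}$ is extracted) to the recursion $c_{m+1} = c_m$. Hence all $c_m$ are equal to $c_0$, and $\tilde\beta = c_0 \beta_j$. Part (ii) then follows from the universal property of the algebraic tensor product: bilinear maps $D^+_j \times D^-_j \to \C$ correspond bijectively to linear maps $\psi\colon D^+_j \otimes D^-_j \to \C$, and the invariance of the former corresponds to $\psi$ being an intertwiner to the trivial module $D_0$; the correspondence sends $\beta_j$ to $\psi_j$ by inspection, so uniqueness up to scale carries over. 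The only real obstacle is bookkeeping the square-root coefficients in the $J^\pm$ verification, but the cancellations are dictated by $\widetilde\omega$-invariance of the pairing and are essentially forced.
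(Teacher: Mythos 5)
Your proof is correct and is exactly the direct verification the paper has in mind (the paper declares the proof ``equally trivial'' and omits it): invariance and uniqueness by weight-by-weight computation with the explicit ladder coefficients, then (ii) via the universal property of $\otimes$. The only point worth making explicit is that the recursion $c_{m+1}=c_m$ is forced because the common factor $\sqrt{(m+1)(m-2j)}$ is nonzero for all $m\in\N_0$, which holds since $j<0$.
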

It is not difficult to show that the intertwiner $\psi_j$ has no section. In fact, the only possible nonzero intertwiner from the trivial representation to $ D^+_j\otimes D^-_j$ would have image $\C\sum_{m=0}^\infty |j,m\rangle_+\otimes |j,m\rangle_-$. Furthermore, one quickly verifies that there are no intertwiners in any direction between $D^+_{j_1}\oti D^-_{j_2}$ and $D_0$ whenever $j_1\neq j_2$. These results are summarized in the following
\begin{prop}\label{prop:sl2nonsemi}
	\begin{align}
		\Hom_{\ens}\left(D^\pm_i\oti\big(D^\pm_j\big)^\cont,D_0\right) \cong \Hom_{\ens}\left(D^\pm_i\oti D^\mp_j,D_0\right) &\cong \delta_{i,j}\C\label{eq:sl2nonss1}\\
		\Hom_{\ens}\left(D_0,D^\pm_i\oti\big(D^\pm_j\big)^\cont\right) \cong \Hom_{\ens}\left(D_0,D^\pm_i\oti D^\mp_j\right) & = \{0\}.\label{eq:sl2nonss2}
	\end{align}
\end{prop}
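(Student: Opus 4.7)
Both statements rely on the weight grading of the tensor products. The Cartan involution $\widetilde{\omega}$ exchanges $D^+_j$ with $D^-_j$, so it suffices to treat $D^+_i\oti D^-_j$; set $\ell\defas j-i$ and assume $\ell\geq 0$, the opposite sign being symmetric. A quick weight count shows the weight $0$ subspace of $D^+_i\oti D^-_j$ has basis $\{|i,m\rangle_+\oti|j,m+\ell\rangle_-\}_{m\in\N_0}$, while the weight $\mp 2$ subspaces are spanned by the obvious analogues with $m+\ell$ replaced by $m+\ell\pm 1$.

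For the first statement, any $\phi\in\Hom_{\ens}(D^+_i\oti D^-_j,D_0)$ is determined by the scalars $c_m\defas\phi(|i,m\rangle_+\oti|j,m+\ell\rangle_-)$. Expanding $\phi\circ J^+=0$ on weight $-2$ basis vectors and $\phi\circ J^-=0$ on weight $+2$ basis vectors yields two first-order recursions among the $c_m$, one the reciprocal of the other. Their simultaneous consistency reduces to the polynomial identity $(m+1)(m-2i)=(m+\ell+1)(m-i-j)$ in $m$. The coefficients of $m^2$ and $m^1$ match automatically (using $\ell=j-i$), while equating constant terms reduces to $(i+j)(j-i+1)=2i$, which factors as $(j-i)(i+j+1)=0$. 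Hence either $i=j$ or $i+j=-1$, the latter incompatible with our range $i,j\in -\tfrac{1}{2}\N,\ i,j\leq -\tfrac{1}{2},\ i\neq j$. Thus for $i\neq j$ the recursions are globally inconsistent, forcing $c_m=0$ for all $m$, while for $i=j$ they collapse to $c_{m+1}=c_m$, giving a one-dimensional solution space realized by the explicit intertwiner $\psi_j$.

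For the second statement, any $\phi\in\Hom_{\ens}(D_0,D^+_i\oti D^-_j)$ is determined by $X\defas\phi(1)$, which lies in the \emph{algebraic} tensor product and is therefore a \emph{finite} sum. Weight invariance forces $X=\sum_{m=0}^N c_m|i,m\rangle_+\oti|j,m+\ell\rangle_-$ with $c_N\neq 0$ if $X\neq 0$. Applying $J^+=J^+\oti\id+\id\oti J^+$ and isolating the coefficient of $|i,N+1\rangle_+\oti|j,N+\ell\rangle_-$, only the $J^+\oti\id$ contribution from the $m=N$ summand appears, since the competing $\id\oti J^+$ term would require an absent $m=N+1$ summand. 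This coefficient equals $c_N\sqrt{(N+1)(N-2i)}\neq 0$, contradicting $J^+X=0$; hence $X=0$.

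The main obstacle is the compatibility check in the first part: verifying that the $J^+$- and $J^-$-recursions reduce to the same polynomial condition in $(i,j)$ and that this condition picks out $i=j$ inside the discrete series. The finiteness argument in the second part is by comparison immediate, and is essentially the observation that the formal weight-$0$ invariant $\sum_{m\geq 0}|i,m\rangle_+\oti|j,m\rangle_-$ (which would furnish a section of $\psi_j$) simply does not lie in the algebraic tensor product.
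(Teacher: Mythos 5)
Your proof is correct and fills in, by explicit weight-space recursions, exactly the argument the paper leaves implicit: uniqueness of the intertwiner $\psi_j$ (and its nonexistence for $i\neq j$) from the weight-$0$ subspace, and nonexistence of a section because the formal invariant $\sum_{m\geq 0}|j,m\rangle_+\oti|j,m\rangle_-$ is not a finite sum in the algebraic tensor product. The consistency computation reducing to $(j-i)(i+j+1)=0$, with $i+j=-1$ excluded by the range $i,j\leq-\tfrac{1}{2}$, checks out.
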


It is immediately clear that $\psi_j$ does not extend to $ \uD^+_j\woti  \uD^-_j$ since it is not defined on e.g. $\sum_{m=0}^\infty \frac{1}{m}|j,m\rangle_+\otimes |j,m\rangle_-\in  \uD^+_j\woti  \uD^-_j$. Indeed, the decomposition \eqref{eq:sl2rtensor}
implies that there are no (continuous) intertwiners from $ \uD^+_j\woti  \uD^-_j$ to the trivial representation, nor in the other direction. Denoting by $\widehat{\Hom}$ the continuous homomorphisms of unitary representations, the results above are summarized as
\begin{align}
	\widehat{\Hom}_{sl(2,\R)}\left(\uD^\pm_j\woti(\uD^\pm_j)^\dagger,\uD_0\right) \cong \widehat{\Hom}_{sl(2,\R)}\left(\uD^\pm_j\woti \uD^\mp_j,\uD_0\right) &= \{0\}\label{eq:unhom1}\\
	\widehat{\Hom}_{sl(2,\R)}\left(\uD_0,\uD^\pm_j\woti(\uD^\pm_j)^\dagger\right) \cong \widehat{\Hom}_{sl(2,\R)}\left(\uD_0,\uD^\pm_j\woti\uD^\mp_j\right) &= \{0\}\label{eq:unhom2}
\end{align}
In these formulas, $\uD_0$ is viewed as a unitary module with the obvious inner product.\\
The trivial, but nevertheless interesting, observations in Proposition \ref{prop:sl2nonsemi} and equations \eqref{eq:unhom1} and \eqref{eq:unhom2} illustrate that categories of unitarizable and unitary representations may have very different properties in the infinite dimensional case.

\begin{rem}\label{rem:semirig}
	An immediate implication of Proposition \ref{prop:sl2nonsemi} is that a monoidal subcategory of $\mathrm{Rep}(\ens)$ containing both $D^\pm_j$ for at least one $j$ can be neither semisimple nor rigid. It is conceivable that such categories may still possess a particular generalization of rigidity so called \emph{semi--rigidity} \cite{Miyamoto}. Consistent with equations \eqref{eq:sl2nonss1} and \eqref{eq:sl2nonss2}, the definition of semi rigidity includes morphisms \emph{from} objects of the type $D\oti \widetilde{D}$ \emph{to} $\one$, but not in the opposite direction.
\end{rem}

The reason why the results in this section are interesting for us is the connection between the fusion product and the tensor product of modules of the horizontal subalgebra. In particular, if primary fields correspond to modules prolonged from $D^\pm_j$ one is lead to expect that the structure of hom spaces in Proposition \ref{prop:sl2nonsemi} will lift to corresponding hom spaces in the chiral category $\mathcal{C}^k$.

\subsection{\texorpdfstring{$\mathcal{U}(\widehat{\mathfrak{sl}}_2)$}{U(affine sl2)} avatars of the discrete series and the trivial module}\label{sec:proldiscr}
Using the constructions in Definitions \ref{def:prolmod} and \ref{def:sfmod}, the $\mathcal{U}(\widehat{\mathfrak{sl}}_2)$--modules $\uD^{\pm,k,s}_{j}$, $D^{\pm,k,s}_j$, and $D^{k,s}_0$ may be constructed. Let us begin by conjecturing that
\begin{quote}
	\emph{Among the objects of $\mathcal{C}^k$ may be found $\mathcal{U}(\widehat{\mathfrak{sl}}_2)$--modules of the type $D^{\pm,k,s}_j$, but not of the type $\uD^{\pm,k,s}_j$.}
\end{quote}
The main reason to believe this conjecture is the difficulty to reconcile known two--point functions with \eqref{eq:unhom1} and \eqref{eq:unhom2}. In the case of infinite cover we know \cite{Nun} that conjugate primary fields coincide (for vanishing spectral flow) with conjugate representations of $sl(2,\R)$. If the OPE of the two primary fields reflects the fusion rules, equations \eqref{eq:unhom1} and \eqref{eq:unhom2} implies that the identity field does not appear in the OPE, and nonvanishing of the two--point function appears mysterious\footnote{One might suspect that there is instead another field with similar properties to the identity, but the unitary Clebsch--Gordan series of $sl(2,\R)$ implies that there should not appear any primary field in the OPE with vanishing conformal dimension.}. The structure in Proposition \ref{prop:sl2nonsemi}, on the other hand, rhymes well with the nonvanishing two--point functions.

In addition it seems like a very unnatural procedure to prolong a unitary $\mathcal{U}(\mathfrak{sl}_2)$--module only linearly. To cure this one could try to unitarize the complete module $\uD^{\pm,k}_j$, but the unitary structure on $\uD^\pm_j$ does not extend to a unitary structure on $\uD^{\pm,k}_j$ w.r.t. $\omega$ (the induced sesquilinear form is not positive definite). As will be shown further down it is in even impossible to find an inner product on ${D}^{\pm,k}_j\subset \uD^{\pm,k}_j$ that unitarizes the horizontal subalgebra w.r.t. $\widetilde{\omega}$.\\

It is clear from the definition that $ D^\pm_j$ are Verma modules of $\mathcal{U}(\mathfrak{sl}_2)$ (extending the notion of Verma module to cover also lowest weight modules). Using the Cartan decomposition of $\widehat{\mathfrak{sl}}_2$ it is therefore also obvious that we can write
\[{D}^{-,k}_j\defas \mathcal{U}(\widehat{\mathfrak{sl}}_2)\oti_{\mathcal{U}(\mathfrak{k}+\mathfrak{r}_+)} D^{-,k}_j=\mathcal{U}(\widehat{\mathfrak{sl}}_2)\oti_{\mathcal{U}(\mathfrak{h}\oplus\mathfrak{n}_+)} \C |j,0\rangle,\]
and, with a suitable modification of $\mathfrak{n}_+$, an analogous expression for ${D}^{+,k}_j$. It follows that ${D}^{\pm,k}_j$ are Verma modules, as is of course well known. Note that this is \emph{not} the case for $ \uD^{\pm,k}_j$.\\

\noindent
The restriction to ${D}^{\pm,k,0}_j$ of the following result first appeared in \cite{Peskin}.
\begin{prop}
	For $k<0$, the $\mathcal{U}(\widehat{\mathfrak{sl}}_2)$--modules ${D}^{k,s}_0$ are simple, and the modules ${D}^{\pm,k,s}_j$ are simple iff $j>k/2$.
\end{prop}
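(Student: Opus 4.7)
The plan is to handle three pieces separately: the dependence on the spectral-flow parameter $s$, the Verma modules $D^{\pm,k}_j$, and the generalized Verma module $D^k_0$. First, since $\vartheta_s$ is a bijective automorphism of $\ena$, a subspace $W\subset V^k$ is $\rho$-invariant iff it is $(\rho\circ\vartheta_s)$-invariant, so $V^{k,s}$ and $V^k$ have identical lattices of submodules; consequently $V^{k,s}$ is simple iff $V^k$ is, and it suffices to treat $s=0$.

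For $D^{\pm,k}_j$ with $j\leq -\tfrac{1}{2}$, the key input is the text's observation that these are Verma modules of $\ena$ with respect to appropriate triangular decompositions (the standard one for $-$, the Weyl-reflected one for $+$). A Verma module is simple iff it contains no singular vector beyond the highest weight vector; by the Kac--Kazhdan determinant formula for $\widehat{\mathfrak{sl}}_2$ this is controlled by the arithmetic conditions
\[ \pm(2j+1) + m(k+2) = n \]
with $m,n$ positive integers in appropriate ranges, corresponding to the positive real roots $m\delta\pm\alpha$. Under $j \leq -\tfrac{1}{2}$ and $k+2 < 0$, the only condition that can be satisfied is the one from $\delta-\alpha$ at $m=1$, yielding $n = k + 1 - 2j$, a positive integer precisely when $j \leq k/2$; in that case an explicit singular vector (of Malikov--Feigin--Fuks type) at grade $k+1-2j$ exhibits reducibility, while for $j > k/2$ no Kac--Kazhdan condition is fulfilled and the Verma module is simple. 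The $+$ and $-$ cases give the same bound by the Weyl-reflection symmetry $\alpha\mapsto -\alpha$ (equivalently, by the antiinvolution $\omega$).

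For $D^k_0$ a separate argument is needed since this is a generalized (not honest) Verma module. Concretely $D^k_0 \cong M(0,k)/M(-2,k)$, where $M(-2,k)\subset M(0,k)$ is the Verma submodule generated by the singular vector $J^-_0|0\rangle$; so $D^k_0$ is simple iff $M(-2,k)$ is the unique maximal proper submodule of $M(0,k)$. Applying Kac--Kazhdan to $M(0,k)$ at $\lambda=0$ and $k+2<0$, the conditions $\pm 1 + m(k+2) = n$ with $m\geq 1$ have left-hand side strictly less than $1$, so the only solution with $n\geq 1$ is $m=0, n=1$ in the $+$ case, giving exactly $J^-_0|0\rangle$. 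Hence the only singular vectors in $M(0,k)$ are $|0\rangle$ and $J^-_0|0\rangle$, so $M(-2,k)$ is maximal and $D^k_0$ is simple. I anticipate the main technical point to be the correct Kac--Kazhdan setup for the Weyl-reflected triangular decomposition used for $D^{+,k}_j$; once that is in place, the sign analyses above are elementary.
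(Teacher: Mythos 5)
Your proof is correct and, for the spectral-flow reduction and for the Verma modules $D^{\pm,k}_j$, follows essentially the same route as the paper: reduce to $s=0$ via the automorphism property of $\vartheta_s$, then read off reducibility from the Kac--Kazhdan determinant, with the binding condition coming from the real root $\delta-\alpha$ at $m=1$ and giving the threshold $j\le k/2$ (the paper phrases this as the vanishing of the factor $k-2j$ for the simple root $\alpha_0$). The one place you genuinely diverge is the vacuum module, and there your treatment is the more careful one: the paper asserts that $D^k_0$ ``is a Verma module with highest weight $k\Lambda_0$'' and quotes the standard simplicity criterion, whereas $D^k_0$ is in fact the \emph{generalized} Verma module induced from the trivial $\mathfrak{sl}_2$-module (its Virasoro level~$0$ subspace is one-dimensional, so $J^-_0$ kills the cyclic vector). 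Your identification $D^k_0\cong M(0,k)/M(-2,k)$ and the check that, for $k+2<0$, the only Kac--Kazhdan condition satisfied in $M(0,k)$ is the one producing $J^-_0|0\rangle$ --- so that the radical of the contravariant form is exactly $M(-2,k)$ and the quotient is simple --- repairs this imprecision. Two cosmetic remarks: your claim that the $m=1$ condition is ``the only condition that can be satisfied'' when $j\le k/2$ is not literally true (for sufficiently negative $j$ the conditions at $m=2,3,\dots$ can also hold), but nothing in your argument uses uniqueness, only the existence or non-existence of a satisfied condition; and your $D^k_0$ argument assumes $k+2<0$ rather than the stated $k<0$, which is harmless since that is the paper's standing restriction on the level.
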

\begin{proof}
The proof follows from the form of the Kac--Kazhdan determinant of the contravariant forms of the respective Verma modules ${D}^{k,0}_0$ and ${D}^{\pm,k,0}_j$.
Recall that the Kac--Kazhdan determinant \cite{KacKaz} for the subspace of weight $\lambda-\eta$ of a $\mathcal{U}(\widehat{\mathfrak{sl}}_2)$ Verma module of highest weight $\lambda$ is given by
\begin{equation}
	\mathrm{det}(K_\eta) = \prod_{\alpha\in\Delta^+}\prod_{n\in\N}\left[ (\lambda+\rho,\alpha)- \frac{n}{2}(\alpha,\alpha) \right]^{P(\eta-n\alpha)}, \label{KKdet}
\end{equation}
where $\Delta^+$ is the set of positive roots of $\widehat{\mathfrak{sl}}_2$, $\rho$ is the affine Weyl vector, and $P$ is the Kostant partition function.
Consider first ${D}^{-,k}_j$. This is a Verma module with highest weight $\lambda=(k-2j)\Lambda_0+2j\Lambda_1$, where $\Lambda_0$, $\Lambda_1$ are the fundamental weights (the roots of $\mathfrak{sl}_2$ are normalized to have length squared $2$). The most stringent requirement comes from the factor with $n=1$ and the simple root $\alpha_0=\alpha_0^\vee$:
\[ (\lambda+\rho,\alpha_0)-\frac{1}{2}(\alpha_0,\alpha_0)=(k-2j)+1-1=k-2j.\]
It follows that the kernel of the contravariant form is trivial, and there are therefore no proper submodules, iff $j>k/2$ (if $j\leq k/2$ there is always a root that gives a factor of $0$). The result is the same for ${D}^{+,k}_j$, with a suitable modification of the triangular decomposition to get a highest weight module.
Obviously ${D}^{\pm,k,s}_j$ is simple iff ${D}^{\pm,k,0}_j={D}^{\pm,k}_j$ is simple, so the same statement holds for the modules obtained from ${D}^{\pm,k}_j$ by spectral flow.  Also ${D}_0^k$ may be analyzed in the same way since it is a Verma module with highest weight $k\Lambda_0$. The (well known) result is that ${D}_0^k$ is simple iff $k\notin\N_0$, in particular it is simple when $k<0$. Obviously then also the modules ${D}^{k,s}_0$ are simple.
\end{proof}
\begin{rem}
	For generic $s$,  ${D}^{\pm,k,s}_j$ and ${D}^{k,s}_0$ are no longer highest/lowest weight modules. There is, however, for every $s$ a cyclic vector from which the whole module is generated, namely the image of the highest weight vector under the map ${D}^{\pm,k}_j\mapsto {D}^{\pm,k,s}_j$.
	In ${D}^{\pm,k,\mp 1}_j$ the cyclic vectors are actually lowest/highest weight vectors, and that the same is true for ${D}^{k,1}_0$ respectively ${D}^{k,-1}_0$.
\end{rem}

The following result is well known, the first statement appearing in \cite{MaOg}.
\begin{prop}\label{prop:sfequiv}
	For $k<0$, $j>k/2$, there are the following equivalences ${D}^{+,k,s-1}_j\cong{D}^{-,k,s}_{k/2-j}$. There are no other equivalences among the modules $D^{\pm,k,s}_j$, $D^{k,s}_0$.
\end{prop}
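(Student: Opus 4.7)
The plan has two halves: establishing the stated equivalence, and ruling out any further ones.

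For the equivalence, it suffices to handle $s=0$, since applying $\vartheta_s$ to both sides then yields the general statement. Let $v_0 \in D^{+,k}_j$ be the generator inherited from the lowest weight vector of $D^+_j$, and view it as a vector in $D^{+,k,-1}_j=(D^{+,k}_j,\rho\circ\vartheta_{-1})$. Using $\vartheta_{-1}(J^+_n)=J^+_{n+1}$, $\vartheta_{-1}(J^-_n)=J^-_{n-1}$, $\vartheta_{-1}(J^0_n)=J^0_n+K\delta_n$, a direct computation shows $v_0$ is annihilated by $J^+_n$ for $n\geq 0$, by $J^-_n$ for $n\geq 1$ (using $J^-_0 v_0=0$), and by $J^0_n$ for $n>0$, while carrying $J^0_0$-eigenvalue $k-2j$ and level $k$. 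Thus $v_0$ is a highest weight vector of affine weight $(k-2j,k)$. Since spectral flow preserves cyclicity of $v_0$, the universal property of Verma modules yields a surjection $D^{-,k}_{k/2-j}\twoheadrightarrow D^{+,k,-1}_j$. The hypothesis $j>k/2$ together with $j<0$ (automatic from $j\in -\tfrac12\N$, $j\leq -\tfrac12$) ensures, via the preceding proposition, that both modules are simple; hence the surjection is an isomorphism.

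For ruling out other equivalences I would use the lower envelope of the weight support as an isomorphism invariant. For $V$ in our family, define $h^*_V(q):=\min\{h\in\R:V_{q,h}\neq 0\}$ where $V_{q,h}$ denotes the simultaneous $(J^0_0,L_0)$ weight space. This minimum exists because in each $D^{\pm,k}_j$ and $D^k_0$ the $L_0$-spectrum on any $J^0_0$-weight space is bounded below, a property preserved by spectral flow. Using the PBW basis (valid thanks to simplicity, hence Verma structure) one finds $h^*$ piecewise linear with a single kink: for $D^{+,k}_j$, slopes $(-\tfrac12,0)$ with kink at $q=-2j$, the optimizing vectors being $(J^-_{-1})^\ell v_0$ on the left and $(J^+_0)^\ell v_0$ on the right; for $D^{-,k}_j$ the mirror situation, slopes $(0,\tfrac12)$ with kink at $q=2j$; for $D^k_0$, slopes $(-\tfrac12,\tfrac12)$ with kink at $q=0$ (achieved by $(J^-_{-1})^\ell v_0$ and $(J^+_{-1})^\ell v_0$ respectively).

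Applying the spectral flow transformation $(q,h)\mapsto(q-sk,h-sq/2+s^2k/4)$ then gives, for $D^{+,k,s}_j$, slopes $(-(s+1)/2,-s/2)$ with kink at $-2j-sk$; for $D^{-,k,s}_j$, slopes $(-s/2,(1-s)/2)$ with kink at $2j-sk$; and for $D^{k,s}_0$, slopes $(-(s+1)/2,(1-s)/2)$ with kink at $-sk$. Within either discrete-series family, matching slope pairs forces $s$ to agree and then matching kink positions forces $j$ to agree. Across the $\pm$ families slopes coincide only when $s_-=s_++1$, and then kink agreement forces $j_-=k/2-j_+$, giving precisely the stated equivalence. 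The vacuum slope pair never coincides with either discrete-series slope pair for any $s,s'$ (a direct case check: the two vacuum slopes differ by $1$ while the slopes of $D^{\pm,k,s'}_j$ always differ by $\tfrac12$), ruling out all cross-identifications with $D^{k,s'}_0$, and different $s$ within the vacuum family give different slopes.

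The main obstacle is the bookkeeping of the second half: verifying that the stated monomials really attain the minimum $L_0$ among all PBW monomials producing a given $J^0_0$-shift---a short linear-programming-style argument over the finitely many generator types $J^\pm_{-n}$, $J^0_{-n}$ ($n\geq 1$) and $J^\pm_0$---and then systematically matching unordered slope pairs across the three families. Neither step is deep, but both demand care.
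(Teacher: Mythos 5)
Your proof is correct and follows essentially the same route as the paper: the equivalence is obtained by checking that the cyclic vector of $D^{+,k,-1}_j$ is a highest weight vector of the same weight as that of the simple Verma module $D^{-,k}_{k/2-j}$, and the non-equivalences are ruled out by comparing the $(J^0_0,L_0)$ weight supports. Your ``lower envelope'' function is just a systematic packaging of the paper's ``inspection of the weight sets'' (and your slope/kink computations check out), so no further comparison is needed.
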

\begin{proof}
Since all the modules under consideration are simple and generated by a cyclic vector the proof is straightforward by checking the action of spectral flow on the cyclic vector.
The weights (more precisely, the ($L_0,J^0_0$) eigenvalues) occurring in these modules are:
\begin{align}
	{D}^{+,k,s}_{j} &: \left\{ \left(\frac{j(j+1)}{k+2}\!+\!\frac{s^2k}{4}\! -\! (-j+n)s\!+\! m,2n\!-\! 2j\!-\! sk\right)\bigg|m\in\N_0, n\in\Z_{\geq -m}\right\}\label{wts1}\\
	{D}^{-,k,s}_{j}&: \left\{ \left(\frac{j(j+1)}{k+2}\!+\!\frac{s^2k}{4}\!+\!(-j+n)s\!+\!m,2j\!-\!sk\!-\!2n\right)\bigg|m\in \N_0, n\in \Z_{\geq -m}\right\}\label{wts2}\\
	{D}^{k,s}_0 &:\left\{ \left(m+\frac{s^2k}{4}-sn,2n-sk\right)\bigg| m\in\N_0, n=-m,-m+1,\ldots,m\right\}.
\end{align}
The set $\{\frac{k+1}{2},\frac{k+2}{2},\ldots,-\frac{1}{2}\}$ of allowed values of $j$ is preserved by the reflection $j\mapsto k/2-j$. Note that, since \[\frac{(k/2-j)(k/2-j+1)}{k+2}=\frac{j(j+1)}{k+2}+\frac{k}{4}-j,\] the weights occurring in ${D}^{-,k,0}_{k/2-j}$ coincide with those occurring in ${D}^{+,k,-1}_{j}$. Since ${D}^{+,k,-1}_{j}$ is a simple highest weight module with the same highest weight as ${D}^{-,k,0}_{k/2-j}$, these two modules must be equivalent. This equivalence immediately generalizes to the equivalences ${D}^{+,k,s-1}_{j}\cong{D}^{-,k,s}_{k/2-j}$, and inspection of \eqref{wts1} and \eqref{wts2} implies that there are no other equivalences within this class of modules. It is straightforward to check that there are no equivalences within the class ${D}^{k,s}_0$.
\end{proof}

\noindent
If $\mathcal{U}(\widehat{\mathfrak{sl}}_2)$--modules are given the conventional $\Z$ grading\footnote{Actually, one is usually considering an $\N_0$ grading. To treat all the modules of interest here on the same footing it will be convenient to use a $\Z$ grading where, if necessary, a module is completed with trivial subspaces $\{0\}$ in all grades not covered by the $\N_0$ grading.} using $L_0$, the nonvanishing homogeneous subspaces of the modules discussed here, with the exception of $D^k_0$, are all infinite dimensional. Furthermore, with the exceptions ${D}^{k,0}_0$, ${D}^{k,\pm1}_0$, ${D}^{\pm,k,0}_j$, ${D}^{+,k,-1}_j$, ${D}^{-,k,1}_j$, homogeneous subspaces are nontrivial at \emph{all} $\Z$--grades.
	
The weight decomposition can in fact be used to provide a $\Z\times\Z$ grading, where the first factor corresponds to the integer part of the $L_0$ eigenvalue and the second factor corresponds to the $J^0_0$ eigenvalue. Fix the following convention: the homogeneous subspace labeled by $(0,0)\in\Z\times\Z$ is spanned by the distinguished cyclic vector (i.e. the image of the highest/lowest weight vector). This determines the grading in terms of $(L_0,J^0_0)$ eigenvalues uniquely.
The importance of the $\Z\times\Z$ grading comes from the following, trivially shown, properties.
\begin{prop}\label{prop:grading}
For any $V\in\{D^{k,s}_0,D^{\pm,k,s}_j\}_{s,j}$ indicate the $\Z\times\Z$ grading as
\[V=\bigoplus_{m,n\in\Z}V_{m,n}.\]
Then
\begin{itemize}
	\item[(i)] $\mathrm{dim}_\C(V_{m,n})<\infty$ for all $m,n\in\Z$.
	\item[(ii)] For any $n\in\Z$ there exists a $N_n\in \Z$ such that $V_{m,n}=\{0\}$ for all $m\leq N_n$.
\end{itemize}
\end{prop}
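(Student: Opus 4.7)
The plan is to reduce to the untwisted case $s=0$ via spectral flow, and then to exploit the PBW-style presentation $V^k = \mathcal{U}(\mathfrak{r}_-)\otimes_\C V^{(k)}$ from Remark~(ii) after Definition~\ref{def:prolmod}. A weight-adapted basis of $V^k$ is obtained by tensoring an ordered PBW monomial $J^{a_1}_{-n_1}\cdots J^{a_r}_{-n_r}$ (each $n_i\geq 1$) in $\mathcal{U}(\mathfrak{r}_-)$ with a $J^0_0$-weight vector of $V^{(k)}$; such a monomial shifts the $L_0$-eigenvalue by $\sum_i n_i \geq r$ and the $J^0_0$-eigenvalue by $2(a-b)$, where $a,b$ count the $J^+$ and $J^-$ factors respectively, while the zero-mode factor $V^{(k)}$ is concentrated in a single $L_0$-grade (namely $\frac{j(j+1)}{k+2}$ for $D^{\pm,k}_j$, and $0$ for $D^k_0$).

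For part~(i) with $s=0$, fix $(m,n)\in\Z\times\Z$. A basis vector contributing to $V^k_{m,n}$ must consist of a monomial of total $L_0$-shift $m$ combined with a weight vector of $V^{(k)}$ whose $J^0_0$-eigenvalue absorbs the remaining $J^0_0$-charge. Since each monomial factor raises $L_0$ by at least $1$, the monomial length is bounded by $m$, and for any prescribed length there are only finitely many ordered PBW monomials of that length and total grade $m$. In every case considered ($V^{(k)}\in\{D^\pm_j,D_0\}$) the $J^0_0$-weight spaces of $V^{(k)}$ are one-dimensional, so the count is finite.

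For part~(ii) with $s=0$, the modules $D^{\pm,k}_j$ and $D^k_0$ are highest or lowest weight modules, so $L_0$ attains its minimum on the zero-mode layer $V^{(k)}$ and the PBW factor can only add a nonnegative amount. At fixed $J^0_0$ this yields a lower bound on $L_0$ and hence the required threshold $N_n$; one also reads this off directly from the explicit weight lists \eqref{wts1}--\eqref{wts2} and the analogous list for $D^{k,s}_0$ appearing in the proof of Proposition~\ref{prop:sfequiv}, by noting that once $n$ is fixed the summation index there is determined and the remaining parameter $m$ lies in $\N_0$ (possibly further restricted but still bounded below).

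To pass to arbitrary $s$, recall that as a vector space $V^{k,s}=V^k$, and by Definition~\ref{def:sflow} the twist $\vartheta_s$ carries an $(L_0,J^0_0)$-eigenvector with eigenvalues $(h,q)$ to an $(L_0,J^0_0)$-eigenvector for the twisted action with eigenvalues $(h-\tfrac{s}{2}q+\tfrac{s^2}{4}k,\ q-sk)$. Because $k\in\Z$, this is an affine $\Z$-linear bijection of the eigenvalue lattice, and after normalising the cyclic vector to sit at $(0,0)$ it amounts merely to a relabelling of the $\Z\times\Z$ grading, carried out once and for all by the convention fixed just before the proposition. Both finite-dimensionality of the graded pieces and the existence of a lower bound at fixed $J^0_0$ are invariant under such a relabelling, so the statement transports verbatim from $V^k$ to $V^{k,s}$. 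The only point requiring any real care is confirming that the relabelled bigrading remains integer-valued with origin at the cyclic vector, which is automatic from the integrality of $k$; the rest is bookkeeping.
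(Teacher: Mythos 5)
Your proof is correct, and it fills in exactly the argument the paper leaves implicit (the paper only asserts these properties are ``trivially shown''): finiteness of each bigraded piece from the PBW presentation $V^k=\mathcal{U}(\mathfrak{r}_-)\otimes_\C V^{(k)}$ with one--dimensional zero--mode weight spaces, the lower bound on $L_0$ at fixed $J^0_0$ from the highest/lowest--weight structure, and transport to $s\neq 0$ by the affine relabelling of eigenvalues under $\vartheta_s$, consistent with the weight lists in the proof of Proposition~\ref{prop:sfequiv}. The only cosmetic quibble is that integrality of the relabelled bigrading rests on $J^0_0$--eigenvalue differences being even (so that $\tfrac{s}{2}(q_v-q_w)\in\Z$) rather than on $k\in\Z$ alone, but this does not affect the argument.
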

\begin{rem}~
	\begin{itemize}
		\item[(i)] Statement $(ii)$ of the proposition is crucial since it guarantees that the Sugawara construction gives a well defined representation of the Virasoro algebra on each module. In addition, the same property is required of a strongly graded module, or generalized module, of a (strongly graded) conformal vertex algebra as defined in \cite{HLZ}.
		\item[(ii)] These properties imply that, with the sole exception of $D^{k,0}_0\equiv D^k_0$, none of the modules under consideration belong to the category $\mathscr{O}_{k+2}$. It follows that the results of Kazhdan \& Lusztig \cite{KaLu} cannot be applied to the $SL(2,\R)$ WZW model.
	\end{itemize}
\end{rem}

It is useful to illustrate the structure of the modules by weight diagrams. The weight diagram of ${D}^k_0$ takes the form:

\begin{center}
\begin{tikzpicture}
	\fill[yellow!20!white]	(-3.3,2.8) -- (-3.3,1.65) -- (0,0) -- (3.3,1.65) -- (3.3,2.8) -- (-3.3,2.8);
	\draw[step=.5, dotted] 	(-3.3,-.3) grid (3.3,2.8);
	\draw[->]	(-3.5,0) -- (3.5,0) node[right] {$J^0_0$};
	\draw[->]	(0,-0.5) -- (0,3) node[above] {$L_0$};
	\draw[red,very thin]	(0,0)--(3.3,1.65);
	\draw[red,very thin]	(0,0)--(-3.3,1.65);
	\foreach	\y	in	{0,0.5,...,2.5}
		\filldraw[fill=white!30!black]	(0,\y)	circle	(1pt);
	\foreach	\y	in	{0.5,1,...,2.5}
		{\filldraw[fill=white!30!black]	(-1,\y)	circle	(1pt);
		\filldraw[fill=white!30!black]	(1,\y)		circle	(1pt);
		}
	\foreach	\y	in	{1,1.5,...,2.5}
		{\filldraw[fill=white!30!black]	(-2,\y)	circle	(1pt);
		\filldraw[fill=white!30!black]	(2,\y)		circle	(1pt);
		}
	\foreach	\y	in	{1.5,2,...,2.5}
		{\filldraw[fill=white!30!black]	(-3,\y)	circle	(1pt);
		\filldraw[fill=white!30!black]	(3,\y)		circle	(1pt);
		}
	\draw[->,very thick]	(0.1,0.05) -- (0.9,0.45) node[below=-1pt] {\tiny$J^+_{-1}$};
	\draw[->,very thick]	(-0.1,0.05) -- (-0.9,0.45) node[below=-1pt] {\tiny$J^-_{-1}$};
\end{tikzpicture}
\end{center}
The weights in the module are located in the shaded wedge bounded by the diagonal lines generated by repeated action of $J^-_{-1}$ respectively $J^+_{-1}$ on the highest weight. The weight diagrams of ${D}^{\pm,k}_j$ are illustrated as:
\begin{center}
\begin{tikzpicture}
	\fill[yellow!20!white] (-6.3,0) -- (-4,0) -- (-0.7,1.65) -- (-0.7,2.8) -- (-6.3,2.8) -- (-6.3,0);
	\draw[step=0.5,dotted]	(-6.3,-.3) grid (-0.7,2.8);
	\draw[->]	(-6.5,1)--(-0.5,1) node[right] {$J_0^0$};
	\draw[->]	(-2.5,-.5)--(-2.5,3) node[above] {$L_0$};
	\draw[red,very thin]	(-6.3,0)--(-4,0)--(-0.7,1.65);
	\foreach	\x	in	{-6,...,-4}
		\filldraw[fill=white!30!black]	(\x,0)	circle	(1pt);
	\foreach	\x	in	{-6,...,-3}
		\filldraw[fill=white!30!black]	(\x,0.5)	circle	(1pt);
	\foreach	\x	in	{-6,...,-2}
		\filldraw[fill=white!30!black]	(\x,1)	circle	(1pt);
	\foreach	\x	in	{-6,...,-1}{
		\filldraw[fill=white!30!black]	(\x,1.5)	circle	(1pt);
		\filldraw[fill=white!30!black]	(\x,2)	circle	(1pt);
		\filldraw[fill=white!30!black]	(\x,2.5)	circle	(1pt);
		}
	\draw[->,very thick]	(-3.9,0.05)--(-3.1,.45) node[below] {\tiny $J^+_{-1}$};
	\draw[->,very thick]	(-4.1,0)--(-4.9,0) node[below] {\tiny $J^-_0$};
	\draw	(-4,.9)--(-4,1.1) node[above=-2pt,fill=yellow!20!white] {\tiny $2j$};
	\draw	(-2.6,0)--(-2.4,0) node[right,fill=white] {\tiny$\frac{j(j+1)}{k+2}$};
	\draw	(-5.7,3.3)	node[draw=green!50!black] {${D}^{-,k}_j$};
	\fill[yellow!20!white] (6.3,0) -- (4,0) -- (0.7,1.65) -- (0.7,2.8) -- (6.3,2.8) -- (6.3,0);
	\draw[step=0.5,dotted]	(.7,-0.3) grid (6.3,2.8);
	\draw[->]	(.5,1)--(6.5,1) node[right] {$J^0_0$};
	\draw[->]	(2.5,-0.5)--(2.5,3) node[above] {$L_0$};
	\draw[red,very thin]	(6.3,0) -- (4,0) -- (0.7,1.65);
	\foreach	\x	in	{4,...,6}
		\filldraw[fill=white!30!black]	(\x,0)	circle	(1pt);
	\foreach	\x	in	{3,...,6}
		\filldraw[fill=white!30!black]	(\x,0.5)	circle	(1pt);
	\foreach	\x	in	{2,...,6}
		\filldraw[fill=white!30!black]	(\x,1)	circle	(1pt);
	\foreach	\x	in	{1,...,6}{
		\filldraw[fill=white!30!black]	(\x,1.5)	circle	(1pt);
		\filldraw[fill=white!30!black]	(\x,2)	circle	(1pt);
		\filldraw[fill=white!30!black]	(\x,2.5)	circle	(1pt);
		}
	\draw[->,very thick]	(3.9,0.05)--(3.1,.45) node[below] {\tiny $J^-_{-1}$};
	\draw[->,very thick]	(4.1,0)--(4.9,0) node[below] {\tiny $J^+_0$};
	\draw	(4,.9)--(4,1.1) node[above=-2pt,fill=yellow!20!white] {\tiny $-2j$};
	\draw	(2.6,0)--(2.4,0) node[left,fill=white] {\tiny$\frac{j(j+1)}{k+2}$};
	\draw	(5.7,3.3)	node[draw=green!50!black] {${D}^{+,k}_j$};
\end{tikzpicture}
\end{center}

\noindent
The weight diagrams for ${D}^{k,\pm 1}_0$ are illuminating in comparison with those of $D^{\pm,k}_j$:

\begin{center}
\begin{tikzpicture}
	\fill[yellow!20!white] (-6.3,0) -- (-4,0) -- (-1.2,2.8)  -- (-6.3,2.8) -- (-6.3,0);
	\draw[step=0.5,dotted]	(-6.3,-.3) grid (-0.7,2.8);
	\draw[->]	(-6.5,1)--(-0.5,1) node[right] {$J_0^0$};
	\draw[->]	(-2.5,-.5)--(-2.5,3) node[above] {$L_0$};
	\draw[red,very thin]	(-6.3,0)--(-4,0)--(-1.2,2.8);
	\foreach	\x	in	{-6,...,-4}
		\filldraw[fill=white!30!black]	(\x,0)	circle	(1pt);
	\foreach	\x	in	{-6,...,-4}
		\filldraw[fill=white!30!black]	(\x,0.5)	circle	(1pt);
	\foreach	\x	in	{-6,...,-3}{
		\filldraw[fill=white!30!black]	(\x,1)	circle	(1pt);
		\filldraw[fill=white!30!black]	(\x,1.5)	circle	(1pt);
		}
	\foreach	\x	in	{-6,...,-2}{
		\filldraw[fill=white!30!black]	(\x,2)	circle	(1pt);
		\filldraw[fill=white!30!black]	(\x,2.5)	circle	(1pt);
		}
	\draw[->,very thick]	(-3.95,0.05)--(-3.1,.9) node[below] {\tiny $J^+_{-2}$};
	\draw[->,very thick]	(-4.1,0)--(-4.9,0) node[below] {\tiny $J^-_0$};
	\draw	(-4,.9)--(-4,1.1) node[above=-2pt,fill=yellow!20!white] {\tiny $k$};
	\draw	(-2.6,0)--(-2.4,0) node[right,fill=white] {\tiny$\frac{k}{4}$};
	\draw	(-5.7,3.3)	node[draw=green!50!black] {${D}^{k,-1}_0$};
	\fill[yellow!20!white] (6.3,0) -- (4,0) -- (1.2,2.8) -- (6.3,2.8) -- (6.3,0);
	\draw[step=0.5,dotted]	(.7,-0.3) grid (6.3,2.8);
	\draw[->]	(.5,1)--(6.5,1) node[right] {$J^0_0$};
	\draw[->]	(2.5,-0.5)--(2.5,3) node[above] {$L_0$};
	\draw[red,very thin]	(6.3,0) -- (4,0) -- (1.2,2.8);
	\foreach	\x	in	{4,...,6}
		\filldraw[fill=white!30!black]	(\x,0)	circle	(1pt);
	\foreach	\x	in	{4,...,6}
		\filldraw[fill=white!30!black]	(\x,0.5)	circle	(1pt);
	\foreach	\x	in	{3,...,6}{
		\filldraw[fill=white!30!black]	(\x,1)	circle	(1pt);
		\filldraw[fill=white!30!black]	(\x,1.5)	circle	(1pt);
		}
	\foreach	\x	in	{2,...,6}{
		\filldraw[fill=white!30!black]	(\x,2)	circle	(1pt);
		\filldraw[fill=white!30!black]	(\x,2.5)	circle	(1pt);
		}
	\draw[->,very thick]	(3.95,0.05)--(3.1,.9) node[below] {\tiny $J^-_{-2}$};
	\draw[->,very thick]	(4.1,0)--(4.9,0) node[below] {\tiny $J^+_0$};
	\draw	(4,.9)--(4,1.1) node[above=-2pt,fill=yellow!20!white] {\tiny $-k$};
	\draw	(2.6,0)--(2.4,0) node[left,fill=white] {\tiny$\frac{k}{4}$};
	\draw	(5.7,3.3)	node[draw=green!50!black] {${D}^{k,1}_0$};
\end{tikzpicture}
\end{center}
Note that the extremal weights (the tips of the wedges) of $D^{k,\pm1}_0$, i.e. $(k/4,\mp k)$, coincide with the extremal (highest resp. lowest) weights of $D^{\pm,k}_{k/2}$, which are both reducible. In fact, we have
\begin{prop}\label{prop:scquotient}
	$D^{k,\pm 1}_0\cong D^{\pm,k}_{k/2}/W^\pm_{k/2}$, where $W^\pm_{k/2}$ are the submodules generated from $J^\mp_{-1}|k/2,0\rangle$
\end{prop}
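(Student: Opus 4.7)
The plan is to work out the "$-$" case in detail; the "$+$" case is entirely symmetric after replacing $\vartheta_{-1}$ by $\vartheta_{+1}$, the Cartan decomposition \eqref{eq:cartdec} by its opposite, and $J^+_{-1}$ by $J^-_{-1}$. First I would verify by direct commutator computation that $v := J^+_{-1}|k/2,0\rangle_-$ is a singular vector of $D^{-,k}_{k/2}$ with respect to $\mathfrak{n}_+ = \C J^+_0 \oplus \mathfrak{r}_+$: the generator $J^+_0$ commutes with $J^+_{-1}$, the modes $J^\pm_n, J^0_n$ with $n\geq 2$ yield commutators in $\mathfrak{r}_+$ that annihilate $|k/2,0\rangle_-$, and the only nontrivial checks are
\[J^-_1\, v = [J^-_1, J^+_{-1}]\,|k/2,0\rangle_- = (K - J^0_0)\,|k/2,0\rangle_- = (k - k)\,|k/2,0\rangle_- = 0\]
together with $J^0_1\, v = 2J^+_0\,|k/2,0\rangle_- = 0$, which reflect exactly the vanishing of the Kac--Kazhdan factor $k-2j$ at $j = k/2$. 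Hence $W^-_{k/2} = \mathcal{U}(\widehat{\mathfrak{sl}}_2)\,v$ is a proper submodule of $D^{-,k}_{k/2}$.

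Next, let $v_{\mathrm{cyc}}$ denote the cyclic vector of $D^{k,-1}_0$, i.e.\ the image under $\vartheta_{-1}$ of the vacuum of $D^k_0$. By the remark following Proposition~\ref{prop:sfequiv} it is a highest weight vector, and Definition~\ref{def:sflow} applied to the vacuum immediately gives its $(L_0, J^0_0)$-weight as $(k/4, k)$, matching $|k/2, 0\rangle_-$. Moreover the twisted action yields $J^+_{-1}\cdot_{\mathrm{new}} v_{\mathrm{cyc}} = \rho\bigl(\vartheta_{-1}(J^+_{-1})\bigr)|0\rangle = \rho(J^+_0)|0\rangle = 0$. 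Since $D^{-,k}_{k/2}$ is, by the discussion preceding the proposition, the $\mathcal{U}(\widehat{\mathfrak{sl}}_2)$ Verma module of highest weight $k\Lambda_1$ with respect to \eqref{eq:cartdec}, its universal property gives a unique $\mathcal{U}(\widehat{\mathfrak{sl}}_2)$--homomorphism $\phi: D^{-,k}_{k/2} \to D^{k,-1}_0$ sending $|k/2,0\rangle_-$ to $v_{\mathrm{cyc}}$. This $\phi$ annihilates $v$ and hence all of $W^-_{k/2}$, so descends to $\bar\phi: D^{-,k}_{k/2}/W^-_{k/2} \to D^{k,-1}_0$, which is surjective because $v_{\mathrm{cyc}}$ generates $D^{k,-1}_0$.

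The remaining step is injectivity, which I would establish by a character comparison in the $\Z\times\Z$ grading of Proposition~\ref{prop:grading}. Since $\mathcal{U}(\mathfrak{n}_-)$ is a domain (PBW) and $D^{-,k}_{k/2}$ is free of rank one over it, the annihilator of the nonzero element $v$ in $\mathcal{U}(\mathfrak{n}_-)$ vanishes, so $W^-_{k/2}$ is itself a Verma module of highest weight $(k/4 + 1,\, k + 2)$. With $q, z$ tracking $L_0$ and $J^0_0/2$ respectively, this gives
\[\chi_{D^{-,k}_{k/2}/W^-_{k/2}}(q,z) = q^{k/4}z^k(1 - qz^2)\cdot\frac{1}{1-z^{-2}}\prod_{n\geq 1}\frac{1}{(1-q^n)(1-q^n z^2)(1-q^n z^{-2})}.\]
On the other side, $D^{k,-1}_0$ shares its underlying vector space with $D^k_0 = \mathcal{U}(\mathfrak{r}_-)|0\rangle$, with the spectral flow shifting weights by $(m, 2n)\mapsto (m + n + k/4,\,2n + k)$, so $\chi_{D^{k,-1}_0}(q,z) = q^{k/4}z^k\,\chi_{D^k_0}(q, q^{1/2}z)$; using the cancellation $(1 - qz^2)\prod_{n\geq 1}(1-q^n z^2)^{-1} = \prod_{n\geq 2}(1-q^n z^2)^{-1}$ this matches the previous display. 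Equality of graded dimensions plus surjectivity of $\bar\phi$ then forces an isomorphism on each finite-dimensional graded component. The only real obstacle is the arithmetic of the character simplification; conceptually the statement just says that $W^{\pm}_{k/2}$ is the kernel of the universal map from the reducible Verma module onto the spectrally flowed vacuum, and the computation above verifies this.
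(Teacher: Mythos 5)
Your argument is correct, and it reaches the conclusion by a route that differs from the paper's in one essential respect. The paper's proof is short: it observes that $D^{k,\pm1}_0$ are \emph{simple} highest/lowest weight modules (so each is the quotient of the Verma module $D^{\pm,k}_{k/2}$ by its maximal submodule), checks that $J^\mp_{-1}|k/2,0\rangle$ is singular, and then asserts that the submodule it generates is maximal because the singular vector sits at Virasoro level $1$ on the boundary of the weight wedge. You share the first two ingredients (the singular-vector computation, which you carry out correctly --- the vanishing of $J^-_1 v$ is exactly the vanishing Kac--Kazhdan factor $k-2j$ at $j=k/2$ --- and the matching of the highest weight $(k/4,k)$ of $v_{\mathrm{cyc}}$ with that of $|k/2,0\rangle_-$), but you replace the maximality assertion by an explicit graded-dimension count: $W^-_{k/2}$ is itself a Verma module because $\mathcal{U}(\mathfrak{n}_-)$ is a domain, and the character of $D^{-,k}_{k/2}/W^-_{k/2}$ matches $q^{k/4}z^k\,\chi_{D^k_0}(q,q^{1/2}z)$ after the telescoping of the $(1-q^nz^2)$ factors. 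This buys you two things: you never need to invoke simplicity of $D^{k,\pm1}_0$, and you make rigorous precisely the step that the paper leaves at the level of a plausibility argument (that the level-one singular vector generates the \emph{whole} maximal submodule). The cost is the bookkeeping of the character identity, which you have done correctly --- the only blemish is the parenthetical ``$z$ tracking $J^0_0/2$'', which is inconsistent with your own formula (the displayed character uses $z$ to track the $J^0_0$ eigenvalue itself, with each $J^\pm$ contributing $z^{\pm2}$ and the highest weight contributing $z^k$); since the identity is invariant under $z\mapsto z^{1/2}$ throughout, this does not affect the argument, but you should fix the convention.
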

\begin{proof}
	Since $D^{k,\pm 1}_0$ are simple highest/lowest weight modules they are quotients of the corresponding Verma modules, $D^{\pm,k}_{k/2}$, by the maximal submodules. It is easily checked that $J^\mp_{-1}|k/2,0\rangle$ are highest/lowest weight vectors, and since they reside at Virasoro level 1 and at the boundary of the weight wedge, the submodules generated from these vectors are maximal submodules.
\end{proof}

\noindent
For completeness, the weight diagrams for ${D}^{k,s}_0$, and ${D}^{+,k,s}_j$ for $s\neq 0,\pm 1$,  are illustrated below.

\begin{center}
\begin{tikzpicture}
	\fill[yellow!20!white] (-6.3,-.3) -- (-5.3,-.3) -- (-3,1) -- (-2.5,2.8)  -- (-6.3,2.8) -- (-6.3,0);
	\draw[step=0.5,dotted]	(-6.3,-.3) grid (-1.2,2.8);
	\draw[->]	(-6.5,2)--(-1,2) node[right] {$J_0^0$};
	\draw[->]	(-2,-.5)--(-2,3) node[above] {$L_0$};
	\draw[red,very thin]	(-5.3,-0.3)--(-3,1)--(-2.5,2.8);
	\foreach	\x	in	{-6,...,-3}{
		\filldraw[fill=white!30!black]	(\x,1)	circle	(1pt);
		\filldraw[fill=white!30!black]	(\x,1.5)	circle	(1pt);
		\filldraw[fill=white!30!black]	(\x,2)	circle	(1pt);
		\filldraw[fill=white!30!black]	(\x,2.5)	circle	(1pt);
		}
	\foreach	\x	in	{-6,...,-4}{
		\filldraw[fill=white!30!black]	(\x,0.5)	circle	(1pt);
		}
	\foreach	\x	in	{-6,...,-5}{
		\filldraw[fill=white!30!black]	(\x,0)	circle	(1pt);
		}
	\draw[->,very thick]	(-2.97,1.05)--(-2.6,2.4) node[right=2pt,fill=white] {\tiny $J^+_{s-1}$};
	\draw[->,very thick]	(-3.05,0.96)--(-4.5,0.16) node[right=2pt] {\tiny $J^-_{-s-1}$};
	\draw	(-3,1.9)--(-3,2.1) node[above=-2pt,fill=yellow!20!white] {\tiny\!\!\!\!\!\! $-sk$};
	\draw	(-2.1,1)--(-1.9,1) node[right,fill=white] {\tiny$\frac{s^2k}{4}$};
	\draw	(-5,3.3)	node[draw=green!50!black] {${D}^{k,s}_0$, $s<-1$};
	\fill[yellow!20!white] (6.3,-.3) -- (5.3,-0.3) -- (3,1) -- (2.5,2.8) -- (6.3,2.8) -- (6.3,-0.3);
	\draw[step=0.5,dotted]	(1.2,-0.3) grid (6.3,2.8);
	\draw[->]	(1,2)--(6.5,2) node[right] {$J^0_0$};
	\draw[->]	(2,-0.5)--(2,3) node[above] {$L_0$};
	\draw[red,very thin]	(5.3,-0.3) -- (3,1) -- (2.5,2.8);
	\foreach	\x	in	{5,...,6}
		\filldraw[fill=white!30!black]	(\x,0)	circle	(1pt);
	\foreach	\x	in	{4,...,6}
		\filldraw[fill=white!30!black]	(\x,0.5)	circle	(1pt);
	\foreach	\x	in	{3,...,6}{
		\filldraw[fill=white!30!black]	(\x,1)	circle	(1pt);
		\filldraw[fill=white!30!black]	(\x,1.5)	circle	(1pt);
		}
	\foreach	\x	in	{3,...,6}{
		\filldraw[fill=white!30!black]	(\x,2)	circle	(1pt);
		\filldraw[fill=white!30!black]	(\x,2.5)	circle	(1pt);
		}
	\draw[->,very thick]	(2.97,1.05)--(2.6,2.4) node[left=2pt,fill=white] {\tiny $J^-_{-s-1}$};
	\draw[->,very thick]	(3.05,0.96)--(4.5,0.16) node[left=2pt] {\tiny $J^+_{s-1}$};
	\draw	(3,1.9)--(3,2.1) node[above=-1pt] {\tiny $-sk$};
	\draw	(2.1,1)--(1.9,1) node[left,fill=white] {\tiny$\frac{s^2k}{4}$};
	\draw	(5.2,3.3)	node[draw=green!50!black] {${D}^{k,s}_0$, $s>1$};
\end{tikzpicture}
\end{center}

\begin{center}
\begin{tikzpicture}
	\fill[yellow!20!white] (-6.3,-.3) -- (-5.3,-.3) -- (-3,1) -- (-2.5,2.8)  -- (-6.3,2.8) -- (-6.3,0);
	\draw[step=0.5,dotted]	(-6.3,-.3) grid (-1.2,2.8);
	\draw[->]	(-6.5,2)--(-1,2) node[right] {$J_0^0$};
	\draw[->]	(-2,-.5)--(-2,3) node[above] {$L_0$};
	\draw[red,very thin]	(-5.3,-0.3)--(-3,1)--(-2.5,2.8);
	\foreach	\x	in	{-6,...,-3}{
		\filldraw[fill=white!30!black]	(\x,1)	circle	(1pt);
		\filldraw[fill=white!30!black]	(\x,1.5)	circle	(1pt);
		\filldraw[fill=white!30!black]	(\x,2)	circle	(1pt);
		\filldraw[fill=white!30!black]	(\x,2.5)	circle	(1pt);
		}
	\foreach	\x	in	{-6,...,-4}{
		\filldraw[fill=white!30!black]	(\x,0.5)	circle	(1pt);
		}
	\foreach	\x	in	{-6,...,-5}{
		\filldraw[fill=white!30!black]	(\x,0)	circle	(1pt);
		}
	\draw[->,very thick]	(-2.97,1.05)--(-2.6,2.4) node[right=2pt,fill=white] {\tiny $J^+_{s}$};
	\draw[->,very thick]	(-3.05,0.96)--(-4.5,0.16) node[right=2pt] {\tiny $J^-_{-s-1}$};
	\draw	(-3,1.9)--(-3,2.1) node[above=-2pt,fill=yellow!20!white] {\tiny\!\!\!\!\!\!\!\!\!\!\!\! $2j\!-\!sk$};
	\draw	(-2.1,1)--(-1.9,1) node[below=1pt,fill=white] {\tiny$\frac{j(j+1)}{k+2}\! +\! s(j\!+\!\frac{sk}{4})$};
	\draw	(-5,3.3)	node[draw=green!50!black] {${D}^{+,k,s}_j$, $s<-1$};
	\fill[yellow!20!white] (6.3,-.3) -- (5.3,-0.3) -- (3,1) -- (2.5,2.8) -- (6.3,2.8) -- (6.3,-0.3);
	\draw[step=0.5,dotted]	(1.2,-0.3) grid (6.3,2.8);
	\draw[->]	(1,2)--(6.5,2) node[right] {$J^0_0$};
	\draw[->]	(2,-0.5)--(2,3) node[above] {$L_0$};
	\draw[red,very thin]	(5.3,-0.3) -- (3,1) -- (2.5,2.8);
	\foreach	\x	in	{5,...,6}
		\filldraw[fill=white!30!black]	(\x,0)	circle	(1pt);
	\foreach	\x	in	{4,...,6}
		\filldraw[fill=white!30!black]	(\x,0.5)	circle	(1pt);
	\foreach	\x	in	{3,...,6}{
		\filldraw[fill=white!30!black]	(\x,1)	circle	(1pt);
		\filldraw[fill=white!30!black]	(\x,1.5)	circle	(1pt);
		}
	\foreach	\x	in	{3,...,6}{
		\filldraw[fill=white!30!black]	(\x,2)	circle	(1pt);
		\filldraw[fill=white!30!black]	(\x,2.5)	circle	(1pt);
		}
	\draw[->,very thick]	(2.97,1.05)--(2.6,2.4) node[left=2pt,fill=white] {\tiny $J^-_{-s-1}$};
	\draw[->,very thick]	(3.05,0.96)--(4.5,0.16) node[left=2pt] {\tiny $J^+_{s}$};
	\draw	(3,1.9)--(3,2.1) node[above=-1pt] {\tiny\quad\ \  $-2j\!-\!sk$};
	\draw	(2.1,1)--(1.9,1) node[below=1pt,fill=white] {\tiny$\frac{j(j+1)}{k+2}\! +\! s(j\!+\!\frac{sk}{4})$};
	\draw	(5.2,3.3)	node[draw=green!50!black] {${D}^{+,k,s}_j$, $s>1$};
\end{tikzpicture}
\end{center}
The corresponding diagrams for ${D}^{-,k,s}_j$ are obtained via Proposition \ref{prop:sfequiv}. It is obvious from these diagrams that the homogeneous eigenspaces become trivial when moving down far enough along constant $J^0_0$ eigenvalue.\\

\noindent
It is of interest to determine how the modules ${D}^{\pm,k,s}_j$ decompose as $\mathcal{U}(\mathfrak{sl}_2)$--modules with respect to various $\mathfrak{sl}_2$ subalgebras of $\widehat{\mathfrak{sl}}_2$. In full generality this is outside the scope of this article, but the following result will be useful in the next section.

\begin{prop}\label{prop:indec}
	No module of the type ${D}^{k,s}_0$ or ${D}^{\pm,k,s}_j$ with $j>k/2$ decompose in a direct sum of unitarizable $\mathcal{U}(\mathfrak{sl}_2)$--modules w.r.t. $\widetilde{\omega}$, where $\mathfrak{sl}_2$ is the horizontal subalgebra. Furthermore, with the exception of ${D}^k_0$ these modules are reducible but not fully decomposable as modules over the horizontal $\mathcal{U}(\mathfrak{sl}_2)$.
\end{prop}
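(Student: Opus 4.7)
The plan is to prove the two claims in turn, exploiting the classification of unitary irreducible $(\mathfrak{sl}_2,\widetilde{\omega})$--modules together with highest/lowest weight considerations suggested by the weight diagrams above.

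For the first claim, recall that every unitary irreducible $(\mathfrak{sl}_2,\widetilde{\omega})$--module carrying a horizontal highest weight vector is either trivial (weight $0$) or a discrete series $\uD^-_{j'}$ with $j'\leq -1/2$, whose highest weight is $\leq -1$; analogously a unitary lowest weight vector has weight $\geq 0$. If the module under consideration decomposed as $\bigoplus_\alpha U_\alpha$ with each $U_\alpha$ unitarizable, then projecting a horizontal highest weight vector onto the summands would show that each nonzero component is a highest weight vector in $U_\alpha$ of the same weight, forcing that weight to be $\leq 0$. The proof then reduces to exhibiting, in each module of the list, a horizontal highest weight vector of strictly positive weight (or a lowest weight vector of strictly negative weight). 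For $D^{-,k}_j$ the vectors $(J^+_{-1})^n|j,0\rangle_-$ are annihilated by $J^+_0$, since $[J^+_0,J^+_{-1}]=0$ and $J^+_0|j,0\rangle_-=0$, and have $J^0_0$--weight $2j+2n$, strictly positive for $n$ large enough; for $D^k_0$ the vectors $(J^+_{-1})^n|0\rangle$ with $n\geq 1$ suffice, and analogous lowest weight vectors $(J^-_{-1})^n|j,0\rangle_+$ serve in $D^{+,k}_j$. For the spectrally flowed modules I would read off the analogous horizontal extremal vectors from the weight diagrams, using $\vartheta_s(J^+_0)=J^+_{-s}$ and its analogues to translate the highest/lowest weight property between old and new actions.

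For the second claim, reducibility is immediate: for instance $D^-_j\subset D^{-,k}_j$ is a proper horizontal submodule at $L_0$--level $0$, and the cyclic vector generates the analogous proper submodule in the other cases. For non--full--decomposability the key observation is the following: if a module is fully decomposable as a horizontal $\ens$--module and $w$ is a horizontal highest weight vector of weight $m\in\N_0$, then each nonzero component of $w$ in the semisimple decomposition lies in a simple summand isomorphic to the finite dimensional simple $\ens$--module of highest weight $m$ (dimension $m+1$). Hence $w$ lies in a finite direct sum of such summands, the cyclic submodule $\mathcal{U}(\mathfrak{sl}_2)w$ is finite dimensional, and in particular $(J^-_0)^{m+1}w=0$. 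The plan is therefore to produce, in every module of the list except $D^k_0$, a horizontal highest (or lowest) weight vector $w$ of non--negative integer weight $m$ with $(J^-_0)^{m+1}w\neq 0$. Concretely for $D^{-,k}_j$ I would take $w=(J^+_{-1})^n|j,0\rangle_-$ with $n$ the smallest integer such that $m:=2j+2n\geq 0$, and expand $(J^-_0)^{m+1}w$ by commuting each $J^-_0$ past the factors of $J^+_{-1}$ using $[J^-_0,J^+_{-1}]=-J^0_{-1}$ together with the horizontal $\ens$--action on $|j,0\rangle_-\in D^-_j$; the result is a nonzero linear combination of PBW basis vectors of $D^{-,k}_j=\mathcal{U}(\mathfrak{r}_-)\otimes_\C D^-_j$. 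The constructions for $D^{+,k}_j$ and for the spectrally flowed modules are analogous, the tip of the appropriate weight wedge providing the horizontal extremal vector and the same non--vanishing computation applying after translating through the spectral flow automorphism.

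The module $D^k_0$ is the genuine exception: since both $J^+_0$ and $J^-_0$ annihilate the vacuum $|0\rangle$, the analogous computation collapses after commuting $J^-_0$ through sufficiently many factors of $J^+_{-1}$. Structurally, $D^k_0=\mathcal{U}(\mathfrak{r}_-)|0\rangle$ has each $L_0$--homogeneous subspace finite dimensional as a horizontal $\ens$--module, hence semisimple by Weyl's theorem, yielding a full decomposition into finite dimensional simple summands. The main obstacle I foresee is the explicit PBW verification that $(J^-_0)^{m+1}w\neq 0$; although routine combinatorics in $\ena$, this must be done carefully to isolate a nonzero coefficient in front of a distinguished PBW basis vector --- the cleanest target being a monomial of the form $J^0_{-1}(J^+_{-1})^{n-1}|j,m+1\rangle_-$, whose coefficient arises from a single term in the expansion and reduces to a nonzero product of positive square--root factors coming from the $J^-$--action on $D^-_j$.
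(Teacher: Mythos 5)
Your argument is correct and essentially the same as the paper's: both parts rest on the extremal vectors $(J^\mp_{-1})^n|j,0\rangle_\pm$ along the boundary of the weight wedge, whose horizontal weights eventually acquire the wrong sign for unitarizability, and whose horizontal cyclic submodules are infinite--dimensional Verma modules --- equivalently $(J^-_0)^{m+1}w\neq 0$ --- hence reducible but indecomposable; the paper states this directly for ${D}^{+,k}_j$ using lowest weight vectors where you argue the contrapositive for ${D}^{-,k}_j$ using highest weight vectors. One small slip to fix: your target monomial $J^0_{-1}(J^+_{-1})^{n-1}|j,m+1\rangle_-$ has $J^0_0$--weight $-m-4$ rather than the required $-m-2$; the clean single--term witness is the commutator--free monomial $(J^+_{-1})^n\oti|j,m+1\rangle_-$ in $\mathcal{U}(\mathfrak{r}_-)\oti D^-_j$, whose coefficient in $(J^-_0)^{m+1}w$ is $\prod_{i=1}^{m+1}\sqrt{i(i-1-2j)}\neq 0$ since no commutator term can produce a monomial free of $J^0_{-1}$ and $J^-_{-1}$.
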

\begin{proof}
	The first statement is obviously true for ${D}^k_0$, since, as a $\mathcal{U}(\mathfrak{sl}_2)$--module it is decomposable in a direct sum of finite dimensional submodules, not all of which are one dimensional. Consider next ${D}^{+,k}_j$. Note that all vectors of the type $\left(J^-_{-1}\right)^n|j,0\rangle$ are lowest weight vectors w.r.t. the horizontal subalgebra, but for $n>-j$ this has negative $J_0^0$ eigenvalue. The corresponding lowest weight module therefore does not belong to the class of unitarizable modules. Inspection of the weight diagrams of the rest of the modules reveals that moving far enough upwards along the boundary one will always encounter a change in sign of the $J_0^0$ eigenvalues of the highest or lowest weight vectors along the boundary, and the same argument thus shows the first statement for all of the remaining cases.
The second statement will be shown for ${D}^{+,k}_j$, the proof is completely analogous for the other modules. Choose an integer $n$ such that $j+n>0$ and consider the $\mathcal{U}(\mathfrak{sl}_2)$--module freely generated from the lowest weight vector $\omega^{(n)}\defas (J^-_{-1})^n|j,0\rangle$, satisfying $J^0_0\omega^{(n)}=-2(j+n)\omega^{(n)}$. This module, with a basis given by the vectors $(J^+_0)^m\omega^{(n)}$, is thus equivalent to the Verma module with lowest weight $-2(j+n)<0$. It immediately follows that it is not fully decomposable, with a maximal submodule generated from the lowest weight vector $(J^+_0)^{2(j+n)+1}\omega^{(n)}$. Further analysis shows that the horizontal submodules in general do not decompose as a direct sum of Verma modules, the structure of embedded submodules is more involved.
\end{proof}

\subsection{A contragredient of \texorpdfstring{$\mathcal{U}(\widehat{\mathfrak{sl}}_2)$}{U(affine sl2)}--modules}\label{sec:cont}
Recall that for an integrable module $M$ of $\mathcal{U}(\widehat{\mathfrak{g}})$ one defines its contragredient, $M^+$, by using the $\N_0$ grading given by $L_0$, $M=\oplus_{n\in\N_0}M_n$. As a vector space $M^+$ is the restricted dual of $M$, $M^+=\oplus_{n\in\N_0}M_n^*$, and the action of $X_m$ is given by minus the transpose of $X_{-m}$. This definition works well for integrable modules since then each homogeneous subspace $M_n$ is finite dimensional. Of the modules considered in this article, only ${D}^{k}_0$ is integrable, and it seems unlikely that the same notion of contragredient will be useful.

One conceivable modification would be to equip each subspace $M_n$ with the structure of a Hilbert space, and defining a ``restricted conjugate'' representation. The obvious first guess would then be to let $M_0$ be a unitary representation of the horizontal subalgebra, inducing an inner product on the remaining levels by the Shapovalov form (or the analogue with the involutive anti automorphism $\omega$ rather than the Cartan involution), and completing each homogeneous subspace with respect to the inner product norm. The resulting sesquilinear form, however, is not positive definite in ${D}^{\pm,k,s}_j$ (hence also not in $ \uD^{\pm,k,s}_j$), as follows from the proofs of the no ghost theorems in \cite{Hwang,MaOg}, and does not give an inner product. Note that this holds both with respect to $\omega$ and to the Cartan involution (corresponding to the affine version of $su(2)$), as can be read off from the Kac--Kazhdan determinant. Next one could try introducing by hand inner products on each $M_n$ such that the horizontal subalgebra is represented unitarily. The first statement of Proposition \ref{prop:indec} implies that the adjoints of the generators $J^0_0$, $J^\pm_0$ would have to be different in different submodules of $M_n$. The situation, however, is worse than that. The second statement of Proposition \ref{prop:indec} implies that some subspaces $M_n$ cannot be equipped with inner products such that the horizontal subalgebra acts unitarily, since such structures cannot exist on reducible but indecomposable modules.

It might still be possible to introduce a topology on the homogeneous subspaces and define a contragredient on the ``restricted continuous dual'' of $M$. This possibility will not be investigated here. We will instead use the full $\Z\times\Z$ grading to define a contragredient. 
\begin{defn}\label{def:cont}
	Let $V$ denote any module of the type ${D}^{k,s}_0$, or ${D}^{\pm,k,s}_j$, for $s\in\Z$ and $j<0$, and denote by $V_{m,n}$ the homogeneous subspace of grade $(m,n)\in\Z\times\Z$. The contragredient of $V$, denoted $V^\cont$, is defined as the vector space
	\begin{equation}
		V^\cont \defas \bigoplus_{(m,n)\in\Z\times\Z}V_{m,n}^*,
	\end{equation}
	with action defined by
	\begin{align}
		J^a_n\varphi(v) &\defas - \varphi(J^a_{-n}v),\ a\in\{0,\pm\}\\
		K\varphi(v) &\defas \varphi(Kv)\\
		d\varphi(v) &\defas \varphi(dv),
	\end{align}
	and extended by linearity, for any $\varphi\in V^\cont$ and $v\in V$. In the formulas defining the action it is assumed that $V_{m,n}^*$ is extended trivially to all $V_{k,l}$ for $(m,n)\neq(k,l)$.
\end{defn}
Note that, applied to integrable modules Definition \ref{def:cont} becomes equivalent to the conventional definition.
That this definition is sensible follows from
\begin{thm}\label{thm:cont}
	For any $k\in-\N_{>2}$, $j\in\{\frac{k+1}{2},\ldots,-\frac{1}{2}\}$, $s\in\Z$ the following equivalences hold
	\begin{align}
		\big({D}^{k,s}_0\big)^\cont &\cong {D}^{k,-s}_0\\
		\big({D}^{\pm,k,s}_j\big)^\cont &\cong {D}^{\mp,k,-s}_j
	\end{align}
\end{thm}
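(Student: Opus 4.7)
The approach would be to first reduce the theorem to the case $s=0$ using a compatibility between spectral flow and the antiautomorphism defining the contragredient, and then to establish the three unflowed equivalences separately by exhibiting the expected cyclic vector in $V^\cont$, invoking the universal property of Verma modules, and finishing with a comparison of graded dimensions.

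Let $\tau$ be the antiautomorphism of $\ena$ determined by $\tau(J^a_n)=-J^a_{-n}$, $\tau(K)=K$, $\tau(d)=d$, so that Definition \ref{def:cont} reads $\rho_\cont=\rho\circ\tau$. A direct check on the generators of $\ena$ and the definition of $\vartheta_s$ yields $\tau\circ\vartheta_s=\vartheta_{-s}\circ\tau$, equivalently $\vartheta_s\circ\tau=\tau\circ\vartheta_{-s}$ since $\tau^2=\mathrm{id}$. For any module $V^k=(V^k,\rho)$ this gives
\begin{equation*}
    (V^{k,s})^\cont = \bigl((V^k)^*,\,\rho\circ\vartheta_s\circ\tau\bigr) = \bigl((V^k)^*,\,\rho\circ\tau\circ\vartheta_{-s}\bigr) = (V^\cont)^{-s},
\end{equation*}
so it is enough to prove the unflowed statements $(D^k_0)^\cont\cong D^k_0$ and $(D^{\pm,k}_j)^\cont\cong D^{\mp,k}_j$.

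For each such $V$ I would fix the distinguished cyclic vector $v_0\in V_{0,0}$ and its dual functional $\varphi_0\in V^*_{0,0}\subset V^\cont$, and verify that $\varphi_0$ satisfies precisely the defining conditions for the cyclic vector of the target $V'$. Taking $V=D^{+,k}_j$ for concreteness, the eigenvalue equations $J^0_0\varphi_0=2j\varphi_0$, $K\varphi_0=k\varphi_0$, $d\varphi_0=-\tfrac{j(j+1)}{k+2}\varphi_0$ drop out of Definition \ref{def:cont}, while the annihilation conditions $J^+_0\varphi_0=0$ and $J^a_n\varphi_0=0$ ($n>0$, $a\in\{0,\pm\}$) reduce to the claim that the relevant source subspaces in $V$ are trivial. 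The subspace forcing nonvanishing of $J^+_0\varphi_0$ would live at $J^0_0$-grade $+2$ above $v_0$, which is empty because $D^{+,k}_j$ is bounded below in $J^0_0$-weight by its lowest weight $-2j$; the subspaces forcing nonvanishing of the $J^a_n\varphi_0$ lie at $L_0$-grade $-n<0$ relative to $v_0$, and are empty by Proposition \ref{prop:grading}(ii). The verifications for $V=D^{-,k}_j$ and $V=D^k_0$ go through in exactly the same manner.

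Since each target $V'$ is a Verma module, the universal property then produces an $\ena$-morphism $\Phi:V'\to V^\cont$ sending the cyclic vector of $V'$ to $\varphi_0$; simplicity of $V'$ (under $j>k/2$, respectively $k\notin\N_0$) together with $\varphi_0\neq 0$ forces $\Phi$ to be injective. As $\Phi$ preserves the $\Z\times\Z$-grading and each graded piece is finite-dimensional by Proposition \ref{prop:grading}(i), surjectivity reduces to matching graded dimensions: by construction $\dim V^\cont_{m,n}=\dim V_{m,-n}$, while $\dim V_{m,-n}=\dim V'_{m,n}$ follows from an isomorphism $V'\cong V^\sigma$, with $\sigma$ the reflection automorphism of $\widehat{\mathfrak{sl}}_2$ given by $J^{\pm}_n\mapsto J^{\mp}_n$, $J^0_n\mapsto -J^0_n$, $K\mapsto K$, $d\mapsto d$ (to be verified on the cyclic vector by inspecting eigenvalues and annihilation conditions). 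The only delicate step in the whole proof is the annihilation verification on $\varphi_0$ in the previous paragraph, which relies essentially on the one-sided boundedness of the grading and thus on Proposition \ref{prop:grading}(ii).
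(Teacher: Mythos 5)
Your proposal is correct and follows essentially the same route as the paper: reduce to $s=0$ via the compatibility of spectral flow with the contragredient (the paper's Lemma \ref{lem:sfcont}, which you phrase as $\tau\circ\vartheta_s=\vartheta_{-s}\circ\tau$), identify the dual of the cyclic vector as a highest/lowest weight vector, embed the simple Verma module, and finish by matching graded dimensions. The only substantive variation is that final step, where you obtain $\dim V_{m,-n}=\dim V'_{m,n}$ from a twist by the Chevalley-type involution $\sigma$ instead of the paper's explicit PBW count (both work; yours is slightly slicker); one small slip in wording: the subspace obstructing $J^+_0\varphi_0=0$ sits at $J^0_0$-eigenvalue two units \emph{below} the lowest weight $-2j$ of $D^{+,k}_j$, not above it, which is exactly why boundedness from below makes it vanish.
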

The following lemma is useful in proving the theorem.
\begin{lem}\label{lem:sfcont}
	For any $\mathcal{U}(\widehat{\mathfrak{sl}}_2)$--module $V$ we have the \emph{equality}
	\[ \left( V^s\right)^\cont = \left(V^\cont\right)^{-s},\]
	where the superscript $s$ denotes the action of spectral flow with $s\in\Z$.
\end{lem}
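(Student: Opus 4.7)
The plan is to verify the claimed equality at the level of underlying vector spaces first, and then to check that the two representation morphisms of $\ena$ agree on a generating set. Note that the lemma asserts a literal equality rather than an isomorphism, so both ingredients have to match on the nose.

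For the underlying vector spaces, observe that $V$ and $V^s$ share the same underlying vector space, and that $\vartheta_s$ preserves the Cartan subalgebra $\mathfrak{h}=\C J^0_0\oplus\C K\oplus\C d$. Since moreover $K$ acts as the scalar $k$, the joint $(L_0,J^0_0)$--eigenspaces of $V$ with respect to the action $\rho\circ\vartheta_s$ coincide, as subspaces of $V$, with those with respect to $\rho$; only the grading labels undergo a rigid shift dictated by Definition \ref{def:sflow}. Hence $\bigoplus_{(m,n)}(V^s)_{m,n}^{*}=\bigoplus_{(m,n)}V_{m,n}^{*}$ as subspaces of $V^{*}$, and this common vector space carries both $(V^s)^\cont$ and $(V^\cont)^{-s}$.

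For the actions, I would evaluate both sides on each generator in $\{J^\pm_n,J^0_n,K,d\}$. On $J^\pm_n$, unwinding Definitions \ref{def:sflow}, \ref{def:sfmod}, and \ref{def:cont} yields in both cases the map $\varphi\mapsto\bigl(v\mapsto-\varphi(J^\pm_{-n\mp s}v)\bigr)$. For $J^0_n$, the correction $-sK\delta_n$ appearing in $\vartheta_s(J^0_{-n})$ on the left, and the correction $+sK\delta_n$ appearing in $\vartheta_{-s}(J^0_n)$ on the right, both contribute the same scalar $sk\delta_n\,\varphi$, because the contragredient reverses signs on $J^0_n$ but not on $K$. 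For $d$, the term $-\tfrac{s^2}{4}K$ is common to $\vartheta_{\pm s}(d)$ and contributes identically on both sides, while the $\pm\tfrac{s}{2}J^0_0$ terms produce matching scalar corrections once the contragredient sign flip on $J^0_0$ is accounted for. Finally $\vartheta_s$ fixes $K$, so $K$ acts as the scalar $k$ on either side.

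The only step requiring any real care is the computation on $d$, where the contragredient treats $K$ and $J^0_0$ with opposite signs; but this is pure bookkeeping rather than a conceptual obstacle, and the proof reduces to a straightforward check on generators.
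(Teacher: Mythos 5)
Your proposal is correct and follows essentially the same route as the paper: both reduce the claim to comparing the two representation morphisms generator by generator via Definitions \ref{def:sflow}, \ref{def:sfmod} and \ref{def:cont}, after noting that the underlying (restricted dual) vector spaces coincide. Your explicit justification of the vector-space equality and the sign bookkeeping on $J^0_n$ and $d$ match the computation the paper records (with $L_0=-d$) before concluding by comparison.
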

\begin{proof}
	Note that the underlying vector spaces of $\left( V^s\right)^\cont$ and $\left(V^\cont\right)^{-s}$ are the same, so it suffices to check the action. Definitions \ref{def:sflow} and \ref{def:cont} immediately give
	\begin{align}
		\left. \begin{array}{ll}
				J^\pm_n\varphi & = - \varphi\circ J^\pm_{-n\mp s}\\
				J^-_n\varphi & = -\varphi\circ(J^0_{-n} - sK\delta_n)\\
				L_0\varphi &= \varphi\circ(L_0-\frac{s}{2}J^0_0+\frac{s^2}{4}K)
			\end{array}
			\right\} & \text{ on } \left( V^s\right)^\cont \\
		\left. \begin{array}{ll}
				J^\pm_n\varphi & = - \varphi\circ J^\pm_{-n\pm s}\\
				J^-_n\varphi & = -\varphi\circ(J^0_{-n} +sK\delta_n)\\
				L_0\varphi &= \varphi\circ(L_0 +\frac{s}{2}J^0_0+\frac{s^2}{4}K)
			\end{array}
			\right\} & \text{ on } \left( V^\cont\right)^s,
	\end{align}
	where on the right hand side representation morphisms of $V$ have been left out. Comparison immediately gives the proof.
\end{proof}
The means to prove the theorem are now at hand.
\begin{proof}[Proof of Theorem \ref{thm:cont}]
Consider first the equivalence $\big({D}^{-,k}_j\big)^\cont\cong{D}^{+,k}_j$, and fix $\varphi\in\big({D}^{-,k}_j\big)_{0,0}^*$ by $\varphi(|j,0\rangle)=1$. It follows immediately from Definition \ref{def:cont} that
\begin{align}
	J^0_0\varphi &= -2j\varphi & L_0\varphi &= \frac{j(j+1)}{k+2}\varphi\nonumber\\
	J^-_0\varphi &= 0 & J^a_n\varphi &= 0,\ n\in\N, a\in\{0,\pm\},\nonumber
\end{align}
i.e. $\varphi$ is a highest weight vector with the same weight as the highest weight vector of ${D}^{+,k}_j$. The submodule generated from $\varphi$ is then equivalent to a quotient module of the corresponding Verma module, but since the latter is simple it follows that the submodule generated from $\varphi$ is equivalent to ${D}^{+,k}_j$. Under this inclusion it follows that $({D}^{+,k}_j)_{n,m}\subset\big({D}^{-,k}_j\big)^*_{n,-m}$, and since all homogeneous subspaces are finite dimensional we know that
\[\mathrm{dim}\Big( \big({D}^{-,k}_j\big)_{n,-m}^*\Big) = \mathrm{dim}\Big(({D}^{-,k}_j)_{n,-m}\Big). \]
Consider Poincar\'e--Birkhoff--Witt bases of ${D}^{\pm,k}_j$, labeled by triples $\{(\vec{k}^+,\vec{k}^0,\vec{k}^-)\}$ where each $\vec{k}^a$ is a $\N_0$ sequence with almost all entries $=0$.  In the case of ${D}^{\pm,k}_j$ the sequences $\vec{k}^0$ and $\vec{k}^\mp$ are required to have vanishing first entries, $k^0_0=0$, $k^\mp_0=0$. Using the (abuse of) notation $d^\pm_{nm}=\mathrm{dim}\Big(({D}^{\pm,k}_j)_{n,m}\Big)$, the dimensions of the homogeneous subspaces can now be expressed as
\begin{align}
	d^+_{nm} &= \Bigg| \left\{ (\vec{k}^+,\vec{k}^0,\vec{k}^-)\bigg| \sum_{i\in\N_0} i(k^+_i+k^0_i+k^-_i)=n, \sum_{i\in\N_0} (k^+_i-k^-_i) = 2m \right\} \Bigg|\label{dim1}\\
	d^-_{nm} &= \Bigg| \left\{ (\vec{k}^+,\vec{k}^0,\vec{k}^-)\bigg| \sum_{i\in\N_0} i(k^+_i+k^0_i+k^-_i)=n, \sum_{i\in\N_0} (k^-_i-k^+_i) = -2m \right\} \Bigg|.\label{dim2}
\end{align}
Note that \eqref{dim1} and \eqref{dim2} are not identical due to the different constraints on the sequences $\vec{k}^\pm$. It immediately follows that the map
\[ (\vec{k}^+,\vec{k}^0,\vec{k}^-)\mapsto (\vec{k}^-,\vec{k}^0,\vec{k}^+)\] induces a linear isomorphism $({D}^{+,k}_j)_{n,m}\stackrel{\sim}{\rightarrow}({D}^{-,k}_j)_{n,-m}$. Since the dimensions of every homogeneous subspace of the respective modules coincide we have shown that the submodule generated from $\varphi$ spans $\big({D}^{-,k}_j\big)^\cont$, and thus $\big({D}^{-,k}_j\big)^\cont\cong{D}^{+,k}_j$. The equivalence $\big({D}^{+,k}_j\big)^\cont\cong{D}^{-,k}_j$ follows immediately by applying the contragredient to $\big({D}^{-,k}_j\big)^\cont\cong{D}^{+,k}_j$ since $\left((V)^\cont\right)^\cont\cong V$ (this is a trivial consequence of the definition of contragredient).

As stated above, the definition of contragredient given here coincides with the conventional definition on integrable modules, so it follows trivially that $\left({D}^k_0\right)^\cont\cong{D}^k_0$.

The remaining equivalences now follow from the ones already shown together with Lemma \ref{lem:sfcont}.
\end{proof}

\noindent
Having defined a contragredient of (a class of) $\ena$--modules one would like to know what structure $\mathcal{C}^k$ inherits from this. Consider therefore the two--point conformal blocks on $\C\mathbb{P}^1$.
\begin{thm}\label{thm:2ptblocks}
	Let $0,\infty\in \C\mathbb{P}^1$ be equipped with the standard quasi global coordinates $z$ and $w=1/z$. We then have
	\[\mathcal{H}\left((0,\infty),\left(D^{\pm,k,s}_i,D^{\mp,k,-t}_j\right)\right)\cong\delta_{s,t}\delta_{i,j}\C.\]
\end{thm}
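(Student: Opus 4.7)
The plan is to rewrite the conformal block space as a $\Hom$-space in the category of $\ena$-modules via the contragredient, and then apply Theorem \ref{thm:cont} together with Proposition \ref{prop:sfequiv}. Since $\mathcal{O}_{\C\mathbb{P}^1 \setminus \{0, \infty\}} = \C[z, z^{-1}]$ and $z^n = w^{-n}$ in the coordinate $w = 1/z$ near $\infty$, the Ward identities for a functional $\beta \in \mathcal{H}((0, \infty), (D^{\pm,k,s}_i, D^{\mp,k,-t}_j))$ take the form
$$\beta((J^a_n v) \otimes w) + \beta(v \otimes (J^a_{-n} w)) = 0, \quad a \in \{0, \pm\},\ n \in \Z.$$
Currying $\beta$ into $\hat\beta \colon D^{\pm,k,s}_i \to (D^{\mp,k,-t}_j)^*$ by $\hat\beta(v)(w) := \beta(v \otimes w)$ translates these identities precisely into the assertion that $\hat\beta$ is a $\ena$-intertwiner, where the algebraic dual $(D^{\mp,k,-t}_j)^*$ carries the natural extension of the contragredient action of Definition \ref{def:cont}.

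The crucial middle step is to show that $\hat\beta$ factors through the restricted dual $(D^{\mp,k,-t}_j)^\cont$. Since $D^{\pm,k,s}_i$ is cyclic and the restricted dual is a $\ena$-submodule of the algebraic dual, it suffices to control the image $\hat\beta(v_0)$ of the cyclic vector; using Lemma \ref{lem:sfcont} one may further reduce to the case $s = t = 0$. For $s = 0$, $v_0$ is annihilated by $\mathfrak{r}_+$ (by Definition \ref{def:prolmod}) and by $J^{\mp}_0$ (the horizontal annihilator of a highest or lowest weight vector of $D^{\pm}_i$), so the intertwiner property combined with Definition \ref{def:cont} forces $\hat\beta(v_0)$ to vanish on $\mathfrak{r}_- D^{\mp, k}_j + J^{\mp}_0 D^{\mp, k}_j$. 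By the PBW decomposition $D^{\mp, k}_j \cong \mathcal{U}(\mathfrak{r}_-) \otimes_\C D^{\mp}_j$ this quotient collapses to the one-dimensional space $\C |j, 0\rangle_{\mp}$, so the support of $\hat\beta(v_0)$ is finite-dimensional and in particular lies in $(D^{\mp,k,-t}_j)^\cont$; a matching of $J^0_0$-eigenvalues on this one-dimensional quotient additionally kills $\hat\beta(v_0)$ unless $i = j$. This support control is the main technical obstacle.

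Combining with Theorem \ref{thm:cont} one arrives at
$$\mathcal{H}((0, \infty), (D^{\pm,k,s}_i, D^{\mp,k,-t}_j)) \cong \Hom_{\ena}(D^{\pm,k,s}_i, (D^{\mp,k,-t}_j)^\cont) \cong \Hom_{\ena}(D^{\pm,k,s}_i, D^{\pm,k,t}_j).$$
The two modules on the right are both simple (being in the range $i, j > k/2$) and of countable complex dimension, so Schur's lemma in the form valid for countable-dimensional simple modules over the algebraically closed field $\C$ gives $\End_{\ena}(D^{\pm,k,s}_i) = \C$, whence this $\Hom$-space is at most one-dimensional and nonzero exactly when the two modules are equivalent. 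Proposition \ref{prop:sfequiv} asserts that within a single sign family there are no equivalences other than the trivial ones, so we obtain $\Hom \cong \C$ iff $s = t$ and $i = j$ and $\{0\}$ otherwise, yielding the claimed $\delta_{s, t}\delta_{i, j}\C$.
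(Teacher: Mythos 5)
Your overall strategy --- currying $\beta$ into an intertwiner $\hat\beta\colon D^{\pm,k,s}_i\to (D^{\mp,k,-t}_j)^*$, showing it lands in the restricted dual, and then invoking Theorem \ref{thm:cont}, Schur's lemma and Proposition \ref{prop:sfequiv} --- is sound and genuinely different from the paper's proof, which works directly with the coinvariance conditions, uses PBW bases to pin every value of $\beta$ to $\beta(|0\rangle\oti\varphi_0)$, and then exhibits the Ishibashi state $\sum_{\vec p}\varphi_{\vec p}\oti|\vec p\rangle$ explicitly. Your route trades that explicit construction for simplicity of the modules plus the classification of their equivalences; in particular the nonvanishing of the block for $i=j$, $s=t$ comes for free from $\id\in\End_{\ena}(D^{\pm,k,s}_i)$. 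For $s=t$ (legitimately reduced to $s=t=0$) and arbitrary $i,j$ your argument is complete.

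There is, however, a genuine gap in the case $s\neq t$. Lemma \ref{lem:sfcont}, together with the observation that twisting both insertions by $\vartheta_u$ merely permutes the coinvariance operators $J^a_n\oti\mathbf{1}+\mathbf{1}\oti J^a_{-n}$, only lets you transfer spectral flow between the two points: it preserves the difference $s-t$ and so reduces $(s,t)$ to $(s-t,0)$, not to $(0,0)$. When $s-t\neq 0$ the cyclic vector of $D^{\pm,k,s-t}_i$ is no longer annihilated by $\mathfrak{r}_+\oplus\C J^\mp_0$ but by the flowed subalgebra $\vartheta_{t-s}\bigl(\mathfrak{r}_+\oplus\C J^\mp_0\bigr)$, and the quotient of the target by the image of the corresponding operators is no longer $\C|j,0\rangle_\mp$; the theorem requires it to vanish, and that needs a separate argument --- the paper isolates precisely this case and proves it in Appendix \ref{sec:contproof} by a weight-counting argument. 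The gap is repairable inside your framework: since $v_0$ is an $(L_0,J^0_0)$-weight vector, the intertwining property confines $\hat\beta(v_0)$ to a single homogeneous subspace of the target, which is finite dimensional by Proposition \ref{prop:grading}; this places $\hat\beta(v_0)$ in the restricted dual for \emph{all} $s,t$ with no quotient computation, and a comparison of the weight lists \eqref{wts1}--\eqref{wts2} shows the relevant homogeneous subspace is empty when $s\neq t$. (Note also that intertwining with respect to $L_0$ is not among the Ward identities and must be derived from the Sugawara construction, as in \eqref{eq:vircoinv} of the paper's proof.)
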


Elements in a space of two--point conformal blocks on $\C\mathbb{P}^1$ are known as Ishibashi states, and there is a standard way to obtain them \cite{Ishibashi}. The line of proof here will largely follow that construction, with slight modifications. Most importantly, as formulated here there is no need for an inner product on the modules.
It is furthermore likely that for $s=t=0$, the result is possible to extract from the Ishibashi states constructed in \cite{Hikida}, although the formalism differs significantly.
The case $s\neq t$ requires results from the next section, and is therefore is relegated to appendix \ref{sec:contproof}. We focus here on the case $s=t$.
\begin{proof} ($s=t$)
Note first that $\mathfrak{sl}_2(\C\mathbb{P}^1-0-\infty)\cong \mathfrak{g}[z,z^{-1}]$, and the action of this algebra amounts to the action of elements $J^a_n\oti\mathbf{1}+\mathbf{1}\oti J^a_{-n}$ for $a=0,\pm$, $n\in\Z$. Furthermore, use Theorem \ref{thm:cont} and replace $D^{\mp,k,-t}_j$ with $\big(D^{\pm,k,t}_j\big)^\cont$.
	Consider first the case of ${D}^{+,k}_j$ with $i=j$, $s=t=0$. A coinvariant is then an element $\beta\in\left({D}^{+,k}_j\oti\big({D}^{+,k}_j\big)^\cont\right)^*$ satisfying
	\begin{equation}
		\beta\circ(J^a_n\oti\mathbf{1}+\mathbf{1}\oti J^a_{-n})=0.
		\label{eq:coinvcond}
	\end{equation}
	Let $\{|\vec{p}\rangle\}$ denote a basis of ${D}^{+,k}_j$, adapted to the weight decomposition, and let $\{\varphi_{\vec{p}}\}$ denote the dual basis of the contragredient module. Note that ${D}^{+,k}_j$ can be identified with a subspace of the space of linear functionals on $\big({D}^{+,k}_j\big)^\cont$, and use this identification to express $\beta$ as
	\begin{equation}
		\beta=\sum_{\vec{p},\vec{q}}c_{\vec{p},\vec{q}}\varphi_{\vec{p}}\oti|\vec{q}\rangle.
	\label{eq:coinvansatz}
	\end{equation}
	With no restrictions on $c_{\vec{p},\vec{q}}\in\C$ it is clear that any element in the algebraic dual can be represented in this way. Using the Sugawara form of $L_0$ it is easily shown that the coinvariance conditions implies
	\begin{equation}
		\beta\circ(L_0\oti\mathbf{1}-\mathbf{1}\oti L_0)=0
	\label{eq:vircoinv}
	\end{equation}
	(note the minus sign). Assume $\beta$ is a nonzero coinvariant. Then there exists at least one basis element $|\vec{r}\rangle\oti\varphi_{\vec{s}}$ such that $\beta(|\vec{r}\rangle\oti\varphi_{\vec{s}})\neq 0$. The coinvariance conditions \eqref{eq:coinvcond} and \eqref{eq:vircoinv} imply that if $|\vec{r}\rangle$ has grade $(n,m)$ then $\varphi_{\vec{s}}$ has grade $(n,-m)$. Using the fact that both modules are Verma modules, and expressing the bases in PBW bases, the relation \eqref{eq:coinvcond} can be used to relate $\beta(|\vec{r}\rangle\oti\varphi_{\vec{s}})$ to $\beta(|0\rangle\oti\varphi_0\rangle$, resulting in
	\begin{equation}
		0\neq \beta(|\vec{r}\rangle\oti\varphi_{\vec{s}})=\alpha_{\vec{r},\vec{s}}\beta(|0\rangle\oti\varphi_0),
	\label{eq:coinvhw}
	\end{equation}
	where $|0\rangle$ respectively $\varphi_0$ denote the highest/lowest weight vectors, for some $\alpha_{\vec{r},\vec{s}}\in\C^\times$. Clearly, for any element $v$ such that $\beta(v)\neq 0$, equation \eqref{eq:coinvhw} has to be satisfied with some $\alpha_v\in\C^\times$. For two nonzero solutions  of \eqref{eq:coinvcond}, $\beta$ and $\beta'$ , also $\beta+\lambda \beta'$ is a solution for any $\lambda\in\C$, and there exists a $\lambda$ such that $\beta(|0\rangle\oti\varphi_0)+\lambda\beta'(|0\rangle\oti\varphi_0)=0$. It thus follows that any two solutions to the coinvariance conditions \eqref{eq:coinvcond} are proportional, and the linear space of solutions is at most one--dimensional. To show that it is not zero--dimensional, show that the nonzero
	\[\beta=\sum_{\{\vec{p}\}}\varphi_{\vec{p}}\oti |\vec{p}\rangle\]
	indeed solves \eqref{eq:coinvcond} (note that this ansatz pairs the grades $(n,m)$ and $(n,-m)$ as required). If it did not, there would exist a basis element $|\vec{r}\rangle\oti\varphi_{\vec{s}}$ such that $\beta((J^a_n\oti\mathbf{1}+\mathbf{1}\oti J^a_{-n})|\vec{r}\rangle\oti\varphi_{\vec{s}})\neq 0$ for some $a$ and $n$. This cannot be, since
	\begin{align}
		\beta((J^a_n\oti\mathbf{1}+\mathbf{1}\oti J^a_{-n})|\vec{r}\rangle\oti\varphi_{\vec{s}}) & = \sum_{\{\vec{p}\}}\left(\varphi_{\vec{p}}(J^a_n|\vec{r}\rangle)\varphi_{\vec{s}}(|\vec{p}\rangle)+\varphi_{\vec{p}}(|\vec{r}\rangle)|\vec{p}\rangle(J^a_{-n}\varphi_{\vec{s}})\right)\nonumber\\
		& = \sum_{\{\vec{p}\}}\left(\varphi_{\vec{p}}(J^a_n|\vec{r}\rangle)\varphi_{\vec{s}}(|\vec{p}\rangle)+\varphi_{\vec{p}}(|\vec{r}\rangle)(J^a_{-n}\varphi_{\vec{s}})(|\vec{p}\rangle)\right)\nonumber\\
		& = \sum_{\{\vec{p}\}}\left(\varphi_{\vec{p}}(J^a_n|\vec{r}\rangle)\varphi_{\vec{s}}(|\vec{p}\rangle)-\varphi_{\vec{p}}(|\vec{r}\rangle)\varphi_{\vec{s}}(J^a_{n}|\vec{p}\rangle)\right)\nonumber\\
		& = 0.\nonumber
	\end{align}
	The proposition has now been shown to hold for $i=j$, $s=t=0$. The modifications needed for ${D}^{-,k}_j$ are trivial and will not be spelled out. 

Allowing $i\neq j$ but still keeping $s=t=0$, the conditions \eqref{eq:coinvcond} still imply a relation like \eqref{eq:coinvhw}, with the only difference that $\varphi_0$ is replaced by some other element at Virasoro level $0$	. However, restricting \eqref{eq:coinvcond} to $n=0$ and applying to the ansatz \eqref{eq:coinvansatz} removes any Virasoro level $0$ elements. This follows from Proposition \ref{prop:sl2nonsemi}.
 ~Thus the proof is done for $s=t=0$.
	
Next, consider again $i=j$ but allow $s=t\neq 0$ The only step in the proof that does not immediately hold for ${D}^{\pm,k,s}_j$, $s\neq 0$, is in showing the relation \eqref{eq:coinvhw} where the existence of a PBW basis was used. On ${D}^{\pm,k,s}_j$ one can instead use the basis given by $\vartheta_s$ acting on a PBW basis to arrive at \eqref{eq:coinvhw}, where $|0\rangle$ then represents the cyclic vector from which the whole module is generated.

Allowing $i\neq j$ but keeping $s=t$ it is straightforward to generalize the line of proof from $s=t=0$. The remaining case of $s\neq t$ will be treated in appendix \ref{sec:contproof}.
\end{proof}

\begin{rem}~
\begin{itemize}
	\item[(i)] Under the assumption  \eqref{eq:confblockHom}, the theorem implies
	\[\Hom_{\mathcal{C}^k}\left(D^{\pm,k,s}_i\fuse D^{\mp,k,-t}_j,\one\right)\cong\delta_{i,j}\delta_{s,t}\C.\]
	\item[(ii)] With present knowledge there is no way to determine $\Hom_{\mathcal{C}^k}\left(\one,D^{\pm,k,s}_i\fuse D^{\mp,k,-t}_j\right)$, but considering Proposition \ref{prop:sl2nonsemi} it is hard to imagine any other result than $\{0\}$ independently of $i,j,s,t$. It then follows that $\mathcal{C}^k$ is nonsemisimple, and furthermore cannot possess a rigid structure.
\end{itemize}
\end{rem}

\section{Extended chiral symmetry and simple currents}\label{sec:extchir}
As stated in the introduction, it is natural to expect invertible objects, simple currents, in the chiral category, such that $\mathrm{Pic}(\mathcal{C}^k)$ contains at least an infinite cyclic subgroup corresponding to the center of $SL(2,\R)$. 
Comparing with rational WZW models it is furthermore natural to expect that the simple objects $D^{k,s}_0$ are invertible in $\mathcal{C}^k$, satisfying $D^{k,s}_0\fuse V^k\cong V^{k,s}$ for any object $V$. This would then lead to an infinite cyclic subgroup in $\mathrm{Pic}(\mathcal{C}^k)$ isomorphic to $\langle\vartheta_1\rangle$. Note that, of course, this implies that $D^k_0$ itself has trivial fusion with any object and therefore takes the role, as expected, of the tensor unit $\one$.\\

Let us first find gather evidence that $D^k_0$ is a tensor unit in $\mathcal{C}^k$. The strongest indication of this property would be that $\Hom_{\mathcal{C}^k}(D^k_0\fuse W,\one)\cong\Hom_{\mathcal{C}^k}(W,\one)$ for any object $W$. Thus the following theorem.
\begin{thm}\label{thm:unit}
	For any $n$ tuple $V^k_{\vec{\lambda}}$ of $\ens$--modules at level $k$ and any $n+1$ points $(q,\vec{p})$ on $\C\mathbb{P}^1$,
	\[\mathcal{H}\left((q,\vec{p}),(D^k_0,V^k_{\vec{\lambda}})\right)\cong\mathcal{H}(\vec{p},V^k_{\vec{\lambda}}).\]
\end{thm}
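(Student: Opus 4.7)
The statement is a ``propagation of vacua'' type result familiar from rational WZW models (cf.\ Proposition~2.2.3 of \cite{TUY}), and the plan is to adapt the standard argument to the present non-rational setting. Only two properties of $D^k_0$ are really needed: that the cyclic vector $|0\rangle$ is annihilated by every nonnegative mode of $\mathcal{U}(\widehat{\mathfrak{sl}}_2)$ (both $\mathfrak{r}_+$ and the horizontal $\mathfrak{sl}_2$ act as zero on $|0\rangle$, since $D^k_0$ is prolonged from the trivial $\mathfrak{sl}_2$--module), and that $D^k_0$ is a free $\mathcal{U}(\mathfrak{r}_-)$--module generated by $|0\rangle$. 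The plan is to construct mutually inverse linear maps
\[\psi:\mathcal{H}\bigl((q,\vec{p}),(D^k_0,V^k_{\vec{\lambda}})\bigr)\longrightarrow\mathcal{H}(\vec{p},V^k_{\vec{\lambda}}),\qquad\phi:\mathcal{H}(\vec{p},V^k_{\vec{\lambda}})\longrightarrow\mathcal{H}\bigl((q,\vec{p}),(D^k_0,V^k_{\vec{\lambda}})\bigr).\]

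First I would define the restriction $\psi(\tilde\beta)(v):=\tilde\beta(|0\rangle\otimes v)$. Its image lies in the coinvariants because any $X\otimes f\in\mathfrak{sl}_2(\C\mathbb{P}^1-\vec{p})$, viewed inside the larger algebra $\mathfrak{sl}_2(\C\mathbb{P}^1-q-\vec{p})$, is regular at $q$ and thus has Laurent expansion with only nonnegative powers there; its action on $|0\rangle$ consequently vanishes, so the coinvariance of $\tilde\beta$ under the larger algebra restricts to the coinvariance of $\psi(\tilde\beta)$ under the smaller one.

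For the inverse $\phi$ I would set $\tilde\beta(|0\rangle\otimes v):=\beta(v)$ and extend to $D^k_0\otimes V^k_{\vec{\lambda}}$ by induction on the PBW depth of the first tensor factor. After placing $q$ at a finite point $z_q$ in a chosen global coordinate $z$ (the case $q=\infty$ is handled analogously by an M\"obius transformation), for each $n\geq 1$ and $a\in\{0,\pm\}$ the rational function $f_n:=(z-z_q)^{-n}\in\mathcal{O}_{\C\mathbb{P}^1-q-\vec{p}}$ has Laurent expansion exactly $z'^{-n}$ at $q$ in the local coordinate $z'=z-z_q$, so the action $(X^a\otimes f_n)|_q$ is the single mode $X^a_{-n}$ without corrections. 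Imposing coinvariance on $(X^a\otimes f_n)\cdot(\xi|0\rangle\otimes v)$ for any PBW monomial $\xi\in\mathcal{U}(\mathfrak{r}_-)$ thus gives the clean recursion
\[\tilde\beta\bigl(X^a_{-n}\xi|0\rangle\otimes v\bigr)=-\tilde\beta\bigl(\xi|0\rangle\otimes(X^a\otimes f_n)|_{\vec{p}}\cdot v\bigr),\]
whose right-hand side involves $\xi|0\rangle$ of strictly smaller PBW depth. Peeling off the leading generator of each PBW monomial in this way, with base $\tilde\beta(|0\rangle\otimes v)=\beta(v)$, uniquely determines $\tilde\beta$ on the PBW basis.

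It remains to verify, first, that the construction is well-defined and, second and more importantly, that $\tilde\beta$ actually satisfies the coinvariance condition for every $X\otimes g\in\mathfrak{sl}_2(\C\mathbb{P}^1-q-\vec{p})$. Well-definedness reduces to the observation that any arising ambiguity sits in $\mathcal{O}_{\C\mathbb{P}^1-\vec{p}}$ and is absorbed by the coinvariance of $\beta$ under the smaller algebra. The main obstacle is the full coinvariance: I would expand an arbitrary $g$ at $q$ into its (finite) principal part, a $\C$--linear combination of the $f_n$'s lying in $\mathcal{O}_{\C\mathbb{P}^1-q}$, plus a remainder regular at $q$ and hence in $\mathcal{O}_{\C\mathbb{P}^1-\vec{p}}$; the contribution of the principal part on an arbitrary monomial $\xi|0\rangle$ is then controlled by commuting $X^a_{-n}$ past the factors in $\xi$ using the affine relations, which produce only terms of strictly smaller depth to which the recursion applies, while the contribution of the regular remainder is absorbed by the coinvariance of $\beta$. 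Finally, $\psi\circ\phi=\mathrm{id}$ is immediate from the construction, and $\phi\circ\psi=\mathrm{id}$ follows because any coinvariant $\tilde\beta$ satisfies the same recursion as $\phi(\psi(\tilde\beta))$ and agrees with it on $|0\rangle\otimes V^k_{\vec{\lambda}}$, forcing equality on all of $D^k_0\otimes V^k_{\vec{\lambda}}$ by the freeness of $D^k_0$ over $\mathcal{U}(\mathfrak{r}_-)$.
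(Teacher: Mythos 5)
Your argument is correct and is essentially the paper's proof: the paper routes the statement through Lemma~\ref{lem:Beauville}, whose proof rests on exactly the two ingredients you isolate (the decomposition of $\mathcal{O}(\C\mathbb{P}^1-q-\vec{p})$ into functions regular outside $\vec{p}$ plus the principal part at $q$, and the freeness of the prolongation module over $\mathcal{U}(\mathfrak{r}_-)$), and then specializes to $V_\lambda=D_0\cong\C$ with trivial action. Your restriction/extension pair $\psi,\phi$ is precisely the isomorphism $p$ of that lemma, unpacked into an explicit PBW recursion for the vacuum module.
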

The proof of this theorem actually follows the proof in the rational case. To this end one may use a variation of Proposition 2.3 in \cite{Be} (see also \cite{BK}):
\begin{lem}(Beauville)\label{lem:Beauville}
	Let $\mathfrak{g}$ be a simple Lie algebra, $V^k_\lambda$ a $\mathcal{U}(\widehat{\mathfrak{g}})$--module of level $k$,  and $V_\lambda$ a $\mathcal{U}(\mathfrak{g})$--module such that $V^k_\lambda=\mathcal{U}(\widehat{\mathfrak{g}}^-)V_\lambda$ (where $\widehat{\mathfrak{g}}^+$ annihilates $V_\lambda$). Furthermore, let $V^k_{\vec{\mu}}=V_{\mu_1}\oti V_{\mu_2}\oti\cdots\oti V_{\mu_n}$ be a tensor product of arbitrary $\mathcal{U}(\widehat{\mathfrak{g}})$--modules, and let $q$, and $\vec{p}=(p_1,\ldots,p_n)$ denote $n+1$ points on a smooth complex curve $C$. Then
	\begin{equation}
		\left(V^k_\lambda\oti V^k_{\vec{\mu}}\right)^*_{\mathfrak{g}(C-q-\vec{p})}\cong\left(V_\lambda\oti V^k_{\vec{\mu}}\right)^*_{\mathfrak{g}(C-\vec{p})},
	\end{equation}
	where, on the right hand side, $\mathfrak{g}(C-\vec{p})$ acts on $V_\lambda$ by evaluation at $q$.
\end{lem}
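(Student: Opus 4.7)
The plan is to construct a natural restriction map from left to right and then invert it using the Mittag--Leffler/strong residue construction that trades the action of negative modes at $q$ for actions at $\vec p$. The restriction map is
\[
\Phi\colon\bigl(V^k_\lambda\oti V^k_{\vec\mu}\bigr)^*_{\mathfrak{g}(C-q-\vec p)}\longrightarrow\bigl(V_\lambda\oti V^k_{\vec\mu}\bigr)^*_{\mathfrak{g}(C-\vec p)},\quad\beta\mapsto\beta|_{V_\lambda\oti V^k_{\vec\mu}}.
\]
It is well defined since $\mathfrak{g}(C-\vec p)\subset\mathfrak{g}(C-q-\vec p)$, and for $X\oti f\in\mathfrak{g}(C-\vec p)$ the Laurent expansion of $f$ at $q$ has no negative part; hence on $V_\lambda$ the action of $X\oti f$ collapses to $f(q)X$, because $\widehat{\mathfrak{g}}^+V_\lambda=0$ by hypothesis, and $f(q)X$ is precisely the evaluation-at-$q$ action figuring on the right hand side.

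For injectivity I would use the prolongation structure $V^k_\lambda=\mathcal{U}(\widehat{\mathfrak{g}}^-)V_\lambda$ and induct on Poincar\'e--Birkhoff--Witt length. Given a monomial $(X_1\oti t^{-m_1})\cdots(X_r\oti t^{-m_r})v_0$ with $v_0\in V_\lambda$ and a coinvariant $\beta$ vanishing on $V_\lambda\oti V^k_{\vec\mu}$, pick $f\in\mathcal{O}(C-q-\vec p)$ whose principal part at $q$ equals $t^{-m_1}$; such $f$ exist because arbitrary-order poles at the auxiliary points $\vec p$ are allowed. Then $X_1\oti f\in\mathfrak{g}(C-q-\vec p)$, and the coinvariance relation applied to this element expresses $\beta$ on the length-$r$ monomial as a combination of $\beta$ on monomials of PBW length at most $r-1$ at $q$ (produced by the regular Laurent part of $f$ and by commutators with the remaining $X_i\oti t^{-m_i}$, both of which strictly lower the PBW length) plus $\beta$ applied with an extra action $(X_1\oti f)|_{\vec p}$ on $V^k_{\vec\mu}$. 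The inductive hypothesis then forces $\beta$ to vanish.

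Surjectivity and the inverse are built by reading this same recursion as a definition, starting from $\beta(v_0\oti w)\defas\beta'(v_0\oti w)$ and peeling off one negative mode at a time. The main obstacle, and where essentially all the content lives, is \emph{well-definedness}, which has two sources of ambiguity. First, a vector of $V^k_\lambda$ is expressed as a sum of PBW monomials only modulo the commutation relations of $\widehat{\mathfrak{g}}$; compatibility here reduces to a routine check that the central extension acts by the same scalar $k$ in both modules and that brackets strictly decrease the PBW length, so any rewriting can be absorbed into the inductive step. Second, the Mittag--Leffler lift $f$ of $t^{-m}$ is unique only modulo $\mathcal{O}(C-\vec p)$, i.e.\ modulo a function $g$ regular at $q$. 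The resulting ambiguity in $\beta$ takes the form $\beta'(g(q)Xv_0\oti w)+\beta'\bigl(v_0\oti(X\oti g)|_{\vec p}w\bigr)$, which vanishes precisely because the evaluation action of $X\oti g$ on $V_\lambda$ is $g(q)X$ and $\beta'$ is $\mathfrak{g}(C-\vec p)$-coinvariant. Once well-definedness is secured, the facts that the resulting $\beta$ is annihilated by all of $\mathfrak{g}(C-q-\vec p)$ and that $\Phi\beta=\beta'$ follow by unwinding the construction.
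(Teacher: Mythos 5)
Your argument is correct and is essentially the paper's own proof (Beauville's propagation argument) unwound into an explicit induction: your Mittag--Leffler lift of $t^{-m}$ is precisely the Riemann--Roch decomposition $\mathfrak{g}(C-q-\vec{p})\cong\mathfrak{g}(C-\vec{p})\oplus\widehat{\mathfrak{g}}^-$ that the paper uses, and your recursion peeling off negative modes at $q$ in exchange for evaluation actions at $\vec{p}$ is the paper's two-step passage to $\widehat{\mathfrak{g}}^-$--invariants followed by $\mathfrak{g}(C-\vec{p})$--invariants. If anything you are more explicit than the paper about the delicate point, namely well-definedness of the inverse (the paper simply asserts that a functional on $V_\lambda\oti V^k_{\vec{\mu}}$ extends uniquely by $\widehat{\mathfrak{g}}^-$--invariance), and your identification and resolution of the two sources of ambiguity is the right one.
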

\begin{proof}
Also the proof follows \cite{Be}, although with some modifications. By the Riemann--Roch theorem there exists a function $z$ on $C$, vanishing at $q$, such that any $f\in\mathcal{O}(C-q-\vec{p})$ can be written uniquely as $f=\tilde{f}+\sum_{i=1}^Nf_iz^{-1}$, where $\tilde{f}\in\mathcal{O}(C-\vec{p})$, $f_i\in\mathcal{O}_C$, and it follows that $\mathcal{O}(C-q-\vec{p})\cong\mathcal{O}(C-\vec{p})\oplus_{i\in\N}\C z^{-i}$. This implies $\mathfrak{g}(C-q-\vec{p})\cong\mathfrak{g}(C-\vec{p})\oplus\widehat{\mathfrak{g}}^-$. Note that under this isomorphism, the first summand acts on $V^k_\lambda$ by evaulation, while the second summand acts on $V^k_\lambda$ in the natural way and on $V^k_{\vec{\mu}}$ by evaluation.

Let $i^*:\left(V^k_\lambda\oti V^k_{\vec{\mu}}\right)^*\twoheadrightarrow\left(V_\lambda\oti V^k_{\vec{\mu}}\right)^*$ be the surjection transpose to the inclusion $i:V_\lambda\oti V^k_{\vec{\mu}}\hookrightarrow V^k_\lambda\oti V^k_{\vec{\mu}}$.
Then, the restriction $p$ of $i^*$ to the subspace of $\widehat{\mathfrak{g}}^-$--invariants is a linear isomorphism:
\[p:\left(V^k_\lambda\oti V^k_{\vec{\mu}}\right)^*_{\widehat{\mathfrak{g}}^-}\stackrel{\sim}{\rightarrow}\left(V_\lambda\oti V^k_{\vec{\mu}}\right)^*.\]
To see this, pick an arbitrary $\beta\in\left(V^k_\lambda\oti V^k_{\vec{\mu}}\right)^*_{\widehat{\mathfrak{g}}^-}$ and note that, for any $u\oti v\in V^k_\lambda\oti V^k_{\vec{\mu}}$,  the $\widehat{\mathfrak{g}}^-$--invariance implies $\beta(u\oti v)=\beta(u_0\oti v')$ for some $u_0\oti v'\in V_\lambda\oti V^k_{\vec{\mu}}$. It immediately follows that $\mathrm{Ker}(p)=\{0\}$, and $p$ is surjective since any element of $\left(V_\lambda\oti V^k_{\vec{\mu}}\right)^*$ extends uniquely to $\left(V^k_\lambda\oti V^k_{\vec{\mu}}\right)^*_{\widehat{\mathfrak{g}}^-}$ by $\widehat{\mathfrak{g}}^-$--invariance.

The final step of the proof amounts to showing that $p$ is $\mathfrak{g}(C-\vec{p})$--equivariant, which follows if the inclusion $i$ is equivariant. Note that $X\oti f\in\mathfrak{g}(C-\vec{p})$ acts on $V^k_\lambda\oti V^k_{\vec{\mu}}$ as
\begin{eqnarray}
	\big(X\oti f\big)(u\oti v) & = & \Big(\big(X\oti f\big) u\Big)\oti v+ u\oti\Big(\big(X\oti f\big) v\Big)\nonumber\\
	& = & \big(f(q)Xu+ X^+_fu\big)\oti v+u\oti\Big(\big(X\oti f\big) v\Big),\nonumber	
\end{eqnarray}
for some $X^+_f\in\widehat{\mathfrak{g}}^+$. Since $\widehat{\mathfrak{g}}^+$ annihilates $V_\lambda$, it follows that $i\circ (X\oti f)=(X\oti f)\circ i$, so $i$ is $\mathfrak{g}(C-\vec{p})$--equivariant. The transpose of $i$, and in particular its restriction $p$, is then automatically equivariant. Since the isomorphism $p$ is equivariant it restricts further to an isomorphism of $\mathfrak{g}(C-\vec{p})$--invariant subspaces, which completes the proof.
\end{proof}
The proof of Theorem \ref{thm:unit} is now trivial.
\begin{proof}[Proof of Theorem \ref{thm:unit}] According to Lemma \ref{lem:Beauville}, \[\mathcal{H}((q,\vec{p}),(D^k_0,V^k_{\vec{\lambda}}))\cong\left(D_0\oti V^k_{\lambda_1}\oti\cdots\oti V^k_{\lambda_n}\right)^*_{\mathfrak{sl}_2(\C\mathbb{P}^1-\vec{p})},\] but since $D_0\cong\C$ with the trivial action, the latter space is isomorphic to 
\[\left(V^k_{\lambda_1}\oti\cdots\oti V^k_{\lambda_n}\right)^*_{\mathfrak{sl}_2(\C\mathbb{P}^1-\vec{p})} \cong \mathcal{H}(\vec{p},V_{\vec{\lambda}}).\]
\end{proof}
\begin{rem}~
	\begin{itemize}
		\item[(i)] Note that Theorem \ref{thm:unit} requires $n>0$, however it is also trivial to show explicitly that, on $\C\mathbb{P}^1$, $\mathcal{H}(p,D^{\pm,k,s}_j)=\{0\}$ (for $j$ in the unitary range) and $\mathcal{H}(p,D^{k,s}_0)\cong \delta_{s,0}\C$, thus implying $\Hom_{\mathcal{C}^k}(D^{\pm,k,s}_j,\one)=\{0\}$, $\Hom_{\mathcal{C}^k}(D^{k,s}_0,\one)\cong\delta_{s,0}\C$.
		\item[(ii)] In a semisimple rigid category Theorem \ref{thm:unit} implies that $D^k_0$ is isomorphic to the tensor unit, and $\Hom_{\mathcal{C}^k}(D^k_0,\one)\cong\C$ then implies that $\one$ is absolutely simple. Since the category $\mathcal{C}^k$ cannot be assumed semisimple or rigid, however, our results do not allow us to draw these conclusions.
	\end{itemize}
\end{rem}
We would have a strong indication that the modules $D^{k,s}_0$ are invertible with fusion $D^{k,s}_0\fuse V^k\cong V^{k,s}$ if it could be established that $\Hom_{\mathcal{C}^k}(D^{k,s}_0\fuse V^k\fuse W,\one)\cong\Hom_{\mathcal{C}^k}(V^{k,s}\fuse W^s,\one)$ for any simple object $V^k$ and any object $W$. What we have been able to show in this direction is the following theorem.
\begin{thm}\label{thm:sfconfblocks}
	Let $V^k_{\vec{\lambda}}$ be a, possibly empty, $n$ tuple of prolongation modules of $\ena$, and let $X$ and $Y$ be any $\ena$--modules. Then, on $\C\mathbb{P}^1$,
	\begin{equation}
		\mathcal{H}\left((0,\infty,\vec{p}),(X,Y,V^k_{\vec{\lambda}})\right)\cong\mathcal{H}\left((0,\infty,\vec{p}),(X^s,Y^{-s},V^k_{\vec{\lambda}})\right),
	\end{equation}
	for any $s\in\Z$.
\end{thm}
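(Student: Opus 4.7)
The plan is to reduce the problem to coinvariants over the Laurent polynomial loop algebra $\mathfrak{sl}_2[z,z^{-1}]=\mathfrak{sl}_2(\C\mathbb{P}^1-0-\infty)$ by iterating Lemma~\ref{lem:Beauville} at the prolongation points, and then to exhibit an explicit twist of the resulting evaluation modules that converts the spectrally flowed coinvariance conditions at $0$ and $\infty$ into the unflowed ones.

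First I apply Lemma~\ref{lem:Beauville} successively at each $p_i$. This is admissible because each $V^k_{\lambda_i}$ is of the prolongation form $\mathcal{U}(\mathfrak{r}_-)V_{\lambda_i}$ required by the lemma, and after the first step the already-reduced factors enter only through the vector-space splitting of $\mathfrak{sl}_2$--valued functions that underlies the proof of the lemma, which goes through with evaluation modules treated as ``level zero''. The output is an isomorphism
\[\mathcal{H}\bigl((0,\infty,\vec{p}),(X,Y,V^k_{\vec\lambda})\bigr)\;\cong\;\bigl(X\otimes Y\otimes V_{\vec\lambda}\bigr)^{\!*}_{\mathfrak{sl}_2[z,z^{-1}]},\]
where $V_{\vec\lambda}=V_{\lambda_1}\otimes\cdots\otimes V_{\lambda_n}$ and $\mathfrak{sl}_2[z,z^{-1}]$ acts at $0$, $\infty$ by Laurent expansion in $z$, $w=1/z$ (through the full $\ens$--module structure on $X$, $Y$), and on each $V_{\lambda_i}$ by evaluation $X\otimes f\mapsto f(p_i)X$. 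The same reduction applied to the flowed side produces an isomorphic statement on the very same underlying dual space, but with the actions at $0$ and $\infty$ composed with $\vartheta_s$ and $\vartheta_{-s}$ respectively, while the evaluation action at $\vec p$ remains unchanged.

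Next I construct the isomorphism between the two coinvariant spaces. For each $i$ pick a branch of $p_i^{1/2}$ and define $T_i\in\End(V_{\lambda_i})$ to multiply a weight vector of $J^0_0$--eigenvalue $\mu$ by $p_i^{-s\mu/2}$; the $\mathfrak{sl}_2$ commutators force
\[T_i J^0_0=J^0_0T_i,\qquad T_i J^\pm_0=p_i^{\mp s}J^\pm_0 T_i.\]
Set $T=T_1\otimes\cdots\otimes T_n$ and define $\Psi(\beta)(u\otimes v\otimes\vec w)\defas\beta(u\otimes v\otimes T\vec w)$. This is a linear automorphism of $(X\otimes Y\otimes V_{\vec\lambda})^*$, so it is enough to check that $\Psi$ carries unflowed coinvariants to flowed ones, the reverse being handled by $T^{-1}$.

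The verification is a direct computation on each generator $X^a\otimes z^n$. For $J^\pm\otimes z^n$, expanding $\Psi$ and moving $T$ past the evaluation action via the intertwining relation above converts the flowed coinvariance of $\Psi(\beta)$ at mode $n$ into the unflowed coinvariance of $\beta$ applied to $u\otimes v\otimes T\vec w$ with test element $J^\pm\otimes z^{n\mp s}$; as $n$ runs over $\Z$ so does $n\mp s$, so all constraints are matched. For $J^0\otimes z^n$ the central shifts $-sk\delta_n$ at $0$ and $+sk\delta_n$ at $\infty$ from Definition~\ref{def:sflow} both produce the term $\Psi(\beta)(u\otimes v\otimes\vec w)$ with opposite signs and therefore cancel, leaving the unflowed $J^0$ condition (immediate from $[T_i,J^0_0]=0$). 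The step I expect to require the most care is the iterated Beauville reduction: after the first round one of the remaining modules is an $\ens$--evaluation module rather than an $\ena$--module, and one has to verify that the residue-theorem cancellation of central terms is compatible with factors carrying a ``level zero'' evaluation action.
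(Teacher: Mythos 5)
Your proposal is correct and follows essentially the same route as the paper: iterate Lemma \ref{lem:Beauville} to reduce to $\mathfrak{sl}_2[z,z^{-1}]$--coinvariants with evaluation modules at $\vec p$, observe that the central shifts at $0$ and $\infty$ cancel, and absorb the remaining mode shift $z_i^m\mapsto z_i^{m\mp s}$ by twisting each $V_{\lambda_i}$ with the inner automorphism $J^0\mapsto J^0$, $J^\pm\mapsto z_i^{\mp s}J^\pm$. Your operator $T_i$ is exactly the intertwiner exhibiting the paper's asserted equivalence between $V_{\lambda_i}$ and its twist, so your version is, if anything, slightly more explicit at that step.
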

\begin{proof}
Use first Lemma \ref{lem:Beauville} repeatedly to get an isomorphism
\[\left(X\oti Y\oti V^k_{\lambda_1}\oti\cdots\oti V^k_{\lambda_n}\right)^*_{\mathfrak{sl}_2(\C\mathbb{P}^1-0-\infty-\vec{p})}\cong\left(X\oti Y\oti V_{\lambda_1}\oti\cdots\oti V_{\lambda_n}\right)^*_{\mathfrak{sl}_2(\C\mathbb{P}^1-0-\infty)}.\]
Fix the standard local coordinates $z$ around $0$ and $w=1/z$ around $\infty$, it then follows that $\mathfrak{sl}_2(\C\mathbb{P}^1-0-\infty)\cong\mathfrak{sl}_2[z,z^{-1}]$. If $z_i$ is the value of $z$ at the point $p_i$, the coinvariance condition reads:
\[\beta\circ\left(J^a_m\oti\mathbf{1}+\mathbf{1}\oti J^a_{-m}\oti\mathbf{1}+ \mathbf{1}\oti z_1^mJ^a\oti\mathbf{1}+\ldots +\mathbf{1}\oti z_n^mJ^a\right)=0,\]
for $a=0,\pm$ and all $m\in\Z$.
Since, as vector spaces,
\[X^s\oti Y^{-s}\oti V_{\lambda_1}\oti\cdots\oti V_{\lambda_n} = X\oti Y\oti V_{\lambda_1}\oti\cdots\oti V_{\lambda_n}, \]
the coinvariance condition on the dual of the left hand side is expressed using the representation morphism of the right hand side as invariance under the action of
\begin{align}
	(J^0_m-sk\delta_{m,0})\oti\mathbf{1}+\mathbf{1}\oti (J^0_m+sk\delta_{m,0})\oti\mathbf{1}+\mathbf{1}\oti z_1^mJ^0\oti\mathbf{1}+\ldots +\mathbf{1}\oti z_n^mJ^0\nonumber\\
	J^\pm_{m\mp s}\oti\mathbf{1}+\mathbf{1}\oti J^\pm_{-(m\mp s)}\oti\mathbf{1} + \mathbf{1}\oti z_1^mJ^\pm\oti\mathbf{1}+\ldots +\mathbf{1}\oti z_n^mJ^\pm.\nonumber
\end{align}
Were it not for the evaluation terms in the second line, the equivalence of $(X^s\oti Y^{-s}\oti V_{\lambda_1}\oti\cdots\oti V_{\lambda_n})^*_{\mathfrak{sl}_2(\C\mathbb{P}^1-0-\infty)}$ and $(X\oti Y\oti V_{\lambda_1}\oti\cdots\oti V_{\lambda_n})^*_{\mathfrak{sl}_2(\C\mathbb{P}^1-0-\infty)}$ would be immediate. The mismatch due to these terms is, however, easily cured. Note that the map $J^0\mapsto J^0$, $J^\pm\mapsto z^{\pm 1}J^{\pm}$ is an inner automorphism of $\mathfrak{sl}_2$ for any $z\in\C^\times$. Thus, replacing each $V_{\lambda_i}$ with the one obtained by  precomposing the representation morphism with the inner automorphism given by $z_i^{-s}$ gives precisely the required coinvariance condition. Since these modules are equivalent to the $V_{\lambda_i}$, the proof is complete.
\end{proof}
Theorems \ref{thm:unit} and \ref{thm:sfconfblocks} immediately imply the following corollary.
\begin{cor}
	\[\mathcal{H}\left( (0,\infty,\vec{p}),(D^{k,s}_0,V^k,V^k_{\vec{\lambda}})\right)\cong\mathcal{H}\left((\infty,\vec{p}),(V^{k,s},V^k_{\vec{\lambda}})\right),\]
	for any $\ena$--module $V^k$.
\end{cor}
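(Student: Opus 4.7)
The plan is simply to concatenate the two preceding theorems, with spectral flow being used to shuffle the ``flow index'' from one insertion to another. Concretely, I would start from the left-hand side
\[\mathcal{H}\left( (0,\infty,\vec{p}),(D^{k,s}_0,V^k,V^k_{\vec{\lambda}})\right)\]
and apply Theorem \ref{thm:sfconfblocks} with the flow parameter chosen to be $-s$ rather than $s$, taking $X=D^{k,s}_0$ and $Y=V^k$. This produces an isomorphism with
\[\mathcal{H}\left( (0,\infty,\vec{p}),((D^{k,s}_0)^{-s},(V^k)^{s},V^k_{\vec{\lambda}})\right).\]

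The next step is to observe that the composition law of spectral flow on modules, $(V^{k,a})^b = V^{k,a+b}$, which is an immediate consequence of $\vartheta_a\circ\vartheta_b = \vartheta_{a+b}$ noted earlier, gives $(D^{k,s}_0)^{-s}\cong D^{k,0}_0 = D^k_0$ and $(V^k)^{s} = V^{k,s}$. So the expression becomes
\[\mathcal{H}\left( (0,\infty,\vec{p}),(D^k_0,V^{k,s},V^k_{\vec{\lambda}})\right).\]

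Finally, I would invoke Theorem \ref{thm:unit} with the point $q = 0$ and the $(n{+}1)$-tuple of remaining insertions $(V^{k,s}, V^k_{\vec{\lambda}})$ at $(\infty,\vec{p})$, which removes the $D^k_0$ insertion and yields $\mathcal{H}((\infty,\vec{p}),(V^{k,s},V^k_{\vec{\lambda}}))$, as desired.

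I do not foresee any genuine obstacle since both input results are at hand; the only verification worth spelling out is the group property of the action of $\vartheta_s$ on modules, which ensures that the spectral flow truly cancels on the $D^{k,s}_0$ factor. One minor bookkeeping point is that Theorem \ref{thm:unit} is stated with an arbitrary marked point $q$, so the isomorphism is manifestly compatible with placing the vacuum insertion at $0$; no further choice of local coordinate needs to be argued for, by Proposition \ref{prop:confblockprops}(iii) as assumed to lift to $\mathcal{C}^k$.
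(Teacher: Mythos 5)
Your proposal is correct and is precisely the chain of reasoning the paper intends: the paper offers no written proof beyond stating that Theorems \ref{thm:unit} and \ref{thm:sfconfblocks} ``immediately imply'' the corollary, and your steps (apply Theorem \ref{thm:sfconfblocks} with flow parameter $-s$ to move the flow from $D^{k,s}_0$ onto $V^k$ via $(V^{k,a})^b=V^{k,a+b}$, then strip off the resulting $D^k_0$ insertion with Theorem \ref{thm:unit}) are exactly that implication spelled out. The hypotheses match as well, since Theorem \ref{thm:sfconfblocks} allows the insertions at $0$ and $\infty$ to be arbitrary while requiring only the remaining tuple to consist of prolongation modules, and the Beauville lemma underlying Theorem \ref{thm:unit} places no such restriction on the tuple $(V^{k,s},V^k_{\vec{\lambda}})$.
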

	
\begin{rem}
	Again, under assumption \eqref{eq:confblockHom} the results above imply
		\[\Hom_{\mathcal{C}^k}\left(D^{k,s}_0\fuse V^k\fuse W,\one\right)\cong\Hom_{\mathcal{C}^k}\left(V^{k,s}\fuse W,\one\right)\] for any object $V^k$ and any object $W$ of the form $V^k_{\lambda_1}\fuse\cdots\fuse V^k_{\lambda_n}$ where each $V^k_{\lambda_i}$ is a prolongation module. The difficulty in showing the same result for an arbitrary type of module $W$ appears to be a technical one, and we expect that the stronger result indeed holds. This is, as already mentioned, with the knowledge at hand as strong an indication of $D^{k,s}_0$ being invertible as can possibly obtained.
\end{rem}

~\\
Assume, motivated by the corollary above, that $J\defas D^{k,1}_0$ is an invertible object with infinite order in $\mathcal{C}^k$ (it's inverse being $D^{k,-1}_0$), satisfying
\begin{align}
	J^{\fuse s}\fuse D^{\pm,k,s'}_j &\cong D^{\pm,k,s+s'}_j.
\end{align}
Abusing the term ``primary field'' somewhat (the modules $D^{k,s}_0$ are not highest or lowest weight \ena--modules for $s\neq 0,\pm 1$), a primary field corresponding to the invertible object $J^{\fuse s}$ will have conformal weight
\begin{equation}
	\Delta_{J^s} = \frac{s^2k}{4}.
\end{equation}
It follows that for $s\in 2\Z$ the invertible object $J^{\fuse s}$ is an integer spin simple current, and the collection of even $s$ invertible objects correspond to an infinite cyclic subgroup $H=\langle[J^{\fuse 2}]\rangle\subset\langle [J]\rangle\subset\mathrm{Pic}(\mathcal{C}^k)$. The $H$ orbit of $[D^{\pm,k}_j]\in K_0(\mathcal{C}^k)$ is obviously
\begin{equation}
	H[D^{\pm,k}_j] = \sum_{s\in 2\Z} [D^{\pm,k,s}_j] = [X^{\pm,k}_j],
\end{equation}
where $X^{\pm,k}_j$ is the brick from \eqref{eq:brick2}. Consequently, the brick decomposition of the bulk spectra in \cite{HeHwRS} correspond in part to an integer spin simple current extension. A natural question is whether also the full brick $X^k_j=X^+_j+X^-_j$ is the result of a simple current extension. One possibility is that there exists an invertible object $L\in\mathrm{Ob}(\mathcal{C}^k)$ with the property
\begin{equation}
	L\fuse D^{\pm,k,s}_j\cong D^{\mp,k,s+2l}_j,
\end{equation}
for some $l\in\Z$. We have not been able to identify a candidate for such an invertible object. Note that the action of $J$ itself would shift $j$ to $k/2-j$, and so does not qualify.

\section{Discussion}\label{sec:disc}
The aim of this paper was to investigate the structure of the chiral category $\mathcal{C}^k$ underlying an operator formulation of the $SL(2,\R)$ WZW model at negative integer level $k$. To this end we have studied properties of a certain class of \ena--modules obtained by prolonging the discrete series of weight modules and the trivial module of \ens, and by acting on these modules with spectral flow automorphisms.
Briefly, the main conclusions are:
\begin{itemize}
	\item There is a natural notion of contragredient \ena--module which corresponds to conjugate primary fields.
	\item There are strong indications that there exists a nonzero morphism from $D^{\pm,k,s}_j\fuse(D^{\pm.k.s}_j)^\cont$ to the tensor unit $\one$, in contrast to what the unitary Clebsch--Gordan series of $sl(2,\R)$ would suggest.
	\item There are strong indications that there are \emph{no} nonzero morphisms from $\one$ to $D^{\pm,k,s}_j\fuse(D^{\pm.k.s}_j)^\cont$, thus implying that the chiral category $\mathcal{C}^k$ contains nonsimple but indecomposable objects, and that $\mathcal{C}^k$ canot be rigid.
	\item The simple objects $D^{k,s}_0$ obtained by acting with the spectral flow automorphisms on the vacuum module satisfy consistency requirements of being invertible objects with fusion
\[D^{k,s}_0\fuse D^{k,s'}_0\cong D^{k,s+s'}_0,\quad D^{k,s}_0\fuse  D^{\pm,k,s'}_j\cong D^{\pm,k,s+s'}_j,\]
and in particular, $D^k_0$ satisfies properties required from the tensor unit of $\mathcal{C}^k$.
Consequently we have found evidence that the bricks $X^{\pm,k}_j$ occurring in proposed bulk spectra on the single cover of $SL(2,\R)$ are obtained by an integer spin simple current extension.
\end{itemize}

It is worth emphasizing that the representation theory of $sl(2,\R)$ is very well understood, so the results presented in section \ref{sec:sl2discr} 
should likely be possible to extract from the literature. Only very basic tools are needed, however, making it easier to derive the results directly. The results displaying a significant difference between unitary and unitarizable modules are particularly interesting, at least for applications to the $SL(2,\R)$ WZW model, and deserve to be investigated further.
As an example it would be interesting to examine the full \emph{monoidal}  subcategory of $\mathrm{Rep}\left((\mathfrak{sl}_2,\widetilde{\omega})\right)$ generated from simple unitarizable weight modules. Results in this direction would provide nice hints to the monoidal structure of $\mathcal{C}^k$.\\

Many of the results in section \ref{sec:proldiscr} appear to be more or less well known in the community, as evidenced by statements in the literature. It would, however, be wrong to claim that the structure of these modules is completely understood, as indicated by the result in Proposition \ref{prop:scquotient}. As an example, a better understanding of  the decomposition of this class of modules in terms of various $\mathfrak{sl}_2$ subalgebras is likely to be useful in, for instance, an attempt to generalize the Kazhdan--Lusztig \cite{KaLu} construction of the fusion product to these modules.

It would in addition be useful to understand better the construction of these modules using free fields. Free field realizations have been employed in applications to $SL(2,\R)$ models for a number of years \cite{Giveon}, and even though the understanding of these have increased over time (see e.g. the recent \cite{Nun3,Nun4}), the understanding of the free field construction of the relevant modules is still incomplete. Since the modules discussed here are simple and Verma, or straightforwardly obtained from such modules, one would guess that the BRST construction of Bernard \& Felder \cite{BeFe} will degenerate or trivialize in the $SL(2,\R)$ case, leaving the role of screening currents less clear. Note, however, that Proposition \ref{prop:scquotient} implies that the modules $D^{k,\pm 1}_0$ are proper quotients of Verma modules. A construction of these using free fields will therefore involve a non trivial BRST complex.\\

The results of section \ref{sec:cont} are the main results of the present article, and have important consequences for the $SL(2,\R)$ WZW model. For instance, if the chiral category is nonsemisimple and nonrigid great care is called for in interpreting alleged fusion rules. As an elementary illustrative example, let $V^k_i$, $i\in\mathcal{I}$ denote a set of simple \ena--modules, and consider the two sets of quantities (``fusion rule coefficients'') $\mathcal{N}_{i,j}^{\phantom{i,j}l}$ and  $\widetilde{\mathcal{N}}_{i,j}^{\phantom{i,j}l}$ defined by
\begin{eqnarray*}
	\mathcal{N}_{i,j}^{\phantom{i,j}l} &\defas & \mathrm{dim}_\C\Hom_{\mathcal{C}^k}(V^k_i\fuse V^k_j,V^k_l)\\
	\widetilde{\mathcal{N}}_{i,j}^{\phantom{i,j}l} &\defas & \mathrm{dim}_\C\Hom_{\mathcal{C}^k}(V^k_l,V^k_i\fuse V^k_j).
\end{eqnarray*}
If $V^k_i\fuse V^k_j$ is isomorphic to a direct sum of simple objects, such a decomposition immediately implies $\mathcal{N}_{i,j}^{\phantom{i,j}l} = \widetilde{\mathcal{N}}_{i,j}^{\phantom{i,j}l}$. If $\mathcal{C}^k$ is rigid the duality implies $\mathcal{N}_{i,j}^{\phantom{i,j}l} = \widetilde{\mathcal{N}}_{\bar{\jmath},\bar{\imath}}^{\phantom{i,j}\bar{l}}$,  which furthermore coincides with
\begin{equation*}
	\mathrm{dim}_\C\Hom_{\mathcal{C}^k}(V^k_i\fuse V^k_j\fuse V^k_{\bar{l}},\one).
\end{equation*}
If, on the other hand, $\mathcal{C}^k$ is neither semisimple nor rigid, none of these relations hold in general.  These types of remarks will have crucial implications for factorization properties of conformal blocks.
 It is therefore of great importance to put these results on a more solid foundation. 
In order to achieve this it would be useful to have an independent construction of the category $\mathcal{C}^k$. In this direction it is worth investigating whether the modules discussed in here also appear as (perhaps generalized) modules of a conformal vertex algebra related to the untwisted affine Lie algebra $\widehat{\mathfrak{sl}}_2$ at negative integer level. Since there is by now a well developed theory of strongly graded (generalized) modules of (strongly graded) conformal vertex algebras \cite{HLZ}, such a construction would likely imply a number of immediate results. Another attractive feature of this approach, if successful, is the results concerning semi--rigidity of modules of certain vertex operator algebras \cite{Miyamoto}.
A  less direct approach would be to try and find a Kazhdan--Lusztig dual Hopf algebra (or, perhaps in this case, bialgebra), governing the structure of the category $\mathcal{C}^k$. Such algebras have recently been studied in some examples of nonrational CFT's, a class of logarithmic models \cite{FGST}, with interesting results. The conventional method to obtain a quantum group from a CFT goes via a free field realization and screening currents, again stressing the need for a better understanding of these in the $SL(2,\R)$ case.\\

Concerning the issue of invertible objects discussed in section \ref{sec:extchir}, a strengthening of Theorem \ref{thm:sfconfblocks} would be comforting, and, we believe, possible. Again, an alternative to simply strengthening the theorem would be to evaluate the corresponding $n$--point functions in a free field realization. 
The nature of a hypothetical extension, beyond the simple current extension discussed above, of the chiral algebra with simple modules corresponding to the bricks $X^k_j$, remains elusive. In the world of quantum groups one may consider quantized universal enveloping algebras of the Lie algebra $sl(2,\R)$. There is an extension of these algebras due to Faddeev \cite{Faddeev}, claimed to be relevant for Liouville theory, known as the modular double of $\mathcal{U}_q(sl(2,\R))$. Due to the relation between the $SL(2,\R)$ WZW model and Liouville theory one might hope that this extension is in some way related to the extension at hand. There seems, however, to be no independent reason to believe that this is the case.

\appendix

\section{Continuation of the proof of Theorem \ref{thm:2ptblocks}}\label{sec:contproof}
\begin{proof} ($s\neq t$) According to Theorem \ref{thm:sfconfblocks} it is enough to consider the case where $t=0$ and $s\neq 0$, and by Lemma \ref{lem:Beauville} there is an isomorphism
\[\left(D^{+,k,s}_i\oti D^{-,k}_j\right)^*_{\mathfrak{sl}_2(\C\mathbb{P}^1-0-\infty)}\cong\left(D^{+,k,s}_i\oti D^-_j\right)^*_{\mathfrak{sl}_2(\C\mathbb{P}^1-0)}.\]
The coinvariance condition on the right hand side amounts to requiring
\begin{align}
	\beta\circ(J^a_{-n}\oti\mathbf{1}) &= 0,\quad a=0,\pm,\ n\in\N\label{eq:ci1}\\
	\beta\circ(J^a_0\oti\mathbf{1}+\mathbf{1}\oti J^a) &= 0,\quad a=0,\pm.\label{eq:ci2}
\end{align}
Applying \eqref{eq:ci2} it follows that for any $u\oti |j,m\rangle$ such that $\beta(u\oti|j,m\rangle)\neq 0$, the latter expresion is proportional to $\beta(u'\oti|j,0\rangle)$ for some $u'$. The condition \eqref{eq:ci2} with $a=0$ then gives $2j=2i+sk-2p$ for some $p\in\Z$. The conditions \eqref{eq:ci1} imply that $u'$ must be of Virasoro level $\leq 0$. A comparison with \eqref{wts1} (or the weight diagram for $D^{+,k,s}_i$) then reveals that for $s\geq 1$, $(2i+sk-2p)\leq (2i+sk)<sk<0$, while for $s\leq -1$, $(2i+sk-2p)\geq (2i+sk)>0$. Since $k<2j<0$ the condition $2j=2i+sk-2p$ cannot be satisfied either for $s>0$ or for $s<0$. It follows that the space of coinvariants is $0$ whenever $s\neq t$, finishing the proof of Theorem \ref{thm:2ptblocks}.
\end{proof}

\paragraph{Acknowledgements} I am grateful to J\"urgen Fuchs for comments. This work is supported in parts by NSFC grant No.~10775067  as well as   Research Links Programme of Swedish Research Council under contract No.~348-2008-6049.

\end{document}